\pdfoutput=1
\documentclass[a4paper,UKenglish,cleveref, autoref, thm-restate]{lipics-v2021}


\usepackage[ruled]{algorithm}
\usepackage[noend]{algpseudocode}
\usepackage{stmaryrd}

\allowdisplaybreaks

\bibliographystyle{plainurl}

\title{Semi-Streaming Algorithms for Submodular Function Maximization Under $b$-Matching, Matroid, and Matchoid Constraints} 

\titlerunning{Semi-Streaming Submodular Function Maximization Under $b$-Matching Constraint} 

\author{Chien-Chung Huang}{CNRS, DI ENS, École normale supérieure, Université PSL, France}{chien-chung.huang@ens.fr}{}{}

\author{François Sellier}{École polytechnique, Institut Polytechnique de Paris, France}{francois.sellier@polytechnique.edu}{}{}

\authorrunning{C.-C. Huang and F. Sellier} 

\Copyright{Chien-Chung Huang and François Sellier} 

\ccsdesc[500]{Theory of computation~Streaming, sublinear and near linear time algorithms}
\ccsdesc[500]{Theory of computation~Approximation algorithms analysis} 

\keywords{Maximum weight matching,
submodular function maximization,
streaming,
matroid,
matchoid} 


\relatedversion{A preliminary version of this article appears in APPROX 2021~\cite{HS2021}.} 




\nolinenumbers 

\hideLIPIcs  

\EventEditors{Mary Wootters and Laura Sanit\`{a}}
\EventNoEds{2}
\EventLongTitle{Approximation, Randomization, and Combinatorial Optimization. Algorithms and Techniques (APPROX/RANDOM 2021)}
\EventShortTitle{\mbox{\scriptsize APPROX/RANDOM 2021}}
\EventAcronym{APPROX/RANDOM}
\EventYear{2021}
\EventDate{August 16--18, 2021}
\EventLocation{University of Washington, Seattle, Washington, US (Virtual Conference)}
\EventLogo{}
\SeriesVolume{207}
\ArticleNo{14}

\begin{document}
\maketitle

\begin{abstract}
	We consider the problem of maximizing a non-negative submodular function under the $b$-matching constraint, in the semi-streaming model. 
	When the function is linear, monotone, 
	and non-monotone, we obtain the approximation ratios of  $2+\varepsilon$, $3 + 2 \sqrt{2} \approx 5.828$, and $4 + 2 \sqrt{3} \approx 7.464$, respectively. 
	
	We also consider a generalized problem, where a $k$-uniform 
	hypergraph is given, along with an extra matroid or a $k'$-matchoid constraint imposed on the edges, with the same goal of finding a $b$-matching that maximizes a submodular function. 
	When the extra constraint is a matroid, we obtain the approximation ratios of $k + 1 + \varepsilon$, $k + 2\sqrt{k+1} + 2$, and $k + 2\sqrt{k + 2} + 3$ for linear, monotone and non-monotone submodular functions, respectively. When the extra constraint is a $k'$-matchoid, we attain the approximation ratio $\frac{8}{3}k+ \frac{64}{9}k' + O(1)$ for general submodular functions.
\end{abstract}

\section{Introduction}


	Let $G=(V,E)$ be a multi-graph (without self-loops) where each vertex $v\in V$ is associated 
	with a \emph{capacity} $b_v \in \mathbb{Z}_{+}$. A \emph{$b$-matching} 
	is a subset of edges $M\subseteq E$ where each vertex $v$ has at most 
	$b_v$ incident edges contained in $M$. 
	
	In the maximum weight $b$-matching problem, edges are given weights 
	$w: E \rightarrow \mathbb{R}_{+}$ and we need to compute a $b$-matching 
	$M$ so that $w(M) = \sum_{e \in M}w(e)$ is maximized. 
	A generalization of the problem is that of 
	maximizing a non-negative \emph{submodular function} $f : 2^E \rightarrow \mathbb{R_+}$ 
	under the $b$-matching constraint, namely, we look for a $b$-matching 
	$M$ so that $f(M)$ is maximized. Here we recall the definition of a submodular function: \[\forall X \subseteq Y \subsetneq E, \forall e \in E\backslash Y, f(X \cup \{e\}) - f(X) \geq f(Y \cup \{e\}) - f(Y).\]
	Additionally $f$ is called \emph{monotone} if $\forall X \subseteq Y \subseteq E, f(X) \leq f(Y)$, otherwise it is \emph{non-monotone}. Observe that 
	the maximum weight matching is the special case where $f$ is a linear sum of the weights associated with the edges. Note that 
	throughout this paper, we implicitly assume that the submodular function $f$ is always non-negative. 
	
	In the traditional offline setting, 
	both problems are extensively studied. 
	The maximum weight $b$-matching can be solved in polynomial time~\cite{Sch2003}; 
	maximizing a non-negative monotone submodular function 
	under $b$-matching constraint 
	is NP-hard and the best approximation ratios so far are $2+\varepsilon$ and $4+\varepsilon$, 
	for the monotone and non-monotone case, respectively~\cite{Feldman2011}.  
	
	In this work, we consider the problem in 
	the semi-streaming model~\cite{Mut2005}. 
	Here the edges in $E$ arrive over time 
	but we have only limited space (ideally 
	proportional to the output size) and cannot 
	afford to store all edges in $E$---this rules out the possibility of applying known offline algorithms.
	
	We also consider a generalized problem, where a matroid or a matchoid 
	is imposed on the edges of an hypergraph. Specifically, here $G=(V,E)$ is a $k$-uniform hypergraph, where each edge $e \in E$ contains $k$ vertices in $V$. In addition to the capacities $b_v$, a matroid or a matchoid $\mathcal{M}$ (see the next section for the definitions) is imposed on the edges of $E$. A $b$-matching $M$ is feasible only if $M$ is independent in $\mathcal{M}$. The objective here is to find a feasible $b$-matching $M$ that maximizes $f(M)$ for $f$ a submodular function.
	
	
	\subsection{Our Contribution} 
	
	We start with the maximum weight ($b$-)matching. For this problem, a long series of papers~\cite{CS2014,Epstein2009,FKMSZ2005,GW2019,LW2021,Mcg2005,PS2017,Zel2008} have proposed many
	semi-streaming algorithms, with progressively improved approximation ratios, culminating 
	in the work of Paz and Schwartzman~\cite{PS2017}, where 
	$2+\varepsilon$ approximation is attained, 
	for the simple matching. For the general 
	$b$-matching, very recently, Levin and Wajc~\cite{LW2021} gave a $3+\varepsilon$ approximation algorithm. We close the gap between the simple matching and the general $b$-matching.
	
	\begin{theorem} 
	\label{thm:first}
	    For the maximum weight $b$-matching problem, we obtain a $2+\varepsilon$ approximation algorithm using $O\left(\log_{1 + \varepsilon} (W/\varepsilon) \cdot |M_{max}|\right)$ variables in memory, and another using $O\left(\log_{1 + \varepsilon} (1/\varepsilon) \cdot |M_{max}| + \sum_{v \in V} b_v\right)$ variables, where $M_{max}$ denotes the maximum cardinality $b$-matching and $W$ denotes the maximum ratio between two non-zero weights.
	\end{theorem}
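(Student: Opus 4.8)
The plan is to adapt the local-ratio / primal-dual framework of Paz and Schwartzman to the $b$-matching setting, using the data-structure idea advertised in the introduction. During the streaming phase, when an edge $e=(u,v)$ arrives, I would compare its weight $w(e)$ against a ``reduced weight'' obtained by subtracting the contributions already charged to the endpoints $u$ and $v$. Concretely, each vertex $v$ maintains a running potential $\phi_v$ (initially $0$), together with a bounded stack of the edges that have caused increases to $\phi_v$; only the ``top'' $b_v$ such contributions are relevant, so the stack at $v$ never needs more than $O(b_v)$ entries, or $O(1)$ entries if we only track the threshold. An edge $e=(u,v)$ is stored, and pushed onto both stacks, if $w(e) > (1+\varepsilon)(\rho_u + \rho_v)$, where $\rho_v$ is the current $b_v$-th largest gain recorded at $v$ (or $0$ if fewer than $b_v$ edges have been pushed); we then set the gain of $e$ at each endpoint to $w(e)-\rho_u-\rho_v$ and update $\phi_u,\phi_v$ accordingly. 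Edges failing the test are discarded. This is exactly the $b$-matching analogue of the Paz--Schwartzman rule, and the multiplicative slack $1+\varepsilon$ is what keeps the number of stored edges bounded.

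After the stream ends, I would run the natural greedy unwinding: process the stored edges in reverse order of insertion (LIFO), adding an edge to $M$ whenever both its endpoints still have residual capacity. The correctness argument has two halves. For feasibility, the LIFO order together with the per-vertex stacks guarantees that each vertex accepts at most $b_v$ edges. For the approximation ratio, I would set up a dual/charging scheme: assign to each vertex $v$ a value equal to the sum of the top $b_v$ gains recorded at $v$ (equivalently $\phi_v$ appropriately truncated), and show (i) this dual is feasible up to a factor $2+\varepsilon$ for the LP relaxation of $b$-matching — each stored edge's weight is at most $(1+\varepsilon)$ times the sum of the two endpoint thresholds at the time it was inserted, and thresholds only increase — so the total dual bounds $\mathrm{OPT}$ within $2+\varepsilon$; and (ii) the greedy solution $M$ recovers at least the total dual, because every gain that contributes to some $\phi_v$ is ``witnessed'' by an edge that is either in $M$ or blocked by $b_v$ later edges of $v$ that are in $M$, each of which has weight at least as large. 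Combining (i) and (ii) gives $w(M) \ge \frac{1}{2+\varepsilon}\mathrm{OPT}$ after rescaling $\varepsilon$.

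For the memory bounds: in the first variant we additionally geometrically bucket weights, keeping for each vertex only $O(\log_{1+\varepsilon}(W/\varepsilon))$ distinct threshold levels and the associated edges, which yields $O(\log_{1+\varepsilon}(W/\varepsilon)\cdot|M_{\max}|)$ stored variables since each stored edge is charged to a cardinality-$M_{\max}$ structure; in the second variant we do not bucket by $W$ but instead allow each vertex to explicitly store its stack of up to $b_v$ gains, contributing the additive $\sum_{v\in V} b_v$ term, while the number of ``active'' edges still in play is $O(\log_{1+\varepsilon}(1/\varepsilon)\cdot|M_{\max}|)$ by the same multiplicative-slack counting argument used by Paz--Schwartzman.

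I expect the main obstacle to be part (ii) of the approximation analysis — precisely pinning down, in the $b$-matching setting with LIFO unwinding, that each unit of accumulated potential at a vertex is dominated by an edge actually selected into $M$. In the simple-matching case one blocking edge suffices; here one must argue about the interaction of the top-$b_v$ stack entries and the (up to $b_v$) later edges at $v$ that consume its capacity, and show the charging can be routed consistently across both endpoints of every edge without double-counting. Getting the constant to be exactly $2+\varepsilon$ rather than $3+\varepsilon$ is exactly where the extra bookkeeping in the per-vertex data structure pays off.
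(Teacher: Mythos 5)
Your streaming rule is in the right spirit (a local-ratio test against per-vertex thresholds with a $(1+\varepsilon)$ slack), but the second half of your argument has a genuine gap, and it is exactly the part you flag as the ``main obstacle.'' Your unwinding greedy adds an edge ``whenever both its endpoints still have residual capacity,'' and your charging claim (ii) asserts that a blocked gain is witnessed by $b_v$ later edges in $M$ ``each of which has weight at least as large.'' Neither holds. Because the admission threshold at a vertex is the \emph{smallest} of $b_v$ running values (not the largest), a later stored edge can be much lighter than an earlier one, and the capacity-based greedy can spend all of $v$'s capacity on light late edges while an early heavy gain at $v$ goes unpaid. Concretely, take $u$ with $b_u=2$ and edges $a_1,a_2,a_3,a_4$ at $u$ (other endpoints distinct, capacity $1$) with weights $10,1,2,3$ arriving in that order: all four are stored (each beats the minimum threshold), the total gain is $13$, but the reverse-order capacity greedy takes $a_4,a_3$ for weight $5$, while the optimum is $13$ --- ratio $2.6$, so $w(M)\geq g(S)$ fails and the $2+\varepsilon$ bound is lost.

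The paper's fix is precisely the data structure you gesture at but do not exploit: each vertex keeps $b_v$ \emph{queues}, a stored edge is pushed onto the queue with the smallest top, and its reduced weight $w_u(e)=w_u^*(e)+g(e)$ equals the cumulative gain of its queue, so $w(e)=g(e)+\sum_{u\in e}w_u^*(e)$ pays for everything below $e$ in the two queues it joins. The greedy is then \emph{not} capacity-based: when an edge is selected, all edges below it in its queues are marked and skipped later, so at most one edge per queue is taken (which gives feasibility for free) and every stored edge is either selected or lies below a selected edge in a common queue, yielding $w(M)\geq g(S)$ and hence, with $2(1+\varepsilon)g(S)\geq w(M^{opt})$, the $2+\varepsilon$ ratio. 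In the example above this greedy takes $a_4$ and $a_1$, recovering weight $13$. Your memory discussion is also only roughly aligned with the paper: the first bound comes from each queue's top growing by a factor $1+\varepsilon$ per insertion (so queues have length $O(\log_{1+\varepsilon}(W/\varepsilon))$) plus a vertex-cover argument against $M_{max}$, and the second bound comes from capping each queue at $O(\log_{1+\varepsilon}(1/\varepsilon))$ elements by evicting deep elements, with an eviction-chain argument showing only an $O(\varepsilon)$ fraction of the gain is lost and the additive $\sum_v b_v$ accounting for the non-evictable queue tops; geometric bucketing of weights is not needed.
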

	
	The details of this theorem can be found in Section~\ref{sec:weight}.
	

	Next we consider the general case of submodular functions, for whom approximation 
	algorithms have been proposed in~\cite{DBLP:journals/mp/ChakrabartiK15,ChekuriGQ15,LW2021,DBLP:conf/nips/FeldmanK018}. The current best ratios obtained by Levin and Wajc~\cite{LW2021} are $3 + 2 \sqrt{2} \approx 5.828$ and $4 + 2 \sqrt{3} \approx 7.464$, for the monotone and 
	non-monotone functions, respectively.  We propose an alternative algorithm to achieve the same bounds. 
	
	\begin{theorem} 
	    \label{thm:second}
	    To maximize a non-negative submodular function under the $b$-matching constraint, we obtain algorithms providing a $3 + 2 \sqrt{2} \approx 5.828$ approximation for monotone functions and a $4 + 2 \sqrt{3} \approx 7.464$ approximation for non-monotone functions, using $O(\log W \cdot |M_{max}|)$ variables, where $M_{max}$ denotes the maximum cardinality $b$-matching and $W$ denotes the maximum quotient $\frac{f(e \,|\, Y)}{f(e' \,|\, X)}$, for $X \subseteq Y \subseteq E$, $e, e' \in E$, $f(e' \,|\, X) > 0$. 
	\end{theorem}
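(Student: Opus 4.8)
\textbf{Proof sketch of \cref{thm:second}.}
The plan is to lift the stack-based primal--dual framework behind \cref{thm:first} (in the spirit of~\cite{PS2017,LW2021,GJS2021}) from edge weights to submodular marginal gains, attaching to every vertex a bounded data structure that records the ``value profile'' of its kept edges. During the streaming phase we maintain a stack $T$ of \emph{kept} edges and, for each vertex $v$, a bucket $B_v$ holding at most $b_v$ nonnegative reals, interpreted as the reduced gains of the currently most valuable kept edges at $v$. When $e=uv$ arrives, let $g_e:=f(e\mid T)$ be its marginal gain relative to the current stack and let $\tau_w$ be the minimum of $B_w$ if $|B_w|=b_w$ and $0$ otherwise. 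If $g_e>(1+\gamma)(\tau_u+\tau_v)$ for a parameter $\gamma>0$ to be fixed, we push $e$ onto $T$, set $\hat g_e:=g_e-\tau_u-\tau_v$, and insert $\hat g_e$ into both $B_u$ and $B_v$, evicting the smallest entry whenever a bucket overflows; otherwise we discard $e$ (discarding it outright also when $g_e$ is negligible compared with the largest marginal seen so far). The usual geometric-growth argument --- along any chain of evictions at a fixed vertex the relevant gains increase by a factor $1+\gamma$ --- then bounds the stack size and the total number of stored numbers by $O(\log W\cdot|M_{max}|)$, exactly as in~\cite{PS2017,LW2021}.

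\textbf{Post-processing.} We pop $T$ in reverse order and greedily build a $b$-matching $ALG$: a popped edge $e=uv$ is added iff $u$ and $v$ are both still unsaturated in $ALG$. For monotone $f$ this is the output. For non-monotone $f$ we additionally subsample, inserting each would-be greedy edge into $ALG$ only with probability $p$ (to be optimized), which keeps $\mathbb{E}[f(ALG)]$ at least $p$ times the value of the greedy $b$-matching before subsampling, while guaranteeing that every edge outside a fixed set belongs to $ALG$ with probability at most $p$, hence $\mathbb{E}[f(OPT\cup ALG)]\ge(1-p)\,f(OPT)$ by the standard inequality for nonnegative submodular functions.

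\textbf{Analysis.} Let $\Phi$ be the sum over all vertices of the final bucket contents. Two estimates yield the ratio. First, the greedy unwinding loses only a constant factor: when a kept edge $e=uv$ is rejected, a saturated endpoint, say $u$, already carries $b_u$ later-pushed --- hence earlier-popped --- $ALG$-edges whose reduced gains dominate $\hat g_e$, so a charging argument, together with submodularity and the fact that $T$ only grows during the stream (so marginals computed then bound later marginals in the needed direction), gives $f(ALG)\ge\alpha\,\Phi$ for an explicit constant $\alpha$ (with an additional $p$ factor in the non-monotone case). Second, every optimal edge is either kept --- already accounted for inside $\Phi$ --- or was discarded, so $g_e\le(1+\gamma)(\tau_u+\tau_v)$ at its arrival; summing over $OPT$, telescoping the marginals of $OPT\setminus T$ against $f(OPT\cup T)$ by submodularity, and passing from $f(OPT\cup T)$ to $f(OPT)$ by monotonicity (respectively by the $(1-p)$-inequality above in the non-monotone case), gives $f(OPT)\le\beta(\gamma)\,\Phi$, where $\beta(\gamma)$ balances a term of the form $1+1/\gamma$ (the overhead of the reduced gains on kept edges) against a term increasing with $\gamma$ (the cost of discarded optimal edges). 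Chaining the two estimates and minimizing over $\gamma$ --- and, in the non-monotone case, jointly over $p$ --- makes $\beta/\alpha$ equal to $3+2\sqrt{2}=(1+\sqrt2)^2$ for monotone functions and to $4+2\sqrt{3}=(1+\sqrt3)^2$ for non-monotone functions, the single extra unit under the square root reflecting the one additional unit of loss introduced by subsampling.

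\textbf{Main obstacle.} The algorithm is the easy part; the difficulty lies in the analysis. The crux is to tame the order-dependence of submodular marginals so that the reduced gains recorded during the stream serve \emph{simultaneously} as lower bounds for the greedy unwinding and as upper bounds against $OPT$, and to carry the per-vertex bucket bookkeeping through the $b$-matching charging --- where a single rejected edge may have to be split among up to $b_v$ newer edges at a saturated endpoint --- without leaking constant factors; in the non-monotone case one must additionally couple the subsampling probability $p$ with $\gamma$ so that the two sources of loss are optimized at once.
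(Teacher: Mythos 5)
Your high-level framework (threshold streaming with per-vertex reduced-gain structures, reverse-order greedy, parameter optimization, and a Buchbinder-type lemma for the non-monotone case) matches the paper's, but two of your key steps do not go through as described. First, the greedy/charging step. The paper's per-vertex structure is a set of $b_v$ \emph{queues} in which the reduced weight of a kept edge is the running sum of the gains along its queue, and the greedy (Algorithm~\ref{greedy_part}) does not test saturation: when it takes an edge it \emph{marks and skips} everything below that edge in its queues, which guarantees both feasibility (at most one edge per queue) and that every stored edge lies below some chosen edge in a common queue, so that $f(M)\geq g(S)+f(\emptyset)$ (Lemma~\ref{lemma_greedy_submod}) holds with no constant-factor loss. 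Your rule "add $e$ iff both endpoints are unsaturated" breaks this charging: it can take two edges from the same queue of a vertex and leave the other queue's gains unpaid. Concretely, take $f$ linear, a vertex $u$ with $b_u=2$, and four edges at $u$ (other endpoints fresh, capacity $1$) of weights $1,10,3,12$ in arrival order with $\varepsilon$ small: the queues at $u$ become $(1,3,12)$ and $(10)$, the gains are $1,10,2,9$ with $g(S)=22$, the paper's greedy outputs $\{12,10\}$ of value $22\geq g(S)$, but your saturation greedy takes $12$ then $3$, saturating $u$, and outputs value $15<g(S)$. Since the ratio $3+2\sqrt{2}$ requires exactly the lossless relation $f(M)\geq g(S)$ (your "$f(ALG)\geq\alpha\Phi$" with $\alpha$ strictly worse immediately degrades the constant), this is not a detail one can wave through; relatedly, your bucket-with-min-eviction bookkeeping stores individual reduced gains rather than the queue prefix sums, so it does not record which edge pays for which, and the domination claim you rely on is not established.

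Second, the non-monotone case. The paper flips the coin \emph{during the stream} (Lines 8--9 of Algorithm~\ref{streaming_part_submod}), so that every edge lies in $S$ with probability at most $p$; this is what makes Lemma~\ref{lemma_ineg_h} applicable to $h(X)=f(X\cup M^{opt})$ with $B=S$, and the coupling $p\geq\frac{1}{3+2\varepsilon}$ enters through the conditional-expectation bound (Lemma~\ref{lemma_eg_ef}) on edges that passed the threshold but lost the coin. In your scheme the stream set $T$ is deterministic, so the telescoping upper bound you derive is against $f(OPT\cup T)$, for which no $(1-p)$-type lower bound in terms of $f(OPT)$ exists (an edge can be in $T$ with probability $1$); the $(1-p)$ inequality you state concerns $OPT\cup ALG$, which is not the set your stream-side comparison telescopes against, so the two halves of your argument do not meet. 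Moreover, subsampling in post-processing forces you to pay a factor $p$ on the algorithm side \emph{and} $(1-p)$ on the $OPT$ side, which even under your own claimed inequalities gives at best $\frac{3+2\varepsilon+1/\varepsilon}{p(1-p)}\geq 4(3+2\sqrt{2})\approx 23.3$, far from $4+2\sqrt{3}$; the paper avoids the factor-$p$ loss entirely because its greedy guarantee holds per realization of the stream randomness. Finally, the claim $\mathbb{E}[f(ALG)]\geq p\cdot f(\text{greedy})$ is itself unsupported: a failed coin frees capacity and changes which later edges are "would-be greedy", so $ALG$ is not an independent $p$-subsample of the deterministic greedy output and the concavity/sampling lemma does not apply as stated.
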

	
	The details of this theorem can be found in Section~\ref{sec:submodular}.
	
	It should be pointed out that in~\cite{LW2021}, for the case 
	of non-monotone functions, their algorithm only works for the simple matching, and it is unclear how to generalize it to the general 
	$b$-matching, while our algorithm lifts this restriction. Another interesting thing to observe is that even though the achieved ratios are the same and our analysis borrows
	ideas from~\cite{LW2021}, our algorithm is not really just 
	the same algorithm disguised under a different form. See Appendix~\ref{appendix_difference} for a concrete example where the two algorithms 
	behave differently. 
	
	\paragraph*{Extension: $k$-uniform hypergraph with a matroid or a matchoid constraint}
	
	
	Before we explain our results, let us recall some definitions. 
	
	A pair $\mathcal{M} = (E, \mathcal{I})$ is a matroid if the following three conditions hold: (1) $\emptyset \in \mathcal{I}$, (2) if $X\subseteq Y \in \mathcal{I}$, then 
	$X\in \mathcal{I}$, and (3) if $X, Y \in \mathcal{I}, |Y| > |X|$, 
	there exists an element $e \in Y \backslash X$ so that 
	$X \cup \{e\} \in \mathcal{I}$. The sets in $\mathcal{I}$ are the \emph{independent sets} and the \emph{rank} $r_{\mathcal{M}}$ of the matroid $\mathcal{M}$ is defined as $\max_{X \in \mathcal{I}}|X|$. A minimum (inclusion-wise) non-independent subset is called a \emph{circuit}. For any subset $S \subseteq E$, its \emph{rank} $r_{\mathcal{M}}(S)$ is defined as the maximum size of an independent set in $S$. We will also use the notion of \emph{span}, which is defined for a subset $S$ of $E$ as $\text{span}(S) = \{e \in E : r_{\mathcal{M}}(S) = r_{\mathcal{M}}(S \cup \{e\})$. These definitions are close to the ones for independent families in vector spaces, and their nomenclature is similar.
	
	In the following, we will sometimes refer to particular types of matroids: uniform and partition matroids. A \emph{uniform matroid} $\mathcal{M} = (E, \mathcal{I})$ of rank $r_{\mathcal{M}}$ is a matroid such that the independent sets are the sets of cardinality at most $r_{\mathcal{M}}$. A \emph{partition matroid} $\mathcal{M} = (E, \mathcal{I})$ is defined by a partition $E_1, \dots, E_l$ of $E$ and bounds $k_1,\dots,k_l$: a set $M \subseteq E$ is independent in $\mathcal{M}$ if for all $1 \leq i \leq l, |M \cap E_i| \leq k_i$.
	
	A matchoid $\mathcal{M} = (\mathcal{M}_i=(E_i,\mathcal{I}_i))_{i=1}^s$ is a collection of matroids $\mathcal{M}_i=(E_i,\mathcal{I}_i)$. Specifically, a $k'$-matchoid is a matchoid with the additional condition that each element $e \in E = \cup_{i=1}^{s}E_i$ appears in at most $k'$ of $E_i$s. A set $S \subseteq \cup_{i=1}^{s}E_i$ is independent in the matchoid $\mathcal{M}$ only if $S \cap E_i$ is independent for all $1 \leq i \leq s$. For instance, a partition matroid can be seen as $1$-matchoid made of uniform matroids.
	
	In our generalization, a $k$-uniform hypergraph is given 
	with an additional matroid or a $k'$-matchoid $\mathcal{M}$ imposed. 
	The goal is to find a $b$-matching $M$, which is also independent 
	in $\mathcal{M}$, so as to maximize $f(M)$.
	
	We notice that the problem of maximizing a submodular function under a $p$-matchoid constraint (without our additional $b$-matching constraint in the hypergraph) has already been studied by Chekuri et al.~\cite{ChekuriGQ15} and 
	Feldman et al.~\cite{DBLP:conf/nips/FeldmanK018}, where the best approximation ratios for the maximizing a monotone 
	and a non-monotone function are respectively $4p$ and $2p+2\sqrt{p(p+1)}+1$. It is easy to see that, if the constraint $\mathcal{M}$ is a matroid, our problem is a special case of the $(k+1)$-matchoid submodular function maximization, and if the constraint $\mathcal{M}$ is a $k'$-matchoid, our problem is a spacial case of the $(k + k')$-matchoid submodular function maximization\footnote{Indeed, one can observe that the capacity constraint associated with a vertex can be seen as a uniform matroid: each edge involves $k$ such uniform matroids.}.
	 
	Essentially in our next two theorems, 
	we show that if the majority of matroids contained in the matchoid are simple uniform matroids, we can attain better approximation ratios than in~\cite{ChekuriGQ15,DBLP:conf/nips/FeldmanK018}.

	
	First consider the case where $\mathcal{M}$ is a simple matroid.
	
	\begin{theorem} 
	    \label{thm:third}
        To maximize linear functions under the $b$-matching constraint with an additional matroid constraint, we obtain a $k+1+\varepsilon$ approximation algorithm, having a memory consumption of $O\left(\log_{1 + \varepsilon} (W/\varepsilon) \cdot \min\{|M_{max}|, r_{\mathcal{M}}\}\right)$ variables, where $M_{max}$ denotes the maximum cardinality $b$-matching and $W$ denotes the maximum ratio between two non-zero weights. To maximize non-negative submodular functions, we have an algorithm providing an approximation ratio of $k + 2\sqrt{k+1} + 1$ for monotone functions and $k + 2\sqrt{k + 2} + 3$ for non-monotone functions, using $O\left(\log_{1 + \varepsilon} (W/\varepsilon) \cdot \min\{|M_{max}|, r_{\mathcal{M}}\}\right)$ variables in memory, where $W$ denotes the maximum quotient $\frac{f(e \,|\, Y)}{f(e' \,|\, X)}$, for $X \subseteq Y \subseteq E$, $e, e' \in E$, such that $f(e' \,|\, X) > 0$. 
	\end{theorem}
	
	The details of this theorem can be found in Section~\ref{sec:matroid}.
	
	Observe that when $f$ is a monotone, or a non-monotone submodular function, our ratios are better than $4(k+1)$ and $2(k+1)+2\sqrt{(k+1)(k+2)} + 1$ achieved in~\cite{ChekuriGQ15} and in~\cite{DBLP:conf/nips/FeldmanK018}.  
	
	Our result is of particular interest  when $f$ is a linear function. In fact, notice that if the given hypergraph is $k$-partite, our problem is the same as finding the maximum weight intersection of $k$ partition matroids and an arbitrary matroid---this is a special case of finding a maximum weight intersection of $k+1$ arbitrary matroids, a problem recently studied by Garg et al.~\cite{GJS2021}. It can be seen that the set of stored elements in our algorithm is identical to that in the algorithm of~\cite{GJS2021}. Garg et al. have conjectured that there always exists a $k+1$ approximation in their stored elements (and have proved this to be true when $k+1=2$). Theorem~\ref{thm:third} thus confirms their conjecture for the special case when all but one matroids are partition matroids.
	
	We next consider the case where $\mathcal{M}$ is a $k'$-matchoid. 
	
	\begin{theorem} 
	\label{thm:fourth}
    To maximize a non-negative submodular function 
	under the $b$-matching constraint with an additional $k'$-matchoid constraint, we have an algorithm providing an approximation ratio better than $\frac{8}{3}k + \frac{64}{9}k'+O(1)$ for both monotone and non-monotone functions, using 
	$O(\log W \cdot |M_{max}|)$ variables in memory, where $M_{max}$ denotes the maximum cardinality $b$-matching and $W$ denotes the maximum quotient $\frac{f(e \,|\, Y)}{f(e' \,|\, X)}$, for $X \subseteq Y \subseteq E$, $e, e' \in E$, such that $f(e' \,|\, X) > 0$. 
	\end{theorem}
	
	The details of this theorem can be found in Section~\ref{sec:matchoid}.
	
	The exact expressions for the approximation ratios are formally stated in Theorems~\ref{theo_submo_mono_matchoid} and~\ref{theo_submo_nonmono_matchoid}. For a comparison, recall that using the techniques of~\cite{ChekuriGQ15,DBLP:conf/nips/FeldmanK018}, the obtained ratio is $4(k+k')$ and $2(k+k')+2\sqrt{(k+k'+1)(k+k'+2)}+1$ when $f$ is monotone and non-monotone, respectively. Our result shows that if $k$ is at least roughly three times of $k'$, better approximation ratios can be achieved.
	
	\paragraph*{Comparison with the conference version of this work.}
	    Compared with our previous results of~\cite{HS2021}, Theorems~\ref{thm:third} and~\ref{thm:fourth} are significant improvements. In fact, Theorem~\ref{thm:third} allows us to get approximation ratios of $k+1+\varepsilon$, $k + 2\sqrt{k+1} + 1$ and $k + 2\sqrt{k + 2} + 3$ for linear, monotone, and non-monotone submodular functions, respectively, meanwhile in~\cite{HS2021} the approximation ratios are $k+2\sqrt{k} + 1$, $\frac{8}{3}k + O(1)$, and $\frac{8}{3}k + O(1)$. Moreover, Theorem~\ref{thm:fourth} is a generalization of the technique used for the matroids in~\cite{HS2021} to matchoids.

	\subsection{Our Technique} 
	
	We use a local-ratio technique to decide 
	to retain or to discard a newly arrived edge during the streaming 
	phase. After this phase, a greedy algorithm, according to 
	the reverse edge arrival order, is then applied to add edges 
	one by one into the solution while guaranteeing feasibility. 
	
	This is in fact the same framework used in~\cite{LW2021,PS2017}. Our main technical innovation is to introduce a data structure, which takes the form of 
	a set of queues. Such a set of queues is associated with each vertex (and with the imposed matroid). Every edge, if retained, will appear as an element\footnote{In this article, we will often use ``edge'' and ``element'' interchangeably.} in one of these queues for each of its endpoints (and for the imposed matroid). These queues will guide the greedy algorithm to make better choices and are critical in our improvement over previous results. 
	Here we give some intuition behind these queues. Consider the maximum weight $b$-matching problem. Similar to~\cite{GJS2021}, we compute, for every edge, a \emph{gain}. The sum of the gains of the retained edges can be shown to be at least half of the real weight of the unknown optimal matching. The question then boils down to how to ``extract'' a matching whose real weight is large compared to these gains of the retained edges. In our queues, the elements are stacked in such a way that the weight of an element $e$ is the sum of the gains of all the elements preceding $e$ in any queue containing $e$. This suggests that if $e$ is taken by the greedy algorithm, we can as well ignore all elements that are underneath $e$ in the queues, as their gains are already ``paid for'' by $e$.

\section{Maximum Weight $b$-Matching}
\label{sec:weight}
    \subsection{Description of the Algorithm}
    
        For ease of description, we explain how to achieve $2$ approximation, ignoring the issue of space complexity for the moment. We will explain how a slight modification can ensure the desired space complexity, at the expense of an extra $\varepsilon$ term in the approximation ratio (see Appendix~\ref{appendix_memory}). 
    
        The formal algorithm for the streaming phase is shown in Algorithm~\ref{streaming_part}. We give an informal description here. Let $S$, initially empty, be the set of edges that have been stored so far. For each vertex $v \in V$, a set $Q_v=\{Q_{v,1},\cdots, Q_{v,b_v}\}$ of queues are maintained. These queues contain the edges incident to $v$ that are stored in $S$ and respect the arrival order of edges (newer edges are higher up in the queues). Each time a new edge $e$ arrives, we compute its \emph{gain} $g(e)$ (see Lines 5 and 8). Edge $e$ is added into $S$ only if its gain is strictly positive. If this is the case, for each endpoint $u$ of $e$, we put $e$ in one of $u$'s queues (see Lines 6 and 13) and define a \emph{reduced weight} $w_u(e)$ (Line 11). It should be noted that $w_u(e)$ will be exactly the sum of the gains of the edges preceding (and including) $e$ in the queue. We refer to the last element inserted in a queue $Q$ as the top element of that queue, denoted $Q.top()$. To insert an element $e$ on top of a queue $Q$, we use the instruction $Q.push(e)$.  By convention, for an empty queue $Q$ we have $Q.top() = \bot$. We also set $w_u(\bot) = 0$. Notice that each element $e$ also has, for each endpoint $v \in e$, a pointer $r_v(e)$ to indicate its immediate 
        predecessor in the queue of $v$, where it appears.

    	\begin{algorithm}
    	\caption{Streaming phase for weighted matching}\label{streaming_part}
    	\begin{algorithmic}[1]
    	\State $S \gets \emptyset$
    	\State $\forall v \in V : Q_v \gets (Q_{v,1} = \emptyset, \cdots, Q_{v, b_v} = \emptyset)$ \Comment $b_v$ queues for a vertex $v$ 
    	\For{$e = e_t,\, 1 \leq t \leq |E|$ an edge from the stream}
    	
    		\For{$u \in e$}
    			\State $w_u^*(e) \gets \min \{w_u(Q_{u, q}.top()) : 1 \leq q \leq b_u\}$
    			\State $q_u(e) \gets q \text{ such that $w_u(Q_{u, q}.top()) = w_u^*(e)$}$
    		\EndFor
    	
    		\If {$w(e) > \sum_{u \in e} w_u^*(e)$}
    			\State $g(e) \gets w(e) - \sum_{u \in e} w_u^*(e)$
    			\State $S \gets S \cup \{e\}$
    			\For{$u \in e$}
    				\State $w_u(e) \gets w_u^*(e) + g(e)$
    				\State $r_u(e) \gets Q_{u, q_u(e)}.top()$ \Comment $r_u(e)$ is the element below $e$ in the queue
    				\State $Q_{u, q_u(e)}.push(e)$ \Comment add $e$ on the top of the smallest queue
    			\EndFor
    		\EndIf
    	\EndFor
    	\end{algorithmic}
    	\end{algorithm}
    		
    	After the streaming phase, our greedy algorithm, formally described in Algorithm~\ref{greedy_part}, constructs a $b$-matching based on the stored set $S$. 
    	
    	The greedy proceeds based on the reverse edge arrival order---but with an important modification. Once an edge $e$ is taken as part of the $b$-matching, all edges preceding $e$ that are stored in the same queue as $e$ will be subsequently ignored by the greedy algorithm. The variables $z_e$ are used to mark this fact. 
    		
    	\begin{algorithm}
    	\caption{Greedy construction phase}\label{greedy_part}
    	\begin{algorithmic}[1]
    	\State $M \gets \emptyset$
    	\State $\forall e \in S : z_e \gets 1$
    	\For{$e \in S$ in reverse order}
    	
    		\If{$z_e = 0$} \textbf{continue} \Comment skip edge $e$ if it is marked
    		\EndIf
    		
    		\State $M \gets M \cup \{e\}$
    	
    		\For{$u \in e$}
    			\State $c \gets e$
    			\While{$c \neq \bot$}
    				\State $z_c \gets 0$ \Comment mark elements below $e$ in each queue
    				\State $c \gets r_u(c)$
    			\EndWhile
    		\EndFor
    		
    	\EndFor
    	\State \Return $M$
    	\end{algorithmic}
    	\end{algorithm}
	
	\subsection{Analysis for Maximum Weight $b$-Matching}
	
	    For analysis, for each discarded element $e \in E \backslash S$, we set $g(e) = 0$ and $w_u(e) = w_u^*(e)$ for each $u \in e$. The \emph{weight} of a queue, $w_u(Q_{u,i})$, is defined as the reduced weight of its top element, namely, $w_u(Q_{u, i}.top())$. Let $w_u(Q_u)=\sum_{i=1}^{b_u}w_u(Q_{u,i})$. We write $S^{(t)}$ as the value of $S$ at the end of the iteration $t$ of the streaming phase, and by convention $S^{(0)} = \emptyset$. This notation $^{(t)}$ will also be used for other sets such as $Q_u$ and $Q_{u,i}$. Through this paper, $M^{opt}$ will always refer to the best solution for the considered problem.

        The following proposition follows easily by induction. 
		\begin{proposition} \label{lemma_g_w}
    		\begin{enumerate}[(i)]
    			\item For all $v \in V$ we have $g(\delta(v)) = g(\delta(v) \cap S) = w_v(Q_v)$. 
    			\item The set $\{Q_{v, q}.top() : 1 \leq q \leq b_v\}$ contains the $b_v$ heaviest elements of $S \cap \delta(v)$ in terms of reduced weights.
    		\end{enumerate}	
		\end{proposition}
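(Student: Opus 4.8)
The plan is to establish both parts by a single induction on the streaming iteration $t$, carrying as the inductive hypothesis the identity $g(\delta(v) \cap S^{(t)}) = w_v(Q_v^{(t)})$ for all $v \in V$; with the convention $w_v(\bot) = 0$ the base case $t = 0$ reads $0 = 0$. The first equality in (i), namely $g(\delta(v)) = g(\delta(v) \cap S)$, needs no induction at all: it is immediate from the convention $g(e) = 0$ for every discarded $e \in E \setminus S$.

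For the inductive step of (i), let $e = e_t$ be the edge arriving at iteration $t$. If $e$ fails the test of Line~8 then no queue is modified and $g(e) = 0$, so the hypothesis carries over verbatim. If $e$ is stored, then for $v \notin e$ nothing changes, while for each $u \in e$ the edge is pushed onto $Q_{u, q_u(e)}$, which by Lines~5--6 is a queue whose previous top had reduced weight exactly $w_u^*(e)$; the new top is $e$, with reduced weight $w_u(e) = w_u^*(e) + g(e)$ by Line~11. Hence $w_u(Q_u)$ increases by precisely $g(e)$, and so does $g(\delta(u) \cap S)$ since $e$ is incident to $u$ and was not previously in $S$; the identity is therefore preserved at $u$ as well.

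For (ii) I would rely on two structural observations about the queues at the end of the streaming phase. First, within any single queue the reduced weights strictly increase from bottom to top: when $e$ is pushed onto $Q_{u,q}$ its reduced weight $w_u^*(e) + g(e)$ strictly exceeds the reduced weight $w_u^*(e)$ of the element it covers, because $g(e) > 0$; in particular the top of a queue is its unique heaviest element. Second, since Algorithm~\ref{streaming_part} performs only pushes and never pops, the reduced weight of each queue's top is non-decreasing in $t$. Now fix $v$ and let $e' \in S \cap \delta(v)$ not be one of the $b_v$ final tops, i.e.\ at some later iteration $t''$ an edge is pushed onto the queue $Q_{v,i}$ containing $e'$; just before that push, $e'$ is still the top of $Q_{v,i}$, and by Lines~5--6 the queue receiving the push is one of minimum top reduced weight, so $w_v(Q_{v,q}^{(t''-1)}.top()) \ge w_v(Q_{v,i}^{(t''-1)}.top()) = w_v(e')$ for every $q$. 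Combining this with the time-monotonicity, every final top of $v$ has reduced weight at least $w_v(e')$, which is exactly the assertion that $\{Q_{v,q}.top()\}_{q}$ are the $b_v$ heaviest elements of $S \cap \delta(v)$.

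The whole argument is essentially bookkeeping against the pseudocode; the only place calling for a little care is (ii), where one must pin down that a non-top element occupies the top of its queue at the exact instant before it is covered, in order to transfer the minimality enforced on Line~6 — and then propagate the resulting inequality forward in time to the final queue state.
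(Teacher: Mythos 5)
Your proof is correct and follows the same route the paper intends: the paper dispatches Proposition~\ref{lemma_g_w} as an easy induction on the stream, and your argument is exactly that induction spelled out (the per-push accounting for (i), and the minimum-top choice in Lines~5--6 plus monotonicity of queue-top weights for (ii)). No gaps; the care you take in (ii) about the instant an element is covered is precisely the point that makes the induction go through.
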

		
		\begin{lemma} \label{lemma_w_T_w_opt}
		    At the end of Algorithm~\ref{streaming_part}, for all $b$-matching $M'$ and for all $v \in V$, we have $w_v(Q_v) \geq w_v(M' \cap \delta(v))$.
		\end{lemma}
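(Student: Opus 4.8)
The plan is to fix a vertex $v$ and an arbitrary $b$-matching $M'$, and set $F = M' \cap \delta(v)$; since $M'$ is a $b$-matching, $|F| \le b_v$. Reading $w_v(\cdot)$ additively on edge sets (so $w_v(M'\cap\delta(v)) = \sum_{f\in F} w_v(f)$ and $w_v(Q_v) = \sum_{i=1}^{b_v} w_v(Q_{v,i}.top())$), the goal becomes $\sum_{f\in F} w_v(f) \le \sum_{i=1}^{b_v} w_v(Q_{v,i}.top())$. I would split $F = F_S \cup F_{\bar S}$ according to whether an edge was stored ($F_S = F\cap S$, $F_{\bar S}=F\setminus S$), write $\alpha := |F_S|$ and $\beta := |F_{\bar S}|$ so that $\alpha+\beta \le b_v$, and bound $\sum_{F_S}w_v(f)$ against the $\alpha$ heaviest queues of $v$ and $\sum_{F_{\bar S}}w_v(f)$ against the $\beta$ lightest queues of $v$.

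The crucial preliminary step is a monotonicity observation: during Algorithm~\ref{streaming_part}, the weight $w_v(Q_{v,q})$ of every queue is non-decreasing over time. Indeed, the only change to a queue of $v$ is a push on Line~13, and $e$ is pushed onto $Q_{v,q_v(e)}$ whose current top has reduced weight exactly $w_v^*(e)$ by the choice of $q_v(e)$ on Line~6; the new top weight is $w_v(e) = w_v^*(e) + g(e) \ge w_v^*(e)$ since $g(e) > 0$ for stored edges, and the other queues of $v$ are untouched. Hence the weight of the \emph{lightest} queue of $v$, $\min_q w_v(Q_{v,q}.top())$, is also non-decreasing. This yields the key bound on discarded edges: if $f \in F_{\bar S}$ then by the analysis convention $w_v(f) = w_v^*(f)$, and $w_v^*(f)$ is the weight of the lightest queue of $v$ at the moment $f$ arrives (Line~5), which by monotonicity is at most the weight of the lightest queue of $v$ at the end of the stream. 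Consequently every $f\in F_{\bar S}$ satisfies $w_v(f) \le \min_i w_v(Q_{v,i}.top())$, so $\sum_{f\in F_{\bar S}} w_v(f) \le \beta\cdot \min_i w_v(Q_{v,i}.top())$, which is at most the sum of the $\beta$ smallest queue weights of $v$.

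For the stored part, note $F_S \subseteq S\cap\delta(v)$ and $|F_S| = \alpha \le b_v$, so $\sum_{f\in F_S}w_v(f)$ is at most the sum of the $\alpha$ heaviest reduced weights among elements of $S\cap\delta(v)$; by Proposition~\ref{lemma_g_w}(ii), these $\alpha$ heaviest elements are precisely the $\alpha$ queues of $v$ of largest weight. It then remains to combine the two estimates: when the $b_v$ queues of $v$ are sorted by weight, the $\alpha$ heaviest and the $\beta$ lightest occupy disjoint sets of positions, exactly because $\alpha+\beta\le b_v$; since all reduced weights are non-negative, the sum of those $\alpha+\beta$ queue weights is at most $\sum_{i=1}^{b_v}w_v(Q_{v,i}.top()) = w_v(Q_v)$, giving $\sum_{f\in F}w_v(f) = \sum_{f\in F_S}w_v(f) + \sum_{f\in F_{\bar S}}w_v(f) \le w_v(Q_v)$.

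The step I expect to be the main obstacle is handling the discarded edges of $M'$: they never appear in any queue, so there is no direct object to compare their reduced weight against, and a naive induction on the stream fails because storing a discarded edge leaves $w_v(Q_v)$ unchanged while the bound on $M'$ must account for it. Monotonicity of the lightest-queue weight is what resolves this — it lets one charge each discarded edge of $F$ to a distinct "bottom" queue slot, while Proposition~\ref{lemma_g_w}(ii) charges the stored edges of $F$ to the "top" slots, and $\alpha+\beta \le b_v$ guarantees these charges never collide.
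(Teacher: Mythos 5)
Your proof is correct and follows essentially the same route as the paper's: Proposition~\ref{lemma_g_w}(ii) handles the stored edges of $M'\cap\delta(v)$, while the monotonicity of the queue weights bounds each discarded edge's reduced weight by the final minimum queue weight, and $|M'\cap\delta(v)|\le b_v$ lets the two charges coexist. Your write-up merely makes explicit the combination step that the paper leaves implicit ("the proof will follow"), so there is no substantive difference.
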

		
		\begin{proof}
		    By Proposition~\ref{lemma_g_w}(ii), $w_v(Q_v)$ is exactly the sum of the reduced weights of the $b_v$ heaviest elements in $S \cap \delta(v)$ (which are on top of the queues of $Q_v$). If we can show that for each element $e = e_t \in M' \backslash S$, $w_u(e_t) \leq \min \{w_v(Q^{|E|}_{v, q}) : 1 \leq q \leq b_v\}$, the proof will follow. Indeed, as $e_t$ is discarded, we know that $w_v(e_t) =  \min \{w_v(Q_{v, q}^{(t-1)}) : 1 \leq q \leq b_v\} \leq \min \{w_v(Q^{|E|}_{v, q}) : 1 \leq q \leq b_v\}$, where the inequality holds because the weight of a queue is monotonically increasing. 
		\end{proof}
		
		\begin{lemma} \label{lemma_g_ineg}
		$2 g(S) \geq w(M^{opt})$.
		\end{lemma}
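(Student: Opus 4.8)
The plan is to give a purely local, per-edge accounting in the spirit of the local-ratio arguments of~\cite{PS2017,GJS2021}, and then to convert the resulting per-vertex queue weights into $2g(S)$ using the two preliminary statements already available.

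First I would establish the key per-edge inequality: for \emph{every} edge $e \in E$ (stored or discarded),
\[ w(e) \;\le\; \sum_{u \in e} w_u(e). \]
If $e = e_t$ was discarded, this is immediate: by the convention fixed at the start of the analysis, $w_u(e) = w_u^*(e)$ for each endpoint, and the test on Line~8 failed, i.e.\ $w(e) \le \sum_{u \in e} w_u^*(e)$. If $e = e_t$ was stored, then $w_u(e) = w_u^*(e) + g(e)$ for each of its two endpoints (Line~11), so
\[ \sum_{u \in e} w_u(e) \;=\; \sum_{u \in e} w_u^*(e) + 2\,g(e) \;=\; w(e) + g(e) \;\ge\; w(e), \]
using $g(e) = w(e) - \sum_{u \in e} w_u^*(e)$ (Line~9) and the fact that a graph edge has exactly two endpoints. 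The point worth stressing is that the gain $g(e)$ of a stored edge is added once to the reduced weight at \emph{each} of its two endpoints, so summing the reduced weights of $e$ recovers $w(e)$ with a surplus of $g(e)$; this is exactly what allows us to avoid paying a separate additive term $g(S)$ later on.

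Next I would sum this inequality over $M^{opt}$ and reindex by vertex:
\[ w(M^{opt}) \;=\; \sum_{e \in M^{opt}} w(e) \;\le\; \sum_{e \in M^{opt}} \sum_{u \in e} w_u(e) \;=\; \sum_{v \in V} \ \sum_{e \in M^{opt} \cap \delta(v)} w_v(e) \;=\; \sum_{v \in V} w_v(M^{opt} \cap \delta(v)). \]
Since $M^{opt}$ is a $b$-matching, Lemma~\ref{lemma_w_T_w_opt} applies with $M' = M^{opt}$ and gives $w_v(M^{opt} \cap \delta(v)) \le w_v(Q_v)$ for every $v \in V$. Finally, Proposition~\ref{lemma_g_w}(i) states $w_v(Q_v) = g(\delta(v) \cap S)$, so
\[ w(M^{opt}) \;\le\; \sum_{v \in V} w_v(Q_v) \;=\; \sum_{v \in V} g(\delta(v) \cap S) \;=\; \sum_{e \in S} 2\,g(e) \;=\; 2\,g(S), \]
where the penultimate equality holds because each edge of $G$ is incident to exactly two vertices and $g$ vanishes outside $S$.

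There is no heavy computation here, so the only real delicacy is the per-edge inequality, and more precisely the recognition that one must work with the \emph{final} reduced weights $w_u(e)$ — which, for a stored edge, already incorporate its gain $g(e)$ at each endpoint — rather than with the arrival-time snapshot values $w_u^*(e)$; using the latter would leave an uncompensated additive $g(S)$ and yield only a $3$-approximation. The monotonicity of queue weights along the stream, which underlies Lemma~\ref{lemma_w_T_w_opt}, is what licenses the passage from arrival-time quantities to the final $w_v(Q_v)$, and since that lemma is already proved nothing more is required. I would just double-check that the reindexing-by-vertex step relies only on $|M^{opt} \cap \delta(v)| \le b_v$ (the hypothesis of Lemma~\ref{lemma_w_T_w_opt}) and on every edge having exactly two endpoints — the latter being precisely where the factor $2$, and hence the $2$-approximation, comes from.
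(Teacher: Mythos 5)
Your proof is correct and follows essentially the same route as the paper's: the per-edge inequality $w(e) \leq \sum_{u \in e} w_u(e)$ (which the paper states as clear and you verify in both the stored and discarded cases), then summing over $M^{opt}$, reindexing by vertex, and applying Lemma~\ref{lemma_w_T_w_opt} and Proposition~\ref{lemma_g_w}(i) together with the fact that each edge has two endpoints. Nothing is missing.
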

		
		\begin{proof}
			It is clear that for $e = \{u, v\}$ we have $w_u(e) + w_v(e) \geq w(e)$. Therefore
        	\begin{eqnarray*}		
            	w(M^{opt}) &\leq & \sum_{e = \{u, v\} \in M^{opt}}w_u(e) + w_v(e) 
            		= \sum_{u \in V} w_u(M^{opt} \cap \delta(u)) \\
            		&\leq & \sum_{u \in V} w_u(Q_u) 
            	= \sum_{u \in V} g(S \cap \delta(u))  
            	= 2 g(S),
        	\end{eqnarray*}
	        where the second inequality follows from Lemma~\ref{lemma_w_T_w_opt} and the subsequent equality from Proposition~\ref{lemma_g_w}(i). The last equality comes from the fact that an edge is incident to $2$ vertices.
		\end{proof}
		
		Recall that $q_v(e)$ refers to the index of the particular queue in $Q_v$ where a new edge $e$ will be inserted (Line~6 of Algorithm~\ref{streaming_part}). 
		
		\begin{lemma} \label{lemma_greedy_weight}
			Algorithm~\ref{greedy_part} outputs a feasible $b$-matching $M$ with weight $w(M) \geq g(S)$.
		\end{lemma}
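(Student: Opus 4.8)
The plan is to prove the two assertions of the lemma separately: that the output $M$ is a feasible $b$-matching, and that $w(M)\ge g(S)$.

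For feasibility, I would first note that every stored edge incident to a vertex $v$ lies in exactly one of the $b_v$ queues $Q_{v,1},\dots,Q_{v,b_v}$ (it is pushed into $Q_{v,q_v(e)}$ on Line~13), so it suffices to show that Algorithm~\ref{greedy_part} selects at most one element of each queue $Q_{v,q}$. Suppose for contradiction that two edges $e,e'$ of the same queue $Q_{v,q}$ are both selected, with $e'$ arriving after $e$, hence $e'$ lying above $e$ in the queue and being examined first in the reverse order. When $e'$ is added to $M$, the inner loop over $u\in e'$ taken with $u=v$ walks down $Q_{v,q_v(e')}=Q_{v,q}$ from $e'$ via the pointers $r_v(\cdot)$ and sets $z_c\gets 0$ for every element at or below $e'$, in particular for $e$; so $e$ is marked before it is ever examined and is skipped, a contradiction. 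Hence $|M\cap\delta(v)|\le b_v$ for every $v$, and feasibility of $M$ follows.

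For the weight bound, the key local identity is that for $e=\{u,v\}$ the reduced weights satisfy $w_u(e)+w_v(e)=w(e)+g(e)$, which is immediate from Lines~9 and~11 of Algorithm~\ref{streaming_part}. Next, using Line~11 and a straightforward induction along each queue (together with $w_u(\bot)=0$), one has that $w_u(e)$ equals the sum of the gains of $e$ and of all elements lying below $e$ in $Q_{u,q_u(e)}$; call this set of elements (including $e$ itself) $C_u(e)$, and set $A_e:=C_u(e)\cup C_v(e)$. Equivalently, $A_e$ is exactly the set of elements $c$ whose marker $z_c$ is set to $0$ during the iteration of Algorithm~\ref{greedy_part} in which $e$ is added to $M$. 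Since $e\in C_u(e)\cap C_v(e)$ and all gains are non-negative, $g(A_e)\le g(C_u(e))+g(C_v(e))-g(e)=w_u(e)+w_v(e)-g(e)=w(e)$.

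It then remains to combine these observations. Every element of $S$ is eventually either added to $M$ or marked, and any marked element $c$ has $z_c$ set to $0$ during the processing of some $e\in M$, whence $c\in A_e$; an edge of $M$ lies in its own $A_e$. Therefore $\bigcup_{e\in M}A_e=S$, and since the gains are non-negative, $g(S)=g\!\bigl(\bigcup_{e\in M}A_e\bigr)\le\sum_{e\in M}g(A_e)\le\sum_{e\in M}w(e)=w(M)$. The main obstacle I anticipate is the bookkeeping around overlapping chains: in a multigraph, two selected edges can both cover a common stored edge through their respective shared endpoints, so one should not claim that the sets $A_e$ are pairwise disjoint; using a covering of $S$ together with the non-negativity of the gains, rather than a partition, is what makes the argument go through cleanly. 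A secondary point that needs care is the inductive claim that $w_u(e)$ equals the cumulative gain along its queue, which should be proved by induction on the arrival order, invoking the fact (used already in Lemma~\ref{lemma_w_T_w_opt}) that queue weights are monotonically non-decreasing.
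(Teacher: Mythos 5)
Your proof is correct and follows essentially the same route as the paper's: feasibility via the fact that the greedy picks at most one element per queue, and the weight bound via the identity that reduced weights are cumulative gains along a queue, charging every stored edge's gain to the edge of $M$ above it in a shared queue. Your explicit covering argument with the sets $A_e$ (using subadditivity rather than disjointness to handle parallel edges) is just a slightly more detailed rendering of the paper's ``paid for'' claim, which already tolerates double counting since gains are nonnegative.
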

		
		\begin{proof}
			By an easy induction, we know that for a given $e = e_t \in S$ and $v \in e$, we have:
			\begin{equation}
			    w_v(e) = w_v^*(e) + g(e) = \sum_{e' \in Q_{v, q_v(e)}^{(t)}} g(e') \,\text{ and }\, w(e) = g(e) + \sum_{u \in e} \sum_{e' \in Q_{u, q_u(e)}^{(t-1)}}g(e'). 
			    \label{equ:charged}
			\end{equation}
			
			Moreover, observe that $S \cap \delta(v)$ can be written as a disjoint union of the $Q_{v, q}$ for $1 \leq q \leq b_v$: $S \cap \delta(v) = \bigcup_{1 \leq q \leq b_v} Q_{v, q}$. 
			One can also observe that Algorithm~\ref{greedy_part} takes at most one element in each queue $Q_{v, i}$. In fact, an element can be added only if no element above it in any of the queues where it appears  has already been added into $M$; and no element below it in the queues can be already part of $M$ because $S$ is read in the reverse arrival order. Consequently $M$ respects the capacity constraint and is thus a feasible $b$-matching. We now make a critical claim from which the correctness of the lemma follows easily. 
			
			\begin{claim} 
    			Given an edge $e \in S$, either $e \in M$, or there exists another edge $e'$ arriving later than $e$, such that $e' \in M$ and there exists a queue belonging to a common endpoint of $e$ and $e'$, which contains both of them.
			\end{claim}
			
			Observe that if the claim holds, by~(\ref{equ:charged}), the gain $g(e)$ of any edge $e \in S$ will be ``paid'' for by some edge $e' \in M$ and the proof will follow. 
			
			To prove the claim, let $e = \{u, v\}$ and assume that $e \not \in M$. Consider the two queues $Q_{u, q_u(e)}$ and $Q_{v, q_v(e)}$. The edges stored above $e$ in these two queues must have arrived later than $e$ in $S$ and have thus already been considered by Algorithm~\ref{greedy_part}. The only reason that $e \not \in M$ must be that $z_e=0$ when $e$ is processed, implying that one of these edges was already part of $M$. Hence the claim follows.
		\end{proof}
		
		Lemmas~\ref{lemma_g_ineg} and~\ref{lemma_greedy_weight} give the following theorem:
		\begin{theorem}
			Algorithms \ref{streaming_part} and \ref{greedy_part} provide a $2$ approximation for the maximum weight $b$-matching problem.
		\end{theorem}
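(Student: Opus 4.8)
The plan is to combine the two lemmas directly. By Lemma~\ref{lemma_greedy_weight}, Algorithm~\ref{greedy_part} returns a feasible $b$-matching $M$ with $w(M) \geq g(S)$; by Lemma~\ref{lemma_g_ineg}, $2g(S) \geq w(M^{opt})$. Chaining these two inequalities gives $2w(M) \geq 2g(S) \geq w(M^{opt})$, i.e. $w(M) \geq \tfrac{1}{2} w(M^{opt})$, which is exactly the claimed $2$ approximation. So the proof is essentially a one-line deduction, and the only thing to be careful about is stating clearly that the $M$ in Lemma~\ref{lemma_greedy_weight} is the output of the algorithm and that feasibility was already established inside that lemma's proof.

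First I would recall the definitions so the reader does not have to jump back: $g(S) = \sum_{e \in S} g(e)$ is the total gain accumulated during the streaming phase, and $M^{opt}$ is a maximum weight $b$-matching. Then I would invoke Lemma~\ref{lemma_greedy_weight} to assert both that $M$ is a feasible $b$-matching (so the output is valid) and that $w(M) \geq g(S)$. Next I would invoke Lemma~\ref{lemma_g_ineg} for $2g(S) \geq w(M^{opt})$. Putting them together, $w(M^{opt}) \leq 2 g(S) \leq 2 w(M)$, which rearranges to the approximation guarantee.

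There is no real obstacle here; all the work has already been done in the preceding lemmas. If I wanted to be thorough I would also note in passing that this argument proves a \emph{deterministic, single-pass} guarantee, and that the space issue (currently ignored) is deferred to the appendix — but strictly for the theorem as stated, the proof is just:
\begin{proof}
By Lemma~\ref{lemma_greedy_weight}, Algorithm~\ref{greedy_part} outputs a feasible $b$-matching $M$ with $w(M) \geq g(S)$. By Lemma~\ref{lemma_g_ineg}, $w(M^{opt}) \leq 2 g(S)$. Combining the two inequalities yields $w(M^{opt}) \leq 2 g(S) \leq 2 w(M)$, so $w(M) \geq \tfrac{1}{2} w(M^{opt})$, which is the desired $2$ approximation.
\end{proof}
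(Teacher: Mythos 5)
Your proof is correct and is exactly the paper's argument: the theorem is stated there as an immediate consequence of Lemma~\ref{lemma_g_ineg} and Lemma~\ref{lemma_greedy_weight}, chained just as you do. Nothing is missing.
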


		We refer the readers to Appendix~\ref{appendix_memory} for the details on how to handle the memory consumption of the algorithm. 
		
		\begin{remark}
		    It is straightforward to extend our algorithm to a $k$-uniform hypergraph, where we can get an approximation ratio of $k$. Notice that if the $k$-uniform hypergraph is also $k$-partite, then the problem becomes that of finding a maximum weight intersection of $k$ partition matroids. It can be shown that our stored edge set is exactly identical to the one stored by the algorithm of Garg et al.~\cite{GJS2021}. They have conjectured that for $k$ arbitrary general matroids, their stored edge set always contains a $k$ approximation. Our result thus proves their conjecture to be true when all matroids are partition matroids.
		\end{remark}
		
\section{Submodular Function Maximization}
\label{sec:submodular}
	\subsection{Description of the Algorithm}
	
	    For submodular function maximization, the streaming algorithm, formally described in Algorithm~\ref{streaming_part_submod}, is quite similar to the one for the weighted $b$-matching in the preceding section. Here notice that the element weight $w(e)$ is replaced by the marginal value $f(e \,|\, S)$ (see Lines 7 and 10). We use a similar randomization method to that of Levin and Wajc~\cite{LW2021} for non-monotone functions (adding an element to $S$ only with probability $p$, see Lines~8-9), and our analysis will bear much similarity to theirs. The greedy algorithm to build a solution $M$ from $S$ is still Algorithm~\ref{greedy_part}.

		\begin{algorithm}
		\caption{Streaming phase for submodular function maximization}\label{streaming_part_submod}
		\begin{algorithmic}[1]
		\State $S\gets \emptyset$
		\State $\forall v \in V : Q_v \gets (Q_{v,1} = \emptyset, \cdots, Q_{v, b_v} = \emptyset)$
		\For{$e = e_t,\, 1 \leq t \leq |E|$ an edge from the stream}
			\For{$u \in e$}
				\State $w_u^*(e) \gets \min \{w_u(Q_{u, q}.top()) : 1 \leq q \leq b_u\}$
				\State $q_u(e) \gets q \text{ such that $w_u(Q_{u, q}.top()) = w_u^*(e)$}$
			\EndFor
		
			\If {$f(e\,|\,S) > \alpha \sum_{u \in e} w_u^*(e)$}
				\State $\pi \gets \text{a random variable equal to $1$ with probability $p$ and $0$ otherwise}$
				\If{$\pi = 0$} \textbf{continue} \Comment skip edge $e$ with probability $1 - p$
				\EndIf
				\State $g(e) \gets f(e\,|\,S) - \sum_{u \in e} w_u^*(e)$
				\State $S \gets S \cup \{e\}$
				\For{$u \in e$}
					\State $w_u(e) \gets w_u^*(e) + g(e)$
					\State $r_u(e) \gets Q_{u, q_u(e)}.top()$
					\State $Q_{u, q_u(e)}.push(e)$
				\EndFor
			\EndIf
		\EndFor
		\end{algorithmic}
		\end{algorithm}
		
		Algorithm~\ref{streaming_part_submod} uses, for $\alpha = 1 + \varepsilon$, $O(\log_{1 + \varepsilon} (W / \varepsilon) \cdot |M_{max}|)$ variables, where $M_{max}$ denotes the maximum cardinality $b$-matching and $W$ denotes the maximum quotient $\frac{f(e \,|\, Y)}{f(e' \,|\, X)}$, for $X \subseteq Y \subseteq E$, $e, e' \in E$, $f(e' \,|\, X) > 0$ (in Appendix~\ref{appendix_memory} we explain how to guarantee such space complexity when $f$ is 
		linear---the general case of a submodular function
		follows similar ideas). 
	
	\subsection{Analysis for Monotone Submodular Function Maximization}
	
		Let $\alpha = 1 + \varepsilon$. In this section, $p = 1$ (so we have actually a deterministic 
		algorithm for the monotone case). The following two lemmas relate the total gain 
	    $g(S)$ with the marginal values $f(S\,|\,\emptyset)$ and $f(M^{opt} \,|\, S)$.
		\begin{lemma} \label{lemma_g_f_ineg}
			It holds that $g(S) \geq \frac{\varepsilon}{1 + \varepsilon}f(S\,|\,\emptyset)$.
		\end{lemma}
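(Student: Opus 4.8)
The plan is a short direct computation: establish the bound $g(e)\ge \frac{\varepsilon}{1+\varepsilon}\,f(e\mid S^{(t-1)})$ for every accepted edge $e$, then sum and telescope. First I would enumerate the elements of $S$ in arrival order as $s_1,\dots,s_m$. Since $p=1$ in the monotone case and since $S$ only ever grows (and only when an edge passes the test on Line~7 of Algorithm~\ref{streaming_part_submod}), the value of $S$ just before $s_i$ is processed --- call its iteration $t_i$ --- is exactly the prefix $S^{(t_i-1)}=\{s_1,\dots,s_{i-1}\}$.

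Next I would use the acceptance test together with the definition of the gain. Because $s_i$ was added, Line~7 gives $f\bigl(s_i\mid S^{(t_i-1)}\bigr) > \alpha\sum_{u\in s_i}w_u^*(s_i)$, i.e.\ $\sum_{u\in s_i}w_u^*(s_i) < \tfrac1\alpha f\bigl(s_i\mid S^{(t_i-1)}\bigr)$. Substituting into $g(s_i) = f\bigl(s_i\mid S^{(t_i-1)}\bigr) - \sum_{u\in s_i}w_u^*(s_i)$ from Line~10 and recalling $\alpha = 1+\varepsilon$,
\[
g(s_i) \;>\; \Bigl(1-\tfrac1\alpha\Bigr) f\bigl(s_i\mid S^{(t_i-1)}\bigr) \;=\; \frac{\varepsilon}{1+\varepsilon}\, f\bigl(s_i\mid\{s_1,\dots,s_{i-1}\}\bigr).
\]
Note that this rearrangement needs no sign hypothesis on $\sum_{u\in s_i}w_u^*(s_i)$; it only manipulates the inequality of Line~7.

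Finally I would sum over $i=1,\dots,m$. The left-hand side is $g(S)=\sum_i g(s_i)$ (the gains of discarded edges being $0$ by the convention set at the start of the analysis section), and the right-hand side telescopes by the chain rule for marginals:
\[
\sum_{i=1}^m f\bigl(s_i\mid\{s_1,\dots,s_{i-1}\}\bigr) \;=\; f(S)-f(\emptyset) \;=\; f(S\mid\emptyset),
\]
which yields $g(S)\ge \frac{\varepsilon}{1+\varepsilon} f(S\mid\emptyset)$. There is essentially no obstacle here: the only two points meriting a sentence of justification are the identification of $S^{(t_i-1)}$ with the arrival-order prefix $\{s_1,\dots,s_{i-1}\}$ of $S$ (immediate, since no element is ever removed and acceptance is deterministic when $p=1$) and the telescoping identity above (the standard chain decomposition of $f(S\mid\emptyset)$).
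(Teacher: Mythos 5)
Your proof is correct and is essentially the paper's own argument: rearrange the acceptance condition of Line~7 to get $g(e) \geq \frac{\varepsilon}{1+\varepsilon} f\bigl(e\,|\,S^{(t-1)}\bigr)$ for each stored edge, then sum and telescope the marginals to obtain $f(S\,|\,\emptyset)$. The only cosmetic difference is that you spell out the prefix identification $S^{(t_i-1)}=\{s_1,\dots,s_{i-1}\}$, which the paper leaves implicit; note this identification holds by the very definition of $S^{(t)}$ even when $p<1$, so your appeal to determinism is not actually needed (and the lemma is indeed reused per realization in the non-monotone analysis).
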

		
		\begin{proof}
			For an element $e = e_t \in S$ we have $f\left(e\,|\,S^{(t-1)}\right) \geq (1 + \varepsilon) \sum_{u \in e}w_u^*(e)$ so
			\[g(e) = f\left(e \,|\, S^{(t-1)}\right) - \sum_{u \in e} w_u^*(e) \geq f\left(e \,|\, S^{(t-1)}\right)\left(1 - \frac{1}{1 + \varepsilon}\right),\]
			implying that 
			\[g(S) = \sum_{e \in S} g(e) \geq \sum_{e = e_t \in S} f\left(e \,|\, S^{(t-1)}\right)\left(1 - \frac{1}{1 + \varepsilon}\right) = \frac{\varepsilon}{1 + \varepsilon} f(S \,|\, \emptyset).\]
		\end{proof}
		
		As in the previous section, if an edge $e$ is discarded, we assume that $w^*_v(e)=w_v(e)$ 
		for each $v \in e$. 
		
		\begin{lemma} \label{lemma_2g_geq_f_MS}
		It holds that $2 (1 + \varepsilon) g(S) \geq f(M^{opt} \,|\, S)$.
		\end{lemma}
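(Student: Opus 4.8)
The plan is to bound $f(M^{opt}\,|\,S)$ edgewise via submodularity, charge each edge of $M^{opt}$ to the reduced weights at its two endpoints, and then invoke Lemma~\ref{lemma_w_T_w_opt} and Proposition~\ref{lemma_g_w}(i) exactly as in the linear case of Lemma~\ref{lemma_g_ineg}.

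\medskip
\noindent\textbf{Step 1 (subadditivity of marginals).} Since $f$ is submodular and $S \subseteq S \cup M^{opt}$, the marginals are subadditive over the elements of $M^{opt}$, so
\[
f(M^{opt}\,|\,S) \;=\; f(S \cup M^{opt}) - f(S) \;\leq\; \sum_{e \in M^{opt}} f(e\,|\,S).
\]

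\noindent\textbf{Step 2 (the key per-edge estimate).} I would show that for every $e = \{u,v\} \in M^{opt}$,
\[
f(e\,|\,S) \;\leq\; (1+\varepsilon)\bigl(w_u(e) + w_v(e)\bigr).
\]
If $e \in S$, then $f(e\,|\,S) = 0$ and the bound is trivial because reduced weights are non-negative. If $e = e_t \notin S$, then — since $p = 1$ in the monotone regime — the only reason $e$ was not stored is that the test on Line~7 of Algorithm~\ref{streaming_part_submod} failed, i.e.\ $f(e\,|\,S^{(t-1)}) \leq \alpha \sum_{x \in e} w_x^*(e)$ with $\alpha = 1+\varepsilon$. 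Monotonicity of marginals (from submodularity, as $S^{(t-1)} \subseteq S$) gives $f(e\,|\,S) \leq f(e\,|\,S^{(t-1)})$, and the discard convention $w_x(e) = w_x^*(e)$ for $x \in e$ turns the right-hand side into $(1+\varepsilon)(w_u(e)+w_v(e))$.

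\medskip
\noindent\textbf{Step 3 (regroup by vertex and finish).} Summing the per-edge bound over $M^{opt}$ and collecting terms by endpoint (each edge has exactly two),
\[
f(M^{opt}\,|\,S) \;\leq\; (1+\varepsilon)\!\!\sum_{e=\{u,v\}\in M^{opt}}\!\!\bigl(w_u(e)+w_v(e)\bigr) \;=\; (1+\varepsilon)\sum_{v \in V} w_v\bigl(M^{opt}\cap\delta(v)\bigr).
\]
Now Lemma~\ref{lemma_w_T_w_opt} gives $w_v(M^{opt}\cap\delta(v)) \leq w_v(Q_v)$ for each $v$, Proposition~\ref{lemma_g_w}(i) gives $w_v(Q_v) = g(S\cap\delta(v))$, and summing over $v \in V$ yields $\sum_{v} g(S\cap\delta(v)) = 2\,g(S)$. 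Chaining these,
\[
f(M^{opt}\,|\,S) \;\leq\; (1+\varepsilon)\sum_{v\in V} g(S\cap\delta(v)) \;=\; 2(1+\varepsilon)\,g(S),
\]
which is the claim.

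\medskip
I do not expect a genuine obstacle here: the one point requiring care is confirming that Lemma~\ref{lemma_w_T_w_opt} transfers verbatim from Algorithm~\ref{streaming_part} to Algorithm~\ref{streaming_part_submod}, since its proof uses only the monotone growth of the queue weights and the identity $w_v(e_t) = \min_q w_v(Q_{v,q}^{(t-1)})$ for a discarded $e_t$, both of which hold identically in the submodular streaming phase (with $p=1$). The other subtlety is purely directional — one must use $f(e\,|\,S) \leq f(e\,|\,S^{(t-1)})$, not the reverse — and the randomization parameter $p < 1$ is irrelevant because this lemma is only invoked in the monotone case where $p = 1$.
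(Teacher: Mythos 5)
Your proof is correct and follows essentially the same route as the paper's: bound $f(M^{opt}\,|\,S)$ edgewise by submodularity, use the failed Line~7 test together with the discard convention $w_x(e)=w_x^*(e)$, regroup by endpoint, and finish with Lemma~\ref{lemma_w_T_w_opt} and Proposition~\ref{lemma_g_w}(i). The only (harmless) cosmetic difference is that you sum over all of $M^{opt}$ and treat edges of $M^{opt}\cap S$ trivially via $f(e\,|\,S)=0$, whereas the paper restricts the sum to $M^{opt}\setminus S$ from the start; your remark that Lemma~\ref{lemma_w_T_w_opt} transfers verbatim to Algorithm~\ref{streaming_part_submod} is exactly the point the paper leaves implicit.
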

		
		\begin{proof}
			The only elements $e = e_t$ missing in $S$ are the ones satisfying the inequality $f(e \,|\, S^{(t-1)}) \leq (1 + \varepsilon) \sum_{u \in e} w_u^*(e)$. So by submodularity, 
			\begin{align*}
				f(M^{opt} \,|\, S) &\leq \sum_{e \in M^{opt} \backslash S} f(e \,|\, S)
				\leq \sum_{e = e_t \in M^{opt} \backslash S} f(e \,|\, S^{(t-1)})\\
				&\leq \sum_{e \in M^{opt} \backslash S} (1 + \varepsilon) \sum_{u \in e} w_u^*(e)
				= (1 + \varepsilon)\sum_{e \in M^{opt} \backslash S} \sum_{u \in e} w_u(e)\\
				&= (1 + \varepsilon) \sum_{u \in V} w_u((M^{opt} \backslash S) \cap \delta(u))
				\leq (1 + \varepsilon) \sum_{u \in V} w_u(Q_u) \\
				&\leq 2(1 + \varepsilon) g(S), 
			\end{align*}
			similar to the proof of  Lemma~\ref{lemma_g_ineg}.
		\end{proof}
		
		\begin{lemma} \label{lemma_greedy_submod}
			Algorithm~\ref{greedy_part} outputs a feasible $b$-matching with $f(M)\geq g(S) + f(\emptyset)$.
		\end{lemma}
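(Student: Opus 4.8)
The plan is to follow the proof of Lemma~\ref{lemma_greedy_weight} almost verbatim, replacing the additive weight identity by a telescoping of marginal values plus one application of submodularity.

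Feasibility of $M$ is identical to the weighted case: since $S\cap\delta(v)=\bigcup_{q}Q_{v,q}$ and Algorithm~\ref{greedy_part} marks, via the $z$ variables and the pointers $r_u(\cdot)$, every element lying below a chosen one in each queue containing it, at most one edge per queue ends up in $M$, hence at most $b_v$ edges incident to $v$. Moreover, since Algorithm~\ref{greedy_part} and the queue structure are exactly the same as in Section~\ref{sec:weight}, the Claim inside the proof of Lemma~\ref{lemma_greedy_weight} holds here without change: for every $e\in S$, either $e\in M$, or some edge $e'$ arriving after $e$ belongs to $M$ and shares a queue at a common endpoint with $e$.

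For the value bound, I would enumerate $M=\{m_1,\dots,m_\ell\}$ in increasing order of arrival time, say $m_j=e_{t_j}$, and telescope: $f(M)-f(\emptyset)=\sum_{j=1}^\ell f(m_j\mid\{m_1,\dots,m_{j-1}\})$. Since $m_1,\dots,m_{j-1}$ all arrived before $m_j$ and all lie in $S$, we have $\{m_1,\dots,m_{j-1}\}\subseteq S^{(t_j-1)}$, so submodularity gives $f(m_j\mid\{m_1,\dots,m_{j-1}\})\ge f(m_j\mid S^{(t_j-1)})$. By the definition of $g$ in Algorithm~\ref{streaming_part_submod} together with the analogue of~(\ref{equ:charged}) for the submodular streaming phase (the same easy induction, using that the reduced weight of a queue top equals the sum of the gains stored in that queue),
\[f(m_j\mid S^{(t_j-1)})=g(m_j)+\sum_{u\in m_j}w_u^*(m_j)=g(m_j)+\sum_{u\in m_j}\ \sum_{e'\in Q_{u,q_u(m_j)}^{(t_j-1)}}g(e').\]
Every term on the right, summed over $j$, is the gain of some element of $S$, and by the Claim every $e\in S$ occurs among these terms at least once (as some $m_j$, or as an element stored below some $m_j$ in a shared queue before time $t_j$). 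As all gains are nonnegative, $\sum_{j}\big[g(m_j)+\sum_{u\in m_j}\sum_{e'\in Q_{u,q_u(m_j)}^{(t_j-1)}}g(e')\big]\ge g(S)$, whence $f(M)-f(\emptyset)\ge\sum_j f(m_j\mid S^{(t_j-1)})\ge g(S)$.

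The point to be careful about is the direction of the possible double counting: we only need each $g(e)$ with $e\in S$ to be hit \emph{at least} once, which is exactly what the Claim delivers, and nonnegativity of the gains makes extra occurrences harmless — this is why the conclusion survives the fact that the greedy picks at most one element per queue. The only genuinely new ingredient compared with Lemma~\ref{lemma_greedy_weight} is the step from $f(m_j\mid\{m_1,\dots,m_{j-1}\})$ to $f(m_j\mid S^{(t_j-1)})$, which is where submodularity and the arrival-order enumeration of $M$ enter; note that no monotonicity of $f$ is used.
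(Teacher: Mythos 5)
Your proof is correct and follows essentially the same route as the paper's: feasibility as in the weighted case, telescoping $f(M)$ in arrival order, submodularity to pass from $f(m_j\mid\{m_1,\dots,m_{j-1}\})$ to $f(m_j\mid S^{(t_j-1)})$, and then reusing the charging argument of Lemma~\ref{lemma_greedy_weight} with these marginals playing the role of the weights. You merely spell out explicitly the charging/coverage step (and the harmlessness of over-counting due to nonnegative gains) that the paper invokes by reference.
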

		
		\begin{proof}
		   As argued in the proof of 
		   Lemma~\ref{lemma_greedy_weight}, $M$ respects the capacities and so is feasible. Now, suppose that $M = \{e_{t_1}, \cdots, e_{t_{|M|}}\}$ , $t_1 < \cdots < t_{|M|}$. Then 
			\begin{align*}
			    f(M) = f(\emptyset) + \sum_{i = 1}^{|M|}f(e_{t_i} \,|\, \{e_{t_1}, \cdots, e_{t_{i-1}}\}) &\geq f(\emptyset) + \sum_{i = 1}^{|M|}f(e_{t_i} \,|\, S^{(t_{i}-1)})\geq f(\emptyset) + g(S),
			\end{align*}
			as the values $f(e_{t_i} \,|\, S^{(t_{i-1})})$ play the same role as the weights in Lemma~\ref{lemma_greedy_weight}.
		\end{proof}
		
		\begin{theorem} \label{theo_submod_mono}
			Algorithms~\ref{streaming_part_submod} and~\ref{greedy_part} provide a $3 + 2 \sqrt{2}$ approximation if we set $\varepsilon = \frac{1}{\sqrt{2}}$.
		\end{theorem}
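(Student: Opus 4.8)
The plan is to chain the three preceding lemmas into a single inequality relating $f(M^{opt})$ to $f(M)$, and then pick $\varepsilon$ to minimize the resulting constant. Throughout we stay in the setting of this section, i.e.\ $p=1$ and $\alpha = 1+\varepsilon$, and we use the convention (fixed just before Lemma~\ref{lemma_2g_geq_f_MS}) that $w_v^*(e)=w_v(e)$ for discarded edges.

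First I would use monotonicity of $f$ to write $f(M^{opt}) \leq f(M^{opt} \cup S)$, and then expand the right-hand side via the telescoping identity $f(M^{opt} \cup S) = f(\emptyset) + f(S \,|\, \emptyset) + f(M^{opt} \,|\, S)$. The point of this step is that the two marginal terms are precisely what Lemmas~\ref{lemma_g_f_ineg} and~\ref{lemma_2g_geq_f_MS} control: the former gives $f(S \,|\, \emptyset) \leq \tfrac{1+\varepsilon}{\varepsilon}\, g(S)$ and the latter gives $f(M^{opt} \,|\, S) \leq 2(1+\varepsilon)\, g(S)$. Substituting yields
\[
f(M^{opt}) \;\leq\; f(\emptyset) + \left( \frac{1+\varepsilon}{\varepsilon} + 2(1+\varepsilon) \right) g(S).
\]

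Next I would invoke Lemma~\ref{lemma_greedy_submod}, which gives $g(S) \leq f(M) - f(\emptyset)$, to obtain
\[
f(M^{opt}) \;\leq\; f(\emptyset) + C(\varepsilon)\,\bigl(f(M) - f(\emptyset)\bigr), \qquad C(\varepsilon) := \frac{1+\varepsilon}{\varepsilon} + 2(1+\varepsilon) = \frac1\varepsilon + 2\varepsilon + 3 .
\]
Since $C(\varepsilon) \geq 3 > 1$ for every $\varepsilon > 0$ and $f(\emptyset) \geq 0$ by non-negativity, the $f(\emptyset)$ terms only help, so $f(M^{opt}) \leq C(\varepsilon)\, f(M)$. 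It then remains to minimize $C(\varepsilon) = \frac1\varepsilon + 2\varepsilon + 3$ over $\varepsilon > 0$; by AM--GM (or by differentiating) the minimum is attained at $\varepsilon = \frac{1}{\sqrt 2}$, giving $C = 3 + 2\sqrt 2$.

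The argument is essentially bookkeeping, so I do not expect a genuine obstacle: the substantive work is already contained in Lemmas~\ref{lemma_g_f_ineg}--\ref{lemma_greedy_submod}. The only places that require a moment's care are the correct handling of the leftover $f(\emptyset)$ terms (dropped using non-negativity together with $C(\varepsilon)\geq 1$) and making sure the telescoping expansion of $f(M^{opt}\cup S)$ is applied before using monotonicity, not after.
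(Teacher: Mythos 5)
Your proof is correct and follows essentially the same route as the paper: it chains Lemmas~\ref{lemma_g_f_ineg}, \ref{lemma_2g_geq_f_MS} and \ref{lemma_greedy_submod} via the identity $f(S\,|\,\emptyset)+f(M^{opt}\,|\,S)=f(M^{opt}\cup S\,|\,\emptyset)$, uses monotonicity and non-negativity to absorb the $f(\emptyset)$ terms, and minimizes $3+2\varepsilon+\tfrac{1}{\varepsilon}$ at $\varepsilon=\tfrac{1}{\sqrt 2}$, exactly as in the paper's argument.
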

		
		\begin{proof}
		By Lemmas~\ref{lemma_g_f_ineg} and~\ref{lemma_2g_geq_f_MS}, 
			we derive $\left(2 + 2 \varepsilon + \frac{1 + \varepsilon}{\varepsilon}\right)g(S) \geq f(M^{opt} \,|\, S) + f(S \,|\, \emptyset) = f(M^{opt} \cup S \,|\, \emptyset) \geq f(M^{opt} \,|\, \emptyset)$, where the last inequality is due to the monotonicity of $f$. 
			By Lemma~\ref{lemma_greedy_submod}, the 
			output $b$-matching $M$ guarantees that $f(M) \geq g(S) + f(\emptyset)$. As a result, $\left(3 + 2 \varepsilon + \frac{1}{\varepsilon}\right)f(M) \geq f(M^{opt} \,|\, \emptyset) + f(\emptyset) = f(M^{opt})$. Setting  $\varepsilon = \frac{1}{\sqrt{2}}$ gives the result. 
		\end{proof}
		
		\begin{remark}
		    When $b_v=1$ for all $v\in V$ (\emph{i.e.} simple matching), our algorithm behaves exactly the same as the algorithm of Levin and Wajc~\cite{LW2021}. Therefore their tight example also applies to our algorithm. In other words, our analysis of approximation ratio is tight.
		\end{remark}
		
	\subsection{Analysis for Non-Monotone Submodular Function Maximization}
	
		In this section, we suppose that $\frac{1}{3 + 2 \varepsilon} \leq p \leq \frac{1}{2}$. 
		
		\begin{lemma} \label{lemma_eg_ef}
		It holds that 
			\[\left( 2(1 + \varepsilon) + \frac{1 + \varepsilon}{\varepsilon}\right)\mathbb{E}[g(S)] \geq \mathbb{E}[f(S \cup M^{opt} \,|\, \emptyset)].\]
		\end{lemma}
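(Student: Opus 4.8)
The plan is to split $f(S\cup M^{opt}\,|\,\emptyset)=f(S\,|\,\emptyset)+f(M^{opt}\,|\,S)$ and bound each summand by a multiple of $\mathbb{E}[g(S)]$. The first summand is immediate: the computation in the proof of Lemma~\ref{lemma_g_f_ineg} is entirely pointwise (it uses only that each $e=e_t\in S$ cleared the threshold $f(e\,|\,S^{(t-1)})>(1+\varepsilon)\sum_{u\in e}w^*_u(e)$), and it is exactly equivalent to $f(S\,|\,\emptyset)\le\frac{1+\varepsilon}{\varepsilon}g(S)$; taking expectations gives $\mathbb{E}[f(S\,|\,\emptyset)]\le\frac{1+\varepsilon}{\varepsilon}\mathbb{E}[g(S)]$. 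So it will suffice to prove the key inequality $\mathbb{E}[f(M^{opt}\,|\,S)]\le 2(1+\varepsilon)\mathbb{E}[g(S)]$ and then add the two bounds.

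To prove the key inequality I would fix a realization of the random coins and, for every edge $e=e_t$ (stored or not), set $\hat g(e):=f(e\,|\,S^{(t-1)})-\sum_{u\in e}w^*_u(e)$ (the value $g(e)$ would take if $e$ were inserted), and let $A_e$ be the event ``$e$ cleared the threshold at step $t$'', so that $e\in S$ iff $A_e$ holds and the coin $\pi_e$ equals $1$, and $\hat g(e)>0$ whenever $A_e$ holds. By submodularity, $f(M^{opt}\,|\,S)\le\sum_{e\in M^{opt}\setminus S}f(e\,|\,S)\le\sum_{e=e_t\in M^{opt}\setminus S}f(e\,|\,S^{(t-1)})$. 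I would then argue that for each $e\in M^{opt}\setminus S$ one has $f(e\,|\,S^{(t-1)})\le(1+\varepsilon)\sum_{u\in e}w_u(e)+\mathbf{1}\{A_e\}\,\hat g(e)$: when $A_e$ fails this is $f(e\,|\,S^{(t-1)})\le(1+\varepsilon)\sum_u w^*_u(e)=(1+\varepsilon)\sum_u w_u(e)$, and when $A_e$ holds it is $f(e\,|\,S^{(t-1)})=\hat g(e)+\sum_u w^*_u(e)\le\hat g(e)+(1+\varepsilon)\sum_u w_u(e)$, both using $w_u(e)=w^*_u(e)$ for discarded $e$. Summing over $e\in M^{opt}\setminus S$ and splitting $M^{opt}\cap\delta(u)$ into its parts inside and outside $S$,
\[
  \sum_{e\in M^{opt}\setminus S}\sum_{u\in e}w_u(e)=\sum_{u\in V}w_u(M^{opt}\cap\delta(u))-\sum_{e\in M^{opt}\cap S}\sum_{u\in e}w_u(e)\le 2g(S)-2\!\!\sum_{e\in M^{opt}\cap S}\!\!\hat g(e),
\]
where the first sum is bounded using Lemma~\ref{lemma_w_T_w_opt} and Proposition~\ref{lemma_g_w}(i) ($\sum_u w_u(Q_u)=2g(S)$), and where for $e=\{u,v\}\in S$ I use $w_u(e)+w_v(e)=w^*_u(e)+w^*_v(e)+2g(e)\ge 2\hat g(e)$. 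Hence, for every realization,
\[
  f(M^{opt}\,|\,S)\ \le\ 2(1+\varepsilon)\Bigl(g(S)-\!\!\sum_{e\in M^{opt}\cap S}\!\!\hat g(e)\Bigr)+\!\!\sum_{e\in M^{opt}\setminus S,\ A_e}\!\!\hat g(e).
\]

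Finally I would take expectations. For $e=e_t\in M^{opt}$ both $\hat g(e)$ and $\mathbf{1}\{A_e\}$ are measurable with respect to the coins preceding step $t$, while $\pi_e$ is an independent $p$-coin; since $\{e\in S\}=\{A_e\}\cap\{\pi_e=1\}$ and $\{e\in M^{opt}\setminus S\}\cap\{A_e\}=\{A_e\}\cap\{\pi_e=0\}$, one obtains $\mathbb{E}\big[\sum_{e\in M^{opt}\cap S}\hat g(e)\big]=pX$ and $\mathbb{E}\big[\sum_{e\in M^{opt}\setminus S,\,A_e}\hat g(e)\big]=(1-p)X$ with $X:=\sum_{e\in M^{opt}}\mathbb{E}[\mathbf{1}\{A_e\}\hat g(e)]\ge 0$. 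Plugging in,
\[
  \mathbb{E}[f(M^{opt}\,|\,S)]\ \le\ 2(1+\varepsilon)\mathbb{E}[g(S)]+\bigl(1-(3+2\varepsilon)p\bigr)X\ \le\ 2(1+\varepsilon)\mathbb{E}[g(S)],
\]
the last inequality holding because $p\ge\frac{1}{3+2\varepsilon}$ makes the coefficient of $X$ non-positive; adding $\mathbb{E}[f(S\,|\,\emptyset)]\le\frac{1+\varepsilon}{\varepsilon}\mathbb{E}[g(S)]$ would then finish the proof. The one genuinely delicate point is this last cancellation: one must keep the ``refund'' $-2\sum_{e\in M^{opt}\cap S}\hat g(e)$ in the weight estimate rather than discarding it, so that the gains of the optimal edges that happen to be stored can pay — in expectation, and precisely when $p\ge\frac{1}{3+2\varepsilon}$ — for the optimal edges that clear the threshold but are dropped by the coin flip. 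The cruder argument that bounds $\sum_{e\in M^{opt}\setminus S}\sum_u w_u(e)\le 2g(S)$ directly, and then estimates the ``bad'' sum separately, loses an extra $\frac{1-p}{p}\mathbb{E}[g(S)]$, so making the $X$-terms cancel is exactly what the hypothesis $p\ge\frac{1}{3+2\varepsilon}$ is there for.
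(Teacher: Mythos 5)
Your proof is correct, and while it shares the paper's skeleton---the split $f(S\cup M^{opt}\,|\,\emptyset)=f(S\,|\,\emptyset)+f(M^{opt}\,|\,S)$, the pointwise bound $\frac{1+\varepsilon}{\varepsilon}g(S)\ge f(S\,|\,\emptyset)$ from Lemma~\ref{lemma_g_f_ineg}, and the target inequality $2(1+\varepsilon)\mathbb{E}[g(S)]\ge\mathbb{E}[f(M^{opt}\,|\,S)]$ obtained via Lemma~\ref{lemma_w_T_w_opt} and Proposition~\ref{lemma_g_w}(i)---the probabilistic middle step is organized genuinely differently. The paper proves, for each $e\in M^{opt}$, the per-edge expected bound $\mathbb{E}[f(e\,|\,S)]\le(1+\varepsilon)\,\mathbb{E}\bigl[\sum_{u\in e}w_u(e)\bigr]$ by conditioning on $A_e$ and $\overline{A_e}$; the computation under $\overline{A_e}$ uses both $p\ge\frac{1}{3+2\varepsilon}$ (to compare $2p(1+\varepsilon)$ with $1-p$) and $p\le\frac12$ (to discard the term $(1-2p)\,\mathbb{E}\bigl[\sum_{u\in e}w_u^*(e)\,|\,\overline{A_e}\bigr]$). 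You instead establish a single deterministic inequality per realization, keeping the gains of stored optimal edges as an explicit refund $-2(1+\varepsilon)\sum_{e\in M^{opt}\cap S}\hat g(e)$ against the extra term $\sum_{e\in M^{opt}\setminus S,\,A_e}\hat g(e)$, and only then pass to expectation, using that $\hat g(e)$ and $A_e$ are determined by the coins before step $t$ while $\pi_e$ is an independent $p$-coin, so the two terms combine into the coefficient $1-(3+2\varepsilon)p\le 0$ multiplying the nonnegative quantity $X=\sum_{e\in M^{opt}}\mathbb{E}[\mathbf{1}\{A_e\}\,\hat g(e)]$. The two routes buy the same final bound; yours is slightly more robust in that it never needs $p\le\frac12$ (the discarded quantity $\sum_{u\in e}w_u^*(e)\ge 0$ for stored edges always carries a nonnegative coefficient in your accounting), and it isolates exactly where the hypothesis $p\ge\frac{1}{3+2\varepsilon}$ is consumed, whereas the paper's per-edge conditioning is more local and mirrors the Levin--Wajc analysis it builds on.
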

	
		\begin{proof}
			From Lemma~\ref{lemma_g_f_ineg} we have that for any execution of the algorithm (a realization of randomness), the inequality $\frac{1 + \varepsilon}{\varepsilon}g(S) \geq f(S \,|\, \emptyset)$ holds, so it is also true in expectation. We will try to prove in the following that $2(1 + \varepsilon)\mathbb{E}[g(S)] \geq \mathbb{E}[f(M^{opt} \,|\, S)]$, which is the counterpart of Lemma~\ref{lemma_2g_geq_f_MS}.
			
			First, we show that for any $e \in M^{opt}$:
			\begin{equation} \label{ineg_fin}
			(1 + \varepsilon)\mathbb{E}\left[\sum_{u \in e} w_u(e)\right] \geq \mathbb{E}[f(e \,|\, S)]
			\end{equation}
			We will use a conditioning similar to the one used in~\cite{LW2021}. Let $e = e_t \in M^{opt}$. We consider the event $A_e = [f(e \,|\, S^{(t-1)}) \leq (1 + \varepsilon)\sum_{u \in e}w_u^*(e)]$. Notice that if $A_e$ holds, $e$ is not part of $S$ and $w^*_v(e)=w_v(e)$ for each $v \in e$. Now by submodularity, 
			\begin{align*}
				\mathbb{E}[f(e \,|\, S)\,|\,A_e] \leq \mathbb{E}[f(e \,|\, S^{(t-1)})\,|\,A_e] &\leq \mathbb{E}\left[(1 + \varepsilon)\sum_{u \in e}w_u^*(e)\,|\,A_e\right]\\
				&= (1 + \varepsilon)\mathbb{E}\left[\sum_{u \in e} w_u(e)\,|\,A_e\right]
			\end{align*}
			Next we consider the condition $\overline{A_e}$ (where the edge $e$ should be added into $S$ with probability $p$). As $p \leq \frac{1}{2}$, and for $e = e_t = \{u, v\}$ we have the inequality $w_u(e) + w_v(e) = 2 f(e\,|\,S^{(t-1)}) - w_u^*(e) - w_v^*(e)$ when $e$ is added to $S$, we get
			\begin{align*}
				\mathbb{E}\left[\sum_{u \in e}w_u(e)\,|\,\overline{A_e}\right] &= p \cdot \mathbb{E}\left[2 f(e \,|\, S^{(t-1)}) - \sum_{u \in e} w_u^*(e)\,|\,\overline{A_e}\right] + (1 - p)\cdot\mathbb{E}\left[\sum_{u \in e} w_u^*(e)\,|\,\overline{A_e}\right]\\
				&= 2p \cdot \mathbb{E}\left[f(e \,|\, S^{(t-1)})\,|\,\overline{A_e}\right] + (1 - 2p)\cdot\mathbb{E}\left[\sum_{u \in e} w_u^*(e)\,|\,\overline{A_e}\right]\\
				&\geq 2p \cdot \mathbb{E}\left[f(e \,|\, S^{(t-1)})\,|\,\overline{A_e}\right].
			\end{align*}
			As a result, for $p \geq \frac{1}{3 + 2 \varepsilon}$, 
			\begin{align*}
			(1 + \varepsilon)\mathbb{E}\left[\sum_{u \in e} w_u(e)\,|\,\overline{A_e}\right]
				&\geq 2p(1 + \varepsilon)\cdot \mathbb{E}\left[f(e \,|\, S^{(t-1)})\,|\,\overline{A_e}\right]\\
				&\geq (1 - p)\cdot \mathbb{E}\left[f(e \,|\, S^{(t-1)})\,|\,\overline{A_e}\right]\\
				&\geq \mathbb{E}\left[f(e \,|\, S)\,|\,\overline{A_e}\right],
			\end{align*}
			where the last inequality holds  
			because with probability $p$ we have $f(e \,|\, S) = 0$ 
			(as $e \in S$) 
			and with probability $1-p$, $f(e \,|\, S) \leq f(e \,|\, S^{(t-1)})$ (by submodularity). 
			
			So we have proven inequality (\ref{ineg_fin}) and it follows that 
			\begin{align*}
				\mathbb{E}\left[f(M^{opt} \,|\, S)\right] &\leq \sum_{e \in M^{opt}} \mathbb{E}\left[f(e \,|\, S)\right]
				\leq (1 + \varepsilon) \sum_{e \in M^{opt}} \mathbb{E}\left[\sum_{u \in e} w_u(e)\right]\\	
                &= (1 + \varepsilon) \sum_{u \in V} \sum_{e \in M^{opt} \cap \delta(u)} \mathbb{E}\left[w_u(e)\right]
                \leq (1 + \varepsilon) \sum_{u \in V} \mathbb{E}\left[w_u(Q_u)\right]\\
&=2(1 + \varepsilon)\mathbb{E}[g(S)],  
			\end{align*}
		where in the last inequality we use the fact that Lemma~\ref{lemma_w_T_w_opt} holds for every realization of randomness. 
			
		Now the bounds on $\mathbb{E}\left[f(M^{opt} \,|\, S)\right]$ and the bound on $\mathbb{E}\left[f(S \,|\, \emptyset)\right]$ argued in the beginning give the proof 
		of the lemma.
		\end{proof}
		
		Then we will use the following lemma, due to due to Buchbinder et al.~\cite{Buchbinder2014a}:
		
		\begin{lemma}[Lemma 2.2 in~\cite{Buchbinder2014a}] \label{lemma_ineg_h}
			Let $h : 2^N \rightarrow \mathbb{R}_+$ be a non-negative submodular function, and let $B$ be a
random subset of $N$ containing every element of $N$ with probability at most $p$ (not necessarily
independently), then $\mathbb{E}[h(B)] \geq (1 - p) h(\emptyset)$.
		\end{lemma}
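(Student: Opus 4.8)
The plan is to deduce this from the convexity of the Lovász extension of $h$. Recall that the Lovász extension $\hat h \colon [0,1]^N \to \mathbb{R}$ of a set function $h$ is defined by $\hat h(x) = \int_0^1 h(\{i \in N : x_i \ge \theta\})\, d\theta$. It has two properties I will use: identifying each set $S$ with its indicator vector $\mathbf{1}_S \in \{0,1\}^N$, one has $\hat h(\mathbf{1}_S) = h(S)$ for every $S$; and, by the classical theorem of Lovász, $\hat h$ is convex on $[0,1]^N$ precisely because $h$ is submodular. I would invoke the latter as a known fact rather than reprove it.

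The first step is Jensen's inequality. Set $q_i := \Pr[i \in B]$, so $q_i \le p$ for every $i$ by hypothesis, and let $q = (q_i)_{i \in N} \in [0,1]^N$. The random vector $\mathbf{1}_B$ takes values in the convex set $[0,1]^N$ and has mean $q$; since $\hat h$ is convex there and $h(B) = \hat h(\mathbf{1}_B)$, Jensen yields
\[
\mathbb{E}[h(B)] \;=\; \mathbb{E}\bigl[\hat h(\mathbf{1}_B)\bigr] \;\ge\; \hat h\bigl(\mathbb{E}[\mathbf{1}_B]\bigr) \;=\; \hat h(q).
\]
Only the marginals $q_i$ enter here, which is exactly why the ``not necessarily independently'' clause causes no trouble.

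It remains to check $\hat h(q) \ge (1-p)\, h(\emptyset)$. Splitting the defining integral at $\theta = p$: for $\theta \in (p,1]$ every coordinate obeys $q_i \le p < \theta$, so $\{i : q_i \ge \theta\} = \emptyset$ and the integrand equals $h(\emptyset)$, contributing exactly $(1-p)\, h(\emptyset)$; for $\theta \in [0,p]$ the integrand is $h$ of some subset, hence nonnegative, contributing at least $0$. Adding the two parts gives $\hat h(q) \ge (1-p)\, h(\emptyset)$, and combined with the previous display this proves the lemma. The only delicate ingredient is the convexity of $\hat h$; modulo citing that, the argument is just Jensen's inequality and a one-line integral estimate. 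If one preferred to avoid the Lovász extension, one could instead note that $\mathbb{E}[h(B)]$ is linear in the law of $B$ over the polytope of distributions whose marginals are all at most $p$ and argue about its vertices, but that is more cumbersome and I would not take that route.
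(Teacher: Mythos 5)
Your proof is correct: the convexity of the Lov\'asz extension $\hat h$ (Lov\'asz's theorem for submodular $h$), Jensen's inequality applied to the random indicator vector $\mathbf{1}_B$ (which indeed only needs the marginals $q_i \le p$, so the possible dependence among elements is irrelevant), and the split of the defining integral at $\theta = p$ together give $\mathbb{E}[h(B)] \ge \hat h(q) \ge (1-p)\,h(\emptyset)$, with non-negativity of $h$ used exactly where needed on $[0,p]$ and the empty level set on $(p,1]$. Note, however, that there is nothing in this paper to compare against: the authors do not prove this statement at all, but import it verbatim as Lemma~2.2 of Buchbinder et al.~\cite{Buchbinder2014a}. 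So your argument serves as a self-contained substitute for the citation; the proof in the cited source is a short elementary argument that does not go through the Lov\'asz extension, whereas your route trades that elementary reasoning for the (standard, but heavier) convexity theorem plus Jensen, which arguably makes the role of the marginal bound $p$ and the non-negativity hypothesis especially transparent. Either way, the lemma holds as stated and your derivation is sound; if you wanted to avoid citing Lov\'asz's convexity theorem as a black box, the alternative you sketch at the end (or the original element-by-element argument) would be the way to go.
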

		
		\begin{theorem}
			Algorithm~\ref{streaming_part_submod} run with $p = \frac{1}{3 + 2\varepsilon}$ provides a set $S$, upon which Algorithm~\ref{greedy_part} outputs a $b$-matching $M$ satisfying:
			\[\left(\frac{4 \varepsilon^2 + 8\varepsilon + 3}{2 \varepsilon}\right) \mathbb{E}[f(M)] \geq f(M^{opt}).\]
			This ratio is optimized when $\varepsilon = \frac{\sqrt{3}}{2}$, which gives a $4 + 2\sqrt{3}$ approximation.
		\end{theorem}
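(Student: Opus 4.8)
The plan is to assemble the three ingredients already available: the gain bound of Lemma~\ref{lemma_eg_ef}, the greedy guarantee of Lemma~\ref{lemma_greedy_submod} (whose proof is realization-wise and uses only submodularity and the queue structure, hence applies verbatim in the randomized non-monotone setting), and the inequality of Buchbinder et al.\ stated as Lemma~\ref{lemma_ineg_h}. The single new observation needed is a lower bound on $\mathbb{E}[f(S\cup M^{opt})]$.

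First I would produce that bound. Consider $h:2^E\to\mathbb{R}_+$ defined by $h(T)=f(T\cup M^{opt})$; it is non-negative and submodular. Any fixed edge $e$ enters the random set $S$ only if it passes the threshold test on Line~7 \emph{and} the coin flip $\pi$ succeeds, so $\Pr[e\in S]\le p$; these events need not be independent across edges, which is precisely the generality that Lemma~\ref{lemma_ineg_h} permits. Applying it with $B=S$ gives $\mathbb{E}[f(S\cup M^{opt})]=\mathbb{E}[h(S)]\ge (1-p)\,h(\emptyset)=(1-p)\,f(M^{opt})$, hence $\mathbb{E}[f(S\cup M^{opt}\,|\,\emptyset)]\ge (1-p)f(M^{opt})-f(\emptyset)$.

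Next I would chain the estimates. Write $C:=2(1+\varepsilon)+\frac{1+\varepsilon}{\varepsilon}$. Lemma~\ref{lemma_eg_ef} combined with the previous paragraph gives $C\,\mathbb{E}[g(S)]\ge (1-p)f(M^{opt})-f(\emptyset)$, while Lemma~\ref{lemma_greedy_submod}, taken in expectation, gives $\mathbb{E}[f(M)]\ge \mathbb{E}[g(S)]+f(\emptyset)$. Multiplying the latter by $C\ge 1$ and using $f(\emptyset)\ge 0$, one gets $C\,\mathbb{E}[f(M)]\ge C\,\mathbb{E}[g(S)]+f(\emptyset)\ge (1-p)f(M^{opt})$, so the approximation ratio is at most $\frac{C}{1-p}$. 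Then the arithmetic: with $p=\frac{1}{3+2\varepsilon}$ one has $1-p=\frac{2(1+\varepsilon)}{3+2\varepsilon}$, and since $C=(1+\varepsilon)\cdot\frac{2\varepsilon+1}{\varepsilon}$ the ratio collapses to $\frac{(2\varepsilon+1)(2\varepsilon+3)}{2\varepsilon}=\frac{4\varepsilon^2+8\varepsilon+3}{2\varepsilon}$, the claimed bound. I would also check $p\in[\frac{1}{3+2\varepsilon},\frac12]$ so that the hypotheses of this section hold: the left inequality is an equality, and the right holds because $3+2\varepsilon\ge 2$. Finally, writing $\frac{4\varepsilon^2+8\varepsilon+3}{2\varepsilon}=2\varepsilon+4+\frac{3}{2\varepsilon}$ and differentiating, the minimum is at $\varepsilon^2=3/4$, i.e.\ $\varepsilon=\frac{\sqrt 3}{2}$, where the value is $\sqrt 3+4+\sqrt 3=4+2\sqrt 3$.

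There is no deep obstacle here. The points requiring care are: (a) justifying $\Pr[e\in S]\le p$ uniformly so that the non-independent version of Lemma~\ref{lemma_ineg_h} legitimately applies to $h(T)=f(T\cup M^{opt})$; (b) confirming that Lemma~\ref{lemma_greedy_submod} may be invoked in the randomized regime — its proof holds for every realization and is purely combinatorial, so taking expectations is sound; and (c) carrying the non-negative slack $f(\emptyset)\ge 0$ correctly through the chain (using $C\ge 1$) so that the additive constants do not degrade the multiplicative guarantee.
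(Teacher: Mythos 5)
Your proposal is correct and follows essentially the same route as the paper: combine Lemma~\ref{lemma_eg_ef} with the greedy guarantee of Lemma~\ref{lemma_greedy_submod}, apply Lemma~\ref{lemma_ineg_h} to $h(X)=f(X\cup M^{opt})$ using $\Pr[e\in S]\le p$, and carry out the same arithmetic at $p=\frac{1}{3+2\varepsilon}$, $\varepsilon=\frac{\sqrt{3}}{2}$. The only difference is cosmetic: you make explicit the $f(\emptyset)$ bookkeeping (via $C\ge 1$ and $f(\emptyset)\ge 0$) that the paper passes over silently, which is a sound refinement, not a new argument.
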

		
		\begin{proof}
			Combining Lemma~\ref{lemma_greedy_submod} and Lemma~\ref{lemma_eg_ef}, 
			\[\left( 2(1 + \varepsilon) + \frac{1 + \varepsilon}{\varepsilon}\right)\mathbb{E}[f(M)] \geq \mathbb{E}[f(S \cup M^{opt})].\]
			
			Now we can apply Lemma~\ref{lemma_ineg_h} by defining
			$h : 2^E \rightarrow \mathbb{R}_+$ as, for any $X \subseteq E$, $h(X) = f(X \cup M^{opt})$ (trivially $h$ is non-negative and submodular). As any element of $E$ has the probability of at most $p$ to appear in $S$, we derive $\mathbb{E}[f(S \cup M^{opt})] = \mathbb{E}[h(S)] \geq (1 - p) h(\emptyset) = (1-p) f(M^{opt})$. Therefore,
			\[\left( 3 + 2 \varepsilon + \frac{1}{\varepsilon}\right)\mathbb{E}[f(M)] \geq \mathbb{E}[f(S \cup M^{opt})] \geq (1-p) f(M^{opt}).\]
			As $p = \frac{1}{3 + 2\epsilon}$, we have
			\[\left(\frac{4 \varepsilon^2 + 8\varepsilon + 3}{2 \varepsilon}\right)\cdot \mathbb{E}[f(M)] \geq f(M^{opt}).\]
			This ratio is optimized when $\varepsilon = \frac{\sqrt{3}}{2}$, which gives a $4 + 2\sqrt{3} \approx 7.464$ approximation.
		\end{proof}

\section{Matroid-constrained Maximum Submodular $b$-Matching}
\label{sec:matroid}
    In this section we consider the more general case of a $b$-matching on a $k$-uniform hypergraph and 
    we impose a matroid constraint $\mathcal{M} = (E, \mathcal{I})$. 
    A matching $M \subseteq E$ is feasible only if it respects 
    the capacities of the vertices and is an independent set in the matroid $\mathcal{M}$.

	\subsection{Description of the Algorithm}
	
    	For the streaming phase, our algorithm, formally described in Algorithm~\ref{streaming_part_matroid_submod}, is a generalization of Algorithm~\ref{streaming_part_submod} in the last section. We let $\alpha = 1 + \varepsilon \geq 1$. For the matroid $\mathcal{M}$, we maintain a set of queues $Q_{\mathcal{M}}=\{Q_{\mathcal{M},1},\cdots, Q_{\mathcal{M},r_{\mathcal{M}}}\}$, where $r_{\mathcal{M}}$ is the rank of $\mathcal{M}$, to store the elements of $S$ (so if an edge $e$ is part of $S$, it appears in a total of $k+1$ queues, $k$ of them corresponding to the vertices in $e$, and the remaining one corresponding to the matroid). To facilitate the presentation, we write $Top(Q_{\mathcal{M}})$ to denote the set of the elements on top of the queues of $Q_{\mathcal{M}}$. Lines 8-13 will guarantee that $Top(Q_{\mathcal{M}})$ is an independent set at any time (in fact a maximum weight independent set among all elements arrived so far, according to the reduced weights---see Lemma~\ref{lem:maximum_weight_base}). Notice that in the 
    	end of the algorithm,
    	$Top(Q_{\mathcal{M}})$ is not always fully-ranked (\emph{i.e.} a base 
    	of $\mathcal{M}=(E, \mathcal{I}$), but it is always fully-ranked 
    	in the restricted matroid $\mathcal{M}|S=(S, \mathcal{I}_S)$, 
    	where $\mathcal{I}_S= \{I \subseteq S: I \in \mathcal{I}\}$).

	    Differently from the previous two sections 
	    where we have applied the greedy algorithm 
	    based on 
	    the reverse edge arrival order (Algorithm~\ref{greedy_part}), here we introduce a new idea 
	    to build a feasible $b$-matching. Observe that 
	    after the streaming phase, $Top(Q_{\mathcal{M}})$ is 
	    an independent set (more precisely, a base of $\mathcal{M}|S$). 
	    Let $M=Top(Q_{\mathcal{M}})$. If $M$ is a $b$-matching in 
	    the hypergraph, we are done. If not, some vertex $v$ must have more than $b_v$ incident edges in $M$. We discard one such edge $e$ in $M$ (the choice of $e$ will depend 
	    on the edge arrival order, see below) and 
	    replace it with another edge $e'$ that arrived earlier than $e$ (sometimes $e$ is not replaced at all, as we will explain below). We guarantee $M \cup \{e'\} \backslash \{e\}$ 
	    remains independent in $\mathcal{M}$. The same procedure 
	    is repeated iteratively, where we replace newer elements of the independent set $M$ by the older elements, 
	    until $M$ becomes a feasible $b$-matching in the hypergraph. A formal description of this procedure is provided in Algorithm~\ref{greedy_part_matroid_submod}.
	    
	    We now give a more detailed description. 
	    In Algorithm~\ref{greedy_part_matroid_submod}, an element $e$ is \emph{dominated} in the hypergraph by an element $e_d \in M$ if there exists $v \in V$ and $1 \leq i \leq b_v$ such that $e$ and $e_d$ are both in $Q_{v, i}$ and $e_d$ arrived later than $e$ in $S$. The algorithm starts with the maximum weight independent set $M = Top(Q_{\mathcal{M}})$ in $\mathcal{M}|S$, such that $w_{\mathcal{M}}(M) = g(S)$. Then, at each step, if the independent family $M$ is not a $b$-matching, then some edge in $M$ must be dominated by some other edge in $M$. We take among these dominated elements the latest one that arrived in $S$, and we replace it by an element $e'$ of $C_e \backslash \{e\}$ that is not spanned by $M \backslash \{e\}$ (where $C_e$ is defined as the circuit that is created by $e$ in $Top(Q_{\mathcal{M}})$ when it is inserted, see Line~11 in Algorithm~\ref{streaming_part_matroid_submod}; if $C_e = \emptyset$ then we remove $e$ without replacing it). Such an element $e'$ always exists, otherwise $e$ would have been spanned by $M \backslash \{e\}$ (see Lemma~\ref{lem:e-well-defined}). Moreover this element satisfies $w_{\mathcal{M}}(e') \geq w_{\mathcal{M}}^*(e)$ (because $w_{\mathcal{M}}^*(e)$ is the minimum reduced weight among the reduced weights of the elements in $C_e \backslash \{e\}$) and we know that $g(e)$ is already ``paid for'' by the element $e_d$ which dominates $e$ (if $C_e = \emptyset$ then $w_{\mathcal{M}}^*(e) = 0$ and there is no need to add any element $e'$ to compensate the loss). The element $e'$ that replaces $e$ arrived earlier than $e$ in $S$ because it is part of the circuit that was created when $e$ arrived, so this newly arrived element $e'$ cannot dominate $e_d$. As $e$ was the latest dominated element, all future added elements will be elements that arrived earlier than $e$ in $S$, so $e_d$ will never be dominated in the next steps and $g(e)$ will be ``paid for'' by $e_d$ until the end of the algorithm.
	
	    Regarding the memory consumption of the algorithm, it is easy to see that the number of variables used will be $O(\log_{1 + \varepsilon}(W/\varepsilon) \cdot \min\{|M_{max}|, r_{\mathcal{M}}\})$, where $|M_{max}|$ denotes the maximum cardinality matching and $W$ denotes the maximum quotient $\frac{f(e \,|\, Y)}{f(e' \,|\, X)}$, for $X \subseteq Y \subseteq E$, $e, e' \in E$, $f(e' \,|\, X) > 0$ 
	    (the idea is similar to the one in Appendix~\ref{appendix_memory}). 
	    In fact, we do not need to actually store the circuits $C_e$, as we can get these circuits back during the construction phase by going back in time in the structure of the queues $Q_{\mathcal{M}}$.

		\begin{algorithm}
		\caption{Streaming phase for Matroid-constrained Maximum Submodular $b$-Matching}\label{streaming_part_matroid_submod}
		\begin{algorithmic}[1]
		\State $S \gets \emptyset$
		\State $Q_{\mathcal{M}} \gets (Q_{\mathcal{M},1} = \emptyset, \cdots, Q_{\mathcal{M}, r_{\mathcal{M}}} = \emptyset)$
		\State $\forall v \in V : Q_v \gets (Q_{v,1} = \emptyset, \cdots, Q_{v, b_v} = \emptyset)$ 
		\For{$e = e_t,\, 1 \leq t \leq |E|$ an edge from the stream}
			\For{$u \in e$}
				\State $w_u^*(e) \gets \min \{w_u(Q_{u, q}.top()) : 1 \leq q \leq b_u\}$
				\State $q_u(e) \gets q \text{ such that $w_u(Q_{u, q}.top()) = w_u^*(e)$}$
			\EndFor
			
			\If{$Top(Q_{\mathcal{M}})\cup \{e\} \in \mathcal{I}$, \emph{i.e.}  $C_e = \emptyset$}
			   \State $w_{\mathcal{M}}^*(e) \gets 0$
			   \State $q_{\mathcal{M}}(e) \gets q \text{ such that $Q_{\mathcal{M}, q}$ is empty}$
			\EndIf
			\If{$Top(Q_{\mathcal{M}})\cup \{e\}$ contains a circuit $C_e$}
			  \State $w_{\mathcal{M}}^*(e) \gets \min_{e' \in C\backslash \{e\}}w_{\mathcal{M}}(e')$
			  \State $q_{\mathcal{M}}(e) \gets q$ such that $w_{\mathcal{M}}(Q_{\mathcal{M}, q}.top())$ is equal to $\min_{e' \in C\backslash \{e\}}w_{\mathcal{M}}(e')$ and $Q_{\mathcal{M},q}.top() \in C$
			\EndIf   
		
			\If {$f(e \,|\, S) > \alpha(\sum_{u \in e} w_u^*(e) + w_{\mathcal{M}}^*(e))$}
			    \State $\pi \gets \text{a random variable equal to $1$ with probability $p$ and $0$ otherwise}$
				\If{$\pi = 0$} \textbf{continue} \Comment skip edge $e$ with probability $1 - p$
				\EndIf
				\State $g(e) \gets f(e \,|\, S) - \sum_{u \in e} w_u^*(e) - w_{\mathcal{M}}^*(e)$
				\State $S \gets S \cup \{e\}$
				\For{$u \in e$}
					\State $w_u(e) \gets w_u^*(e) + g(e)$
					\State $r_u(e) \gets Q_{u, q_u(e)}.top()$
					\State $Q_{u, q_u(e)}.push(e)$
				\EndFor
				\State $w_{\mathcal{M}}(e) \gets w_{\mathcal{M}}^*(e) + g(e)$
				\State $r_{\mathcal{M}}(e) \gets Q_{{\mathcal{M}}, q_{\mathcal{M}}(e)}.top()$
				\State $Q_{{\mathcal{M}}, q_{\mathcal{M}}(e)}.push(e)$
			\EndIf
		\EndFor
		\end{algorithmic}
		\end{algorithm}
		
		\begin{algorithm}
		\caption{Construction phase for Matroid-constrained Maximum Submodular $b$-Matching}\label{greedy_part_matroid_submod}
		\begin{algorithmic}[1]
		\State $M \gets Top(Q_{\mathcal{M}})$
		\While{some element in $M$ is dominated in the hypergraph by another element in $M$}
		    \State let $e$ be the latest dominated element in $M$
		    \If{$C_e = \emptyset$}
		    \State $M \gets M \backslash \{e\}$
		    \Else
		    \State let $e'$ be an element in $C_e \backslash (\text{span}(M \backslash \{e\}) \cup \{e\})$ 
		    \State $M \gets (M \cup \{e'\})\backslash \{e\}$
		    \EndIf
		\EndWhile
		\Return $M$
		\end{algorithmic}
		\end{algorithm}
		
	\subsection{Analysis for Linear Function Maximization}
	    In this section, $p = 1$. For each discarded elements $e \in E \backslash S$, similarly as before, we set $w_{\mathcal{M}}(e) = w_{\mathcal{M}}^*(e)$. Moreover, for $e = e_t$ we set  $w(e) = f(e\,|\,\emptyset) = f(e\,|\, S^{(t)})$ the weight of an element.
	    
	    We introduce some basic facts in matroid theory, e.g., see \cite{Sch2003}. 
	    
	    \begin{proposition} 
	    \label{pro:matroid}
	        Given a matroid $\mathcal{M}=(E,\mathcal{I})$ with weight $w:E \rightarrow \mathbb{R_+}$, then 
	    
            \begin{enumerate}[(i)]
    	        \item An independent set $I \in \mathcal{I}$ is a maximum
    	        weight base if and only if, for every element 
    	        $e \in E \backslash I$, $I\cup \{e\}$ contains a circuit 
    	        and $w(e) \leq \min_{e' \in C\backslash \{e\}}w(e')$.
    	        \item If $I \in \mathcal{I}$, $I\cup \{e\}$ contains a circuit 
    	        $C_1$ and $I\cup \{e'\}$ contains a circuit 
    	        $C_2$ and $C_1$ and $C_2$ contain a common element $e'' \in I$, then there exists another circuit 
    	        $C_3 \subseteq (C_1 \cup C_2)\backslash \{e''\}.$
    	        \item If $X, Y \subseteq E$, $z \in E$ are such that $X \subseteq \text{span}(Y)$ and $z \in \text{span}(X)$, then $z \in \text{span}(Y)$.
    	    \end{enumerate}
	    
	    \end{proposition}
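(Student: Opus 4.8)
The plan is to deal with the two parts separately, using only fundamental circuits and the standard exchange/elimination axioms; since nothing here is new, the proof will be short. I will use repeatedly that when $I$ is a base and $e \in E \setminus I$, the set $I \cup \{e\}$ contains a \emph{unique} circuit, the fundamental circuit $C_I(e)$, which contains $e$ and satisfies $C_I(e) \setminus \{e\} \subseteq I$, and that $(I \cup \{e\}) \setminus \{e'\}$ is again a base for every $e' \in C_I(e)$.

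For the forward direction of \emph{(i)}, I will fix $e \in E \setminus I$; since $I$ is a base, $I \cup \{e\}$ is dependent and contains $C := C_I(e)$. For each $e' \in C \setminus \{e\}$ the set $I - e' + e$ is a base, so maximality of $I$ gives $w(I) \geq w(I - e' + e) = w(I) - w(e') + w(e)$, i.e.\ $w(e) \leq w(e')$; minimizing over $e'$ yields the claimed inequality. For the converse, I will first note that ``$I \cup \{e\}$ contains a circuit for every $e \in E \setminus I$'' already says $I$ is a maximal independent set, hence a base. Then I will run a minimal-counterexample argument: among all maximum-weight bases pick $B$ with $|B \cap I|$ as large as possible; if $B \neq I$, choose $e \in B \setminus I$ and apply the symmetric basis-exchange property to obtain $e' \in I \setminus B$ with both $B - e + e'$ and $I - e' + e$ bases. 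The second membership forces $e' \in C_I(e) \setminus \{e\}$, so $w(e') \geq w(e)$ by hypothesis and hence $w(B - e + e') \geq w(B)$; thus $B - e + e'$ is a maximum-weight base with $|(B - e + e') \cap I| = |B \cap I| + 1$, contradicting the choice of $B$. Hence $I = B$ is a maximum-weight base.

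For \emph{(ii)}, I will observe that $C_1$ and $C_2$ must be the fundamental circuits $C_I(e)$ and $C_I(e')$ (by uniqueness of the circuit in $I \cup \{e\}$, resp.\ $I \cup \{e'\}$), so $e \in C_1$ and $e' \in C_2$. Since the two elements are distinct ($e \neq e'$, as in all our applications) and $C_2 \subseteq I \cup \{e'\}$ with $e \notin I$, we get $e \notin C_2$, whence $C_1 \neq C_2$. The conclusion is then immediate from the weak circuit-elimination axiom applied to the distinct circuits $C_1, C_2$ and their common element $e''$. The only step beyond pure bookkeeping is the converse of \emph{(i)}, where I rely on the \emph{symmetric} (strong) basis-exchange theorem to produce a single element $e'$ that works simultaneously for $B$ and for $I$; if one prefers not to cite that theorem, the same contradiction can be obtained by writing $B$ as a chain of fundamental-circuit swaps issuing from $I$ and tracking that the weight never increases, and that bookkeeping becomes the delicate point.
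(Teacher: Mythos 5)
Your proof is correct, but note that the paper itself offers no proof of Proposition~\ref{pro:matroid}: it is presented as a collection of ``basic facts in matroid theory'' with a pointer to the literature (see \cite{Sch2003}), so there is no in-paper argument to compare against. What you supply is the standard self-contained derivation: for (i), the forward direction via the fundamental circuit $C_I(e)$ and the exchanged bases $I - e' + e$ combined with maximality of $w(I)$, and the converse via a maximum-weight base $B$ chosen to maximize $|B \cap I|$ together with symmetric (Brualdi) basis exchange --- your observation that $I - e' + e$ being a base forces $e' \in C_I(e) \setminus \{e\}$, so the hypothesis $w(e') \geq w(e)$ applies, is exactly the right hook, and citing the symmetric exchange theorem is legitimate, so the alternative bookkeeping you sketch is unnecessary; for (ii), uniqueness of the fundamental circuits in $I \cup \{e\}$ and $I \cup \{e'\}$ plus the weak circuit-elimination axiom. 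One small caveat you handle correctly: as literally stated, (ii) does not exclude $e = e'$, in which case $C_1 = C_2$ and $(C_1 \cup C_2) \setminus \{e''\}$ is independent, so no $C_3$ exists; your parenthetical assumption $e \neq e'$ (which indeed gives $e \notin C_2$, hence $C_1 \neq C_2$, so elimination applies) is the intended reading and matches how the proposition is used in the paper, namely in the inductive proof of Lemma~\ref{lem:maximum_weight_base}.
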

	    
		\begin{lemma} \label{lem:maximum_weight_base}	
		    Let $\{e_1,\cdots, e_t\}$ be the set of edges arrived so far. Then the top elements $Top(Q_{\mathcal{M}}) = Top(Q_{\mathcal{M}}^{(t)})$ forms a maximum weight base in $\{e_1,\cdots, e_t\}$ with regard to the reduced weight $w_{\mathcal{M}}$.
		\end{lemma}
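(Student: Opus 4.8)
The statement is that $Top(Q_{\mathcal M})$ is always a maximum weight base of the edges seen so far, with respect to the reduced weights $w_{\mathcal M}$. I would prove this by induction on the number $t$ of arrived edges, using the characterization of maximum weight bases in Proposition~\ref{pro:matroid}(i): an independent set $I$ is a maximum weight base of a set $N$ if and only if for every $e \in N \setminus I$, the set $I \cup \{e\}$ contains a circuit $C$ with $w_{\mathcal M}(e) \le \min_{e' \in C \setminus \{e\}} w_{\mathcal M}(e')$. The base case $t=0$ is trivial (the empty set). For the inductive step, let $I = Top(Q_{\mathcal M}^{(t-1)})$ be a maximum weight base of $\{e_1,\dots,e_{t-1}\}$, and consider the arrival of $e_t$.

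\textbf{Key steps.} First I would observe that $I$ is independent throughout, since each queue of $Q_{\mathcal M}$ holds at most one ``top'' element and by construction (Lines 8--17) a newly pushed edge either goes to an empty queue — only when $Top(Q_{\mathcal M}) \cup \{e\} \in \mathcal I$ — or replaces the top of a queue whose old top lay on the circuit $C$ formed by $Top(Q_{\mathcal M}) \cup \{e\}$; in the latter case the new top set is $(I \setminus \{e'\}) \cup \{e_t\}$ for some $e' \in C$, which is independent by the standard circuit-exchange property. So $Top(Q_{\mathcal M}^{(t)})$ is always independent and has size $r_{\mathcal M}$ once enough independent edges have arrived. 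Second, I would split on whether $e_t$ is actually stored. If $e_t$ is discarded (its marginal fails the threshold on Line~18), then $w_{\mathcal M}(e_t) = w_{\mathcal M}^*(e_t)$; and if $Top(Q_{\mathcal M}) \cup \{e_t\}$ contains a circuit $C$, then $w_{\mathcal M}^*(e_t) = \min_{e' \in C \setminus \{e_t\}} w_{\mathcal M}(e')$ by Line~15, so the Proposition~\ref{pro:matroid}(i) condition is met for $e_t$ against the unchanged base $I$; and if instead $I \cup \{e_t\} \in \mathcal I$ this cannot happen once $I$ is a base, and before that $e_t$ being independent with $I$ would only enlarge it — so this sub-case needs the small observation that if $e_t$ is discarded but extends $I$, then $I$ was not yet a base and we must argue $I \cup \{e_t\}$ is still optimal; actually the cleaner route is that a discarded $e_t$ with $I\cup\{e_t\}\in\mathcal I$ only occurs when $w_{\mathcal M}^*(e_t)=0$ so its reduced weight would be $0$ and dropping it loses nothing — I would phrase this carefully.

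\textbf{The stored case.} If $e_t$ is stored, I would handle the two branches. If $Top(Q_{\mathcal M}) \cup \{e_t\} \in \mathcal I$: then $e_t$ is pushed to an empty queue with $w_{\mathcal M}^*(e_t)=0$, so $w_{\mathcal M}(e_t) = g(e_t) \ge 0$ and the new base is $I \cup \{e_t\}$. To verify it is still a maximum weight base of $\{e_1,\dots,e_t\}$, I would check the characterization holds for every $f \in \{e_1,\dots,e_{t-1}\} \setminus (I \cup \{e_t\})$: the old circuit $C_f \subseteq I \cup \{f\}$ is still a circuit in $(I \cup \{e_t\}) \cup \{f\}$ and the inequality $w_{\mathcal M}(f) \le \min_{e' \in C_f \setminus \{f\}} w_{\mathcal M}(e')$ is inherited from the inductive hypothesis. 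If instead $Top(Q_{\mathcal M}) \cup \{e_t\}$ contains a circuit $C$: the new base is $I' = (I \setminus \{e''\}) \cup \{e_t\}$ where $e'' = Q_{\mathcal M, q_{\mathcal M}(e_t)}.top()$ is the minimum-reduced-weight element of $C \setminus \{e_t\}$, and $w_{\mathcal M}(e_t) = w_{\mathcal M}^*(e_t) + g(e_t) = w_{\mathcal M}(e'') + g(e_t) \ge w_{\mathcal M}(e'')$. Here the real work is verifying the max-weight-base condition for $I'$ against both $e''$ (immediate: $C$ witnesses it since $e''$ is the minimizer over $C\setminus\{e_t\}$, and $w_{\mathcal M}(e'') \le w_{\mathcal M}(e_t)$) and every other old non-base element $f$; for the latter, the circuit $C_f \subseteq I \cup \{f\}$ may use $e''$, in which case I invoke Proposition~\ref{pro:matroid}(ii) with $C_f$ and $C$ to produce a circuit $C_3 \subseteq (C_f \cup C) \setminus \{e''\} \subseteq I' \cup \{f\}$, and then bound $w_{\mathcal M}(f)$ below the minimum reduced weight on $C_3$ by combining the inductive inequality on $C_f$ with the fact that the smallest reduced weight on $C$ is $w_{\mathcal M}(e'') \le w_{\mathcal M}(e_t)$ and $w_{\mathcal M}(e'')$ was $\ge w_{\mathcal M}(f)$ only if... — this chaining of inequalities is the main obstacle.

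\textbf{Main obstacle.} I expect the delicate point to be the last one: when a new edge $e_t$ bumps some $e''$ out of the base and a previously-excluded edge $f$ had its witnessing circuit passing through $e''$, one must produce a new witnessing circuit for $f$ avoiding $e''$ (via Proposition~\ref{pro:matroid}(ii)) and re-establish $w_{\mathcal M}(f) \le \min_{e' \in C_3 \setminus \{f\}} w_{\mathcal M}(e')$. The inequality bookkeeping — tracking that every element of $C_3 \setminus \{f\}$ lies on $C_f \setminus \{f, e''\}$ or on $C \setminus \{e''\}$, that the former are all $\ge w_{\mathcal M}(f)$ by induction, and that the latter are all $\ge w_{\mathcal M}(e'') \ge w_{\mathcal M}(f)$ (the last step using $w_{\mathcal M}(f) \le w_{\mathcal M}(e'')$, which itself follows from $e''$ being on $C_f$ and the inductive inequality) — is routine but must be done with care, and is where a sloppy proof would break. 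Everything else (independence, the discarded-edge case, the empty-queue branch) is straightforward.
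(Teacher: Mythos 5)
Your proposal is correct and follows essentially the same route as the paper, which proves the lemma by induction on the arrival index using Proposition~\ref{pro:matroid}: parts (i) and (ii) are used exactly as you describe, and your inequality chain in the exchange case (every element of $C_3 \setminus \{f\}$ lies in $C_f \setminus \{f, e''\}$ or $C \setminus \{e''\}$, bounded below by $w_{\mathcal{M}}(f)$ via the inductive hypothesis and via $w_{\mathcal{M}}(e'') \geq w_{\mathcal{M}}(f)$ respectively) goes through. Your care about the discarded-but-independent edge (which has reduced weight $0$, so $Top(Q_{\mathcal{M}})$ is a maximum-weight independent set even if not literally spanning) is a legitimate refinement of a detail the paper's one-line proof leaves implicit.
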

		
		\begin{proof}
		    This can be easily proved by induction on the number of edges arrived so far and Proposition~\ref{pro:matroid}(i) and (ii).
		\end{proof}
		
		\begin{corollary} \label{cor:maximum_weight_base}
			 At the end of the algorithm, $w_{\mathcal{M}}(Q_{\mathcal{M}}) \geq w_{\mathcal{M}}(M^{opt})$.
		\end{corollary}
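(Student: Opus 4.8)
The plan is to obtain the corollary directly from Lemma~\ref{lem:maximum_weight_base} together with the standard greedy fact that, under a non-negative weight function, a maximum-weight base of a matroid is at least as heavy as any independent set. So the three things to check are: (a) $w_{\mathcal{M}}$ is non-negative on all of $E$; (b) $Top(Q_{\mathcal{M}})$, as computed at the end of the stream, is a maximum-weight base of $\mathcal{M}$ on the full ground set $E$ with respect to $w_{\mathcal{M}}$; and (c) $M^{opt}$ is independent in $\mathcal{M}$, so the exchange argument applies.

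For (a): by construction, for an edge $e\in S$ we have $w_{\mathcal{M}}(e)=w_{\mathcal{M}}^*(e)+g(e)$ with $g(e)>0$ forced by the guard on Line~18 and $w_{\mathcal{M}}^*(e)\ge 0$; for a discarded edge $w_{\mathcal{M}}(e)=w_{\mathcal{M}}^*(e)$. In either Line~10 or Line~14, $w_{\mathcal{M}}^*(e)$ is either $0$ or a minimum of already-defined reduced weights over a circuit, hence non-negative by induction on the arrival order. Thus $w_{\mathcal{M}}$ is a non-negative weight function on $E$. For (b): since every edge of $E$ appears in the stream, applying Lemma~\ref{lem:maximum_weight_base} with $t=|E|$ shows $Top(Q_{\mathcal{M}})$ is a maximum-weight base of $\mathcal{M}$ with respect to $w_{\mathcal{M}}$. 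Moreover, by the definition of queue weights (extended to the matroid queues exactly as for the vertex queues), and since the top elements of the $r_{\mathcal{M}}$ distinct queues are distinct while empty queues contribute $w_{\mathcal{M}}(\bot)=0$, we have $w_{\mathcal{M}}(Q_{\mathcal{M}})=\sum_{q=1}^{r_{\mathcal{M}}}w_{\mathcal{M}}(Q_{\mathcal{M},q}.top())=w_{\mathcal{M}}(Top(Q_{\mathcal{M}}))$.

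For (c): $M^{opt}$ is a feasible solution, hence an independent set of $\mathcal{M}$; extend it to a base $B$ of $\mathcal{M}$. Then, using (b) and non-negativity from (a),
\[
w_{\mathcal{M}}(Q_{\mathcal{M}})=w_{\mathcal{M}}(Top(Q_{\mathcal{M}}))\ \ge\ w_{\mathcal{M}}(B)=w_{\mathcal{M}}(M^{opt})+w_{\mathcal{M}}(B\setminus M^{opt})\ \ge\ w_{\mathcal{M}}(M^{opt}),
\]
which is the claim. I do not anticipate a genuine obstacle here: the substantive content is already packaged in Lemma~\ref{lem:maximum_weight_base}, and the only points needing a moment of care are checking the non-negativity of the reduced weights (so that the base-versus-independent-set inequality is valid) and invoking Lemma~\ref{lem:maximum_weight_base} at the terminal time step, where $\{e_1,\dots,e_t\}=E$.
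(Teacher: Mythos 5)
Your argument is correct and is exactly the route the paper intends: the corollary is treated as immediate from Lemma~\ref{lem:maximum_weight_base} once one notes that $M^{opt}$ is independent in $\mathcal{M}$, that the reduced weights $w_{\mathcal{M}}$ are non-negative, and that $w_{\mathcal{M}}(Q_{\mathcal{M}})=w_{\mathcal{M}}(Top(Q_{\mathcal{M}}))$, all of which you verify. The only blemishes are cosmetic: your line references are slightly off (the threshold test is Line~14 and the assignments of $w_{\mathcal{M}}^*(e)$ are Lines~9 and~12 of Algorithm~\ref{streaming_part_matroid_submod}), but the content of those checks is right.
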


	    \begin{lemma}
	        It holds that $(1 + \varepsilon)(k + 1) g(S) \geq w(M^{opt})$.
	    \end{lemma}
	    
	    \begin{proof}
	        Using the fact that $w(e) \leq (1 + \varepsilon)\left(\sum_{u \in e}w_u(e) + w_{\mathcal{M}}(e)\right)$ (observe that it is true when $e$ is stored in $S$ as well as when it is not) we get:
	        \begin{align*}
	            w(M^{opt}) &\leq (1 + \varepsilon) \sum_{e \in M^{opt}} \left(\sum_{u \in e}w_u(e) + w_{\mathcal{M}}(e)\right)\\
	            &= (1 + \varepsilon) \sum_{v \in V} w_v(M^{opt} \cap \delta(v)) + (1 + \varepsilon) w_{\mathcal{M}}(M^{opt})\\
	            &\leq (1 + \varepsilon) \sum_{v \in V} g(S \cap \delta(v)) + (1 + \varepsilon) w_{\mathcal{M}}(Q_{\mathcal{M}})\\ 
	            &= (1 + \varepsilon) (kg(S) + g(S)).
	        \end{align*}
	        For the last inequality we use Corollary~\ref{cor:maximum_weight_base} and Proposition~\ref{lemma_g_w}.
	    \end{proof}
	    
	    Now we turn our attention to Algorithm~\ref{greedy_part_matroid_submod}. First, we check that this algorithm performs legitimate operations and terminates.
	    
	    \begin{lemma} \label{lem:e-well-defined}
	        \begin{enumerate}[(i)]
	            \item The element $e'$ defined at Line~7 in Algorithm~\ref{greedy_part_matroid_submod} is well-defined.
	            \item Through the execution of the algorithm, $M$ remains an independent set in $\mathcal{M}$.
	        \end{enumerate}
	    \end{lemma}
	    
	    \begin{proof}
	        To prove (i), suppose $e'$ does not exist. If such an element does not exist, then it means that we have the inclusion $C_e \backslash \{e\} \subseteq \text{span}(M \backslash \{e\})$. In addition, $C_e$ is a circuit so $e \in \text{span}(C_e \backslash \{e\})$. As a result, we should have $e \in \text{span}(M \backslash \{e\})$ (see Proposition~\ref{pro:matroid}(iii)), in contradiction with the fact that $M$ is independent in $\mathcal{M}$. 
	        
	        For the point (ii), observe that, at the beginning, $M$ is an independent set because of Lemma~\ref{lem:maximum_weight_base}. Then, if $M$ is an independent set, then so is $M\backslash\{e\}$. Moreover, if $e$ is replaced by an element $e' \in C_e \backslash (\text{span}(M \backslash \{e\}) \cup \{e\})$, the set $(M \backslash \{e\}) \cup \{e'\}$ is still an independent set, as $e'$ is not in the span of $M \backslash \{e\}$.
	    \end{proof}
	    
	    \begin{lemma}
	        Algorithm~\ref{greedy_part_matroid_submod} terminates and outputs a feasible $b$-matching.
	    \end{lemma}
	    
	    \begin{proof}
	        The algorithm terminates because an element is either not not replaced or replaced by an element that arrived earlier in $S$. By Lemma~\ref{lem:e-well-defined}(ii), through the execution of the algorithm, $M$ remains an independent set in the matroid. In addition, when there is no domination relation between elements in $M$, then $M$ must be a $b$-matching (as there is at most one element per queue) in the hypergraph.
	    \end{proof}
	    
	    Let us define the set $D$ of elements that were removed so far from $M$ during the execution of Algorithm~\ref{greedy_part_matroid_submod} (because they were dominated in the hypergraph).
	    
	    \begin{lemma}
	        At any time during the execution of Algorithm~\ref{greedy_part_matroid_submod}, the elements in $D$ are dominated by some elements in $M$.
	    \end{lemma}
	    
	    \begin{proof}
	        When an element $e$ is added to $D$, it means that at that point it is the latest dominated element in $M$, and that it is dominated by some other element $e_d$ that arrived later. As elements are always replaced by other elements that arrived earlier, and because all the subsequent dominated elements considered by the algorithm will be edges that arrived earlier than $e$, this edge $e_d$ will not be removed from $M$ until the end of the execution the algorithm.
	    \end{proof}
	    
	    \begin{lemma}
	        At any time during the execution of Algorithm~\ref{greedy_part_matroid_submod}, it holds that $w(M) \geq w_{\mathcal{M}}(M) + g(D)$.
	    \end{lemma}
	    
	    \begin{proof}
	        We have
	        \begin{align*}
	            w(M) &= w_{\mathcal{M}}(M) + \sum_{e \in M} \sum_{v \in e} g(\{\text{$e'$ dominated by $e$ in $Q_{v, q_v(e)}$}\})\\
	            &\geq w_{\mathcal{M}}(M) + g(D),
	        \end{align*}
	        because, by the previous lemma, $D$ is a subset of the elements that are actually dominated in the hypergraph (and an element can be dominated multiple times in the real computation of $w(M)$).
	    \end{proof}
	    
	    \begin{lemma}
	        At any time during the execution of Algorithm~\ref{greedy_part_matroid_submod}, $w_{\mathcal{M}}(M) + g(D) \geq g(S)$.
	    \end{lemma}
	    
	    \begin{proof}
	        We prove by induction on the number of times the while loop 
	        is executed. At the beginning, $M = Top(Q_{\mathcal{M}})$ so $w(M) \geq w_{\mathcal{M}}(Q_{\mathcal{M}}) = g(S)$, and $D = \emptyset$. For the induction step,  suppose that we have to remove $e \in M$ from $M$ because it is dominated by $e_d \in M$.
	        If $C_e = \emptyset$ it means that $w_{\mathcal{M}}(e) = g(e)$, 
	        then $w_{\mathcal{M}}((M \backslash \{e\})) + g(D \cup \{e\})
	        = w_{\mathcal{M}}(M) + g(D) \geq g(S)$. If $C_e \neq \emptyset$, 
	        then the replacing element $e'$
	        guarantees that $(M \backslash \{e\}) \cup \{e'\}\in \mathcal{I}$ and $w_{\mathcal{M}}(e') \geq w_{\mathcal{M}}^*(e)$ so the new value $w_{\mathcal{M}}((M \backslash \{e\}) \cup \{e'\}) + g(D \cup \{e\}) = w_{\mathcal{M}}(M) - g(e) - w_{\mathcal{M}}^*(e) + w_{\mathcal{M}}(e') + g(D) + g(e) \geq w_{\mathcal{M}}(M) + g(D) \geq g(S)$ also respects the inequality. 
	    \end{proof}
	    
	    \begin{lemma}\label{lem:greedy_matroid}
	        Algorithm~\ref{greedy_part_matroid_submod} returns a feasible $b$-matching $M$ such that $w(M) \geq g(S)$. 
	    \end{lemma}
	    
	    \begin{proof}
	        This follows easily from the previous lemmas.
	    \end{proof}
	    
	    Therefore combining the previous results we get a $(1 + \varepsilon)(k + 1)$ approximation streaming algorithm for the matroid-constrained maximum weight $b$-matching problem: 
	    \begin{theorem}
	        Algorithm~\ref{streaming_part_matroid_submod} combined with Algorithm~\ref{greedy_part_matroid_submod} provides a $(1 + \varepsilon)(k + 1)$ approximation for the maximum weight $b$-matching problem under a matroid constraint.
	    \end{theorem}
	    
	    \begin{remark}
	        When the hypergraph is $k$-partite, our algorithm builds the set $S$ exactly as the algorithm by Garg et al.~\cite{GJS2021} does. As a result, it proves the conjecture they formulated in the case where we have $k$ partition matroids and one general matroid: we can extract from $S$ a $k+1$ approximation in polynomial time, if $\varepsilon$ is set to $0$.
	    \end{remark}
		
	\subsection{Analysis for Monotone Submodular Function Maximization}
	    In this section, $p = 1$ and $\varepsilon > 0$. The next two lemmas relate the total gain $g(S)$ with 
		$f(S\,|\,\emptyset)$ and $f(M^{opt} \,|\, S)$.
		
		\begin{lemma} \label{lemma_gSp_fSp}
		    It holds that $g(S) \geq \frac{\varepsilon}{1 + \varepsilon}f(S\,|\,\emptyset)$.
		\end{lemma}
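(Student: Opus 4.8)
The plan is to mimic the proof of Lemma~\ref{lemma_g_f_ineg} almost verbatim, the only new ingredient being the matroid queue: in Algorithm~\ref{streaming_part_matroid_submod} the acceptance test on Line~18 uses the threshold $\alpha\bigl(\sum_{u\in e} w_u^*(e) + \gamma\cdot w_{\mathcal{M}}^*(e)\bigr)$, whereas the gain defined on Line~22 subtracts only $\sum_{u\in e} w_u^*(e) + w_{\mathcal{M}}^*(e)$ (i.e.\ the matroid term appears with coefficient $1$, not $\gamma$). Since $\gamma>1$ and $w_{\mathcal{M}}^*(e)\ge 0$, this discrepancy is harmless — it only makes the threshold larger than what we subtract.

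Concretely, I would first fix an element $e=e_t\in S$. Its insertion into $S$ means the test on Line~18 passed, so, recalling $\alpha = 1+\varepsilon$,
\[
f\bigl(e\,|\,S^{(t-1)}\bigr) > (1+\varepsilon)\Bigl(\textstyle\sum_{u\in e} w_u^*(e) + \gamma\cdot w_{\mathcal{M}}^*(e)\Bigr) \ge (1+\varepsilon)\Bigl(\textstyle\sum_{u\in e} w_u^*(e) + w_{\mathcal{M}}^*(e)\Bigr).
\]
Using the definition $g(e) = f(e\,|\,S^{(t-1)}) - \sum_{u\in e} w_u^*(e) - w_{\mathcal{M}}^*(e)$ from Line~22, this gives
\[
g(e) \;=\; f\bigl(e\,|\,S^{(t-1)}\bigr) - \Bigl(\textstyle\sum_{u\in e} w_u^*(e) + w_{\mathcal{M}}^*(e)\Bigr) \;\ge\; f\bigl(e\,|\,S^{(t-1)}\bigr)\Bigl(1 - \tfrac{1}{1+\varepsilon}\Bigr) \;=\; \tfrac{\varepsilon}{1+\varepsilon}\, f\bigl(e\,|\,S^{(t-1)}\bigr).
\]
Summing over all $e=e_t\in S$ in arrival order and telescoping the marginals (exactly as in Lemma~\ref{lemma_g_f_ineg}) yields $g(S)=\sum_{e\in S} g(e) \ge \tfrac{\varepsilon}{1+\varepsilon}\sum_{e=e_t\in S} f(e\,|\,S^{(t-1)}) = \tfrac{\varepsilon}{1+\varepsilon}\, f(S\,|\,\emptyset)$.

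There is essentially no obstacle here; the one point worth a sentence of care is that the statement concerns the set $S$ assembled during the streaming loop, \emph{before} the erasure step of Lines~26--27 that replaces it by $S_f = Top(Q_{\mathcal{M}})$, so the telescoping over the insertion order of $S$ is exactly valid. (Relating $g(S_f)$ back to $g(S)$ — which is where the choice of $\gamma>1$ really matters — is the job of a separate lemma, presumably Lemma~\ref{lemma_gS_gSp}, and is not needed here.)
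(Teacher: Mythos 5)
Your proof is correct and matches the paper's intent: the paper simply says ``Same proof as for Lemma~\ref{lemma_g_f_ineg},'' and your write-up is exactly that argument, with the one extra (and correctly handled) observation that the acceptance threshold carries a factor $\gamma>1$ on $w_{\mathcal{M}}^*(e)$ while the gain subtracts it with coefficient $1$, which only helps since $w_{\mathcal{M}}^*(e)\ge 0$. Your remark that the statement concerns $S$ rather than $S_f$ is also accurate and consistent with the paper, which defers that comparison to Lemma~\ref{lemma_gS_gSp}.
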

		
		\begin{proof}
		    Same proof as for Lemma~\ref{lemma_g_f_ineg}.
		\end{proof}

		\begin{lemma} \label{lem:another}
		    It holds that $(1 + \varepsilon)(k + 1)g(S) \geq f(M^{opt} \,|\, S)$.
		\end{lemma}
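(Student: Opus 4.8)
The plan is to follow the proof of Lemma~\ref{lemma_2g_geq_f_MS} almost verbatim, adding two ingredients: a factor $k$ accounting for the $k$-uniformity of the hypergraph, and an extra term $\gamma\, g(S)$ coming from the matroid queue $Q_{\mathcal{M}}$. Since $p=1$ in this section, an edge $e = e_t$ is absent from $S$ exactly when it fails the test on Line~14, i.e.\ when $f(e\,|\,S^{(t-1)}) \le (1+\varepsilon)\bigl(\sum_{u\in e} w_u^*(e) + \gamma\, w_{\mathcal{M}}^*(e)\bigr)$; and for such a discarded $e$ our convention gives $w_u^*(e) = w_u(e)$ for all $u\in e$ and $w_{\mathcal{M}}^*(e) = w_{\mathcal{M}}(e)$. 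First I would chain subadditivity of marginals ($f(M^{opt}\,|\,S) \le \sum_{e\in M^{opt}\setminus S} f(e\,|\,S)$), monotonicity of the chain $S^{(t-1)}\subseteq S$ with submodularity ($f(e\,|\,S) \le f(e\,|\,S^{(t-1)})$), and the failed test, to reach
\[
f(M^{opt}\,|\,S)\ \le\ (1+\varepsilon)\sum_{e\in M^{opt}\setminus S}\Bigl(\sum_{u\in e} w_u(e)\ +\ \gamma\, w_{\mathcal{M}}(e)\Bigr).
\]

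Next I would bound the two sums separately. For the vertex sum, swapping the order of summation and invoking Lemma~\ref{lemma_w_T_w_opt} (whose proof uses only the dynamics of the queues $Q_v$, which are unchanged here) with $M' = M^{opt}\setminus S$ gives $\sum_{e\in M^{opt}\setminus S}\sum_{u\in e} w_u(e) = \sum_{u\in V} w_u\bigl((M^{opt}\setminus S)\cap\delta(u)\bigr) \le \sum_{u\in V} w_u(Q_u)$, and then the ($k$-uniform) analogue of Proposition~\ref{lemma_g_w}(i), namely $w_u(Q_u) = g(S\cap\delta(u))$, yields $\sum_{u\in V} w_u(Q_u) = \sum_{u\in V} g(S\cap\delta(u)) = k\, g(S)$, the last step because each edge lies in exactly $k$ vertices. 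For the matroid sum, since $M^{opt}$ is independent in $\mathcal{M}$ and all reduced weights $w_{\mathcal{M}}(\cdot)$ are nonnegative, $\sum_{e\in M^{opt}\setminus S} w_{\mathcal{M}}(e) \le \sum_{e\in M^{opt}} w_{\mathcal{M}}(e) = w_{\mathcal{M}}(M^{opt}) \le w_{\mathcal{M}}(Q_{\mathcal{M}})$ by Corollary~\ref{cor:maximum_weight_base}; and a telescoping induction identical to the one behind Proposition~\ref{lemma_g_w}(i) gives $w_{\mathcal{M}}(Q_{\mathcal{M}}) = \sum_{q=1}^{r_{\mathcal{M}}} w_{\mathcal{M}}\bigl(Q_{\mathcal{M},q}.top()\bigr) = g(S)$, using the identity $w_{\mathcal{M}}(e) = w_{\mathcal{M}}^*(e) + g(e) = w_{\mathcal{M}}(r_{\mathcal{M}}(e)) + g(e)$ — valid because $q_{\mathcal{M}}(e)$ is chosen on Lines~10 and 13 precisely so that $w_{\mathcal{M}}\bigl(Q_{\mathcal{M},q_{\mathcal{M}}(e)}.top()\bigr) = w_{\mathcal{M}}^*(e)$ — together with the fact that every element of $S$ is pushed into exactly one queue of $Q_{\mathcal{M}}$ and is never popped during the stream. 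Substituting both bounds into the displayed inequality gives $f(M^{opt}\,|\,S) \le (1+\varepsilon)(k+\gamma)\, g(S)$.

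I expect the delicate point to be the two matroid facts $w_{\mathcal{M}}(M^{opt}) \le w_{\mathcal{M}}(Q_{\mathcal{M}})$ and $w_{\mathcal{M}}(Q_{\mathcal{M}}) = g(S)$: unlike the vertex queues, $Q_{\mathcal{M}}$ evolves by circuit exchanges rather than a plain stack discipline, so one must first be confident that $Top(Q_{\mathcal{M}})$ is at all times a maximum-weight base for the reduced weights $w_{\mathcal{M}}$ (this is Lemma~\ref{lem:maximum_weight_base}, via Proposition~\ref{pro:matroid}(i)), which is exactly what licenses Corollary~\ref{cor:maximum_weight_base}, and then check that the pointer $r_{\mathcal{M}}(e)$ set on Line~24 points to exactly the queue element realizing $w_{\mathcal{M}}^*(e)$, which is what makes the telescoping through a matroid queue correct. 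Everything else is a mechanical transcription of the proof of Lemma~\ref{lemma_2g_geq_f_MS}.
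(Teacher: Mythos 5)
Your proof is correct and follows essentially the same route as the paper: submodularity plus the failed acceptance test, the vertex part bounded by $k\,g(S)$ exactly as in Lemma~\ref{lemma_2g_geq_f_MS}, and the matroid part bounded by $\gamma\,g(S)$ via Corollary~\ref{cor:maximum_weight_base} together with the telescoping identity $w_{\mathcal{M}}(Q_{\mathcal{M}})=g(S)$. You merely spell out the details that the paper compresses into citations of its earlier lemmas.
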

		
		\begin{proof}
		    By the same argument as in the proof of Lemma~\ref{lemma_2g_geq_f_MS}, we have 
		    \[\sum_{e \in M^{opt} \backslash S} (1 + \varepsilon) \sum_{u \in e} w_u^*(e) \leq (1 + \varepsilon)k \cdot g(S).\]
		    Moreover, Corollary~\ref{cor:maximum_weight_base} shows that $g(S) = w_{\mathcal{M}}(Q_{\mathcal{M}}) \geq w_{\mathcal{M}}(M^{opt})$ and we know that
		    \[w_{\mathcal{M}}(M^{opt}) \geq \sum_{e \in M^{opt}\backslash S} w_{\mathcal{M}}(e) = \sum_{e \in M^{opt}\backslash S} w_{\mathcal{M}}^*(e).\]
		    As a result, we obtain
		    \begin{align*}
		        f(M^{opt} \,|\, S) &\leq \sum_{e \in M^{opt}\backslash S} f(e \,|\, S) 
		        \leq \sum_{e = e_t \in M^{opt}\backslash S} f(e \,|\, S^{(t-1)})\\
		        &\leq (1+\varepsilon)\left(\sum_{e \in M^{opt}\backslash S} \sum_{u \in e} w_u^*(e) +   w_{\mathcal{M}}^*(e)\right)\\
		        &\leq (1 + \varepsilon)(k + 1)g(S).
		    \end{align*}
		\end{proof}

		\begin{lemma} It holds that 
		    $\left((1+ \varepsilon) (k + 1) + 1 + \frac{1}{\varepsilon}\right)g(S) \geq f(M^{opt} \,|\, \emptyset).$
		\end{lemma}
		
		\begin{proof}
		    By Lemmas~\ref{lemma_gSp_fSp} and~\ref{lem:another}, we get $\left((1+ \varepsilon) (k + 1) + 1 + \frac{1}{\varepsilon}\right)g(S) \geq f(M^{opt} \,|\, S) + f(S \,|\, \emptyset) = f(M^{opt} \cup S \,|\, \emptyset) \geq f(M^{opt} \,|\, \emptyset)$ because $f$ is monotone.
		\end{proof}
		
		\begin{lemma}
		    With $S$ as input, Algorithm~\ref{greedy_part_matroid_submod} returns a feasible  $b$-matching $M$ satisfying $f(M) \geq g(S) + f(\emptyset)$.
		\end{lemma}
		
		\begin{proof}
		    Proof similar to the one for Lemma~\ref{lemma_greedy_submod} (using Lemma~\ref{lem:greedy_matroid} instead of Lemma~\ref{lemma_greedy_weight}).
		\end{proof}
		
		As a result, we get the following theorem (the same way we obtained Theorem~\ref{theo_submod_mono}):
		\begin{theorem}
		\label{theo_submo_mono_matroid}
		    For non-negative monotone submodular functions, Algorithm~\ref{streaming_part_matroid_submod} with $p = 1$ combined with Algorithm~\ref{greedy_part_matroid_submod} provides a feasible $b$-matching such that
		    \[\left((1+ \varepsilon) (k + 1) + 1 + \frac{1}{\varepsilon}\right)f(M) \geq f(M^{opt}).\]
	    	By setting $\varepsilon = \frac{1}{\sqrt{k + 1}}$, we attain the approximation ratio of $k + 2\sqrt{k+1} + 2$ for all $k$. 
		\end{theorem}
		
    \subsection{Analysis for Non-Monotone Submodular Function Maximization}
    
        In this section, we assume $\frac{1}{1 + (k + 1)(1 + \varepsilon)} \leq p \leq \frac{1}{k + 1}$. The following lemma is the counterpart of Lemma~\ref{lemma_eg_ef}, 
        whose proof again uses the technique of conditioning (in a more 
        general form). 
        
        \begin{lemma}
            It holds that $\left((1 + \varepsilon)(k + 1) + \frac{1 + \varepsilon}{\varepsilon}\right)\mathbb{E}[g(S)] \geq \mathbb{E}[f(S \cup M^{opt} \,|\, \emptyset)]$.
        \end{lemma}
        
        \begin{proof}
        
            By Lemma~\ref{lemma_gSp_fSp}, for any realization on randomness, we have  $\frac{1 + \varepsilon}{\varepsilon}g(S) \geq f(S\,|\,\emptyset)$, so the inequality also holds in expectation.
            
            We next show that, for any $e \in M^{opt}$ we have 
            \begin{equation}\mathbb{E}[f(e \,|\, S)] \leq (1 + \varepsilon)\mathbb{E}\left[\sum_{u \in e}w_u(e) + w_{\mathcal{M}}(e)\right]. 
            \label{equ:thirdOne}
            \end{equation}
            
            Let $e \in M^{opt}$. Conditioning on $A_e = [f(e \,|\, S^{(t-1)}) \leq (1 + \varepsilon)(\sum_{u \in e}w_u^*(e) + w_{\mathcal{M}}^*(e))]$ (\emph{i.e.} $e$ cannot be part of $S$), we have 
            \begin{align*}
				\mathbb{E}[f(e \,|\, S)\,|\,A_e] &\leq \mathbb{E}[f(e \,|\, S^{(t-1)})\,|\,A_e]\\
				&\leq \mathbb{E}\left[(1 + \varepsilon)(\sum_{u \in e}w_u^*(e) +  w_{\mathcal{M}}^*(e))\,|\,A_e\right]\\
				&= (1 + \varepsilon)\mathbb{E}\left[\sum_{u \in e}w_u(e) + w_{\mathcal{M}}(e)\,|\,A_e\right].
			\end{align*}
			
			For the condition $\overline{A_e}$, recall that it means 
			that with probability $p$, the edge is added into $S$ 
			and with probability $1-p$, it is not. Thereby we obtain
			\begin{align*}
			    &\mathbb{E}\left[\sum_{u \in e}w_u(e) + w_{\mathcal{M}}(e)\,|\,\overline{A_e}\right]\\
				=\,&p \cdot \mathbb{E}\left[(k + 1) f(e \,|\, S^{(t-1)}) - k \left(\sum_{u \in e}w_u^*(e) + w_{\mathcal{M}}^*(e)\right) \,|\,\overline{A_e}\right]\\
				&\,+ (1 - p)\cdot\mathbb{E}\left[\sum_{u \in e}w_u^*(e) +  w_{\mathcal{M}}^*(e)\,|\,\overline{A_e}\right]\\
				=\,& (k + 1) \cdot p \cdot \mathbb{E}\left[f(e \,|\, S^{(t-1)})\,|\,\overline{A_e}\right]\\
				&\,+ (1 - (k + 1) \cdot p)\cdot\mathbb{E}\left[\sum_{u \in e}w_u^*(e) + w_{\mathcal{M}}^*(e)\,|\,\overline{A_e}\right]\\
				\geq\,&(k + 1) \cdot p \cdot \mathbb{E}\left[f(e \,|\, S^{(t-1)})\,|\,\overline{A_e}\right],
			\end{align*}
			where in the last inequality we use the fact that $p \leq \frac{1}{k + 1}$.
			
			Now as $p \geq \frac{1}{1 + (k + 1)(1 + \varepsilon)}$, we have 
            \begin{align*}
                (1 + \varepsilon)\mathbb{E}\left[\sum_{u \in e}w_u(e) +  w_{\mathcal{M}}(e)\,|\,\overline{A_e}\right]
                &\geq (1 + \varepsilon)(k + 1) \cdot p \cdot \mathbb{E}\left[f(e \,|\, S^{(t-1)})\,|\,\overline{A_e}\right]\\
                &\geq(1 - p) \cdot \mathbb{E}\left[f(e \,|\, S^{(t-1)})\,|\,\overline{A_e}\right]\\
                &\geq\mathbb{E}\left[f(e \,|\, S)\,|\,\overline{A_e}\right],  
            \end{align*}
            and we have established~(\ref{equ:thirdOne}). 
            
            Similar to the proof of Lemma \ref{lemma_eg_ef} we derive
            \begin{align*}
            	\mathbb{E}\left[f(M^{opt} \,|\, S)\right] &\leq \sum_{e \in M^{opt}} \mathbb{E}\left[f(e \,|\, S)\right]\\
                &\leq (1 + \varepsilon) \sum_{e \in M^{opt}} \mathbb{E}\left[\sum_{u \in e} w_u(e)\right] + (1 + \varepsilon) \mathbb{E}[w_{\mathcal{M}}(M^{opt})]\\	
                &\leq k(1 + \varepsilon)\mathbb{E}[g(S)] + (1 + \varepsilon) \mathbb{E}[w_{\mathcal{M}}(M^{opt})].
            \end{align*}
            By Lemma~\ref{cor:maximum_weight_base} we know that, for any realization of randomness, we have $w_{\mathcal{M}}(M^{opt}) \leq w_{\mathcal{M}}(S_f) = g(S)$. Thus we get $\mathbb{E}\left[f(M^{opt} \,|\, S)\right] \leq (k + 1)(1 + \varepsilon)\mathbb{E}[g(S)]$. 
            
            Now the bound on $\mathbb{E}\left[f(M^{opt} \,|\, S)\right]$ 
            and the bound on $\mathbb{E}\left[f(S \,|\, \emptyset)\right]$ (argued in the beginning of the proof) give us the lemma. 
        \end{proof}
        
        Finally, using Lemma~\ref{lemma_ineg_h}, we obtain the following theorem.
        \begin{theorem}
        \label{theo_submo_nonmono_matroid}
            For non-negative submodular functions, Algorithm~\ref{streaming_part_matroid_submod} executed with $p = \frac{1}{1 + (k + 1)(1 + \varepsilon)}$ combined with Algorithm~\ref{greedy_part} provides a $b$-matching $M$ independent in $\mathcal{M}$ such that:
            \[\frac{1 + (k + 1)(1 + \varepsilon)}{(k + 1)(1 + \varepsilon)}\left((1+ \varepsilon) (k + 1) + 1 + \frac{1}{\varepsilon}\right)\mathbb{E}[f(M)] \geq f(M^{opt}).\]
            This ratio is optimized when $\varepsilon = \frac{\sqrt{k + 2}}{k + 1}$, providing a $k +2\sqrt{k+2} + 3$ approximation.
        \end{theorem}
        
\section{Matchoid-constrained Maximum Submodular $b$-Matching}
\label{sec:matchoid}
    In this section we consider an even more general case of a $b$-matching on a $k$-uniform hypergraph on which we impose a $k'$-matchoid constraint $\mathcal{M} = (\mathcal{M}_i=(E_i,\mathcal{I}_i))_{i=1}^s$. A matching $M \subseteq E$ is feasible only if it respects the capacities of the vertices and $E_i \cap M$ is an independent set in $\mathcal{M}_i$ for all $i$. 
		
	\subsection{Description of the Algorithm}
	
	    For the streaming phase, our algorithm, formally described in Algorithm~\ref{streaming_part_matchoid_submod}, is an adaptation of Algorithm~\ref{streaming_part_submod}. 
	    We let $\alpha = 1 + \varepsilon$ and $\gamma > 1$. Here $S$ 
    	as before denotes the set of elements stored so far. Additionally, 
    	we maintain a set $S_f \subseteq S$, an independent set in 
    	$\mathcal{M}$. The set $S_f$ is, in a sense, equivalent 
    	to the set of top elements $Top(Q_{\mathcal{M}})$ in the previous 
    	section when $\mathcal{M}$ was a simple matroid. 
    	Note that here we do not use any special data structures 
    	associated with the matchoid $\mathcal{M}$. 
    	For a newly-arrived element $e$, its reduced weight 
    	in the matchoid $\mathcal{M}$ is determined by the 
    	set of circuits contained in $S_f \cup \{e\}$ (see Lines~8-11). If the new element $e$ is to be stored 
    	(see Lines 8-11, 17 and 23), it replaces 
    	the lightest elements of the circuits in $S_f \cup \{e\}$.\footnote{We note that how the elements in $S_f$ 
    	are updated here are based on the same idea as in 
    	~\cite{DBLP:journals/mp/ChakrabartiK15,ChekuriGQ15} 
    	when constraint is only a $p$-matchoid without the hypergraph.}
    	In the end, the set $S_f$ (instead of all stored elements $S$) is given to the simple greedy 
    	Algorithm~\ref{greedy_part} as in Sections~\ref{sec:weight} and~\ref{sec:submodular} to produce a feasible $b$-matching.

	    Here we give some intuition. We retain only the elements in $S_f$ (as input to the greedy algorithm) because they are independent in every matroid $\mathcal{M}_i$ (hence any subset of them chosen by the greedy algorithm), releasing us from the worry that the output is not independent in $\mathcal{M}$. 
	    We use $\gamma>1$ to decide the relative importance of the matchoid $\mathcal{M}$ (see Line 13). 
	    A larger $\gamma$ guarantees that the gains of new edges grows quickly when $S_f$ is updated. 
	    By doing this, $S_f$, by itself, contributes to a significant fraction of all gains in $g(S)$ (see Lemma~\ref{lemma_gS_gSp-matchoid}). However, an overly large $\gamma$ causes us to throw away too many edges, thus hurting the final approximation ratio. To optimize, we need to choose $\gamma$ carefully.
	
	    Regarding the memory consumption of the algorithm, it is easy to see that the number of variables used would be a $O(\log_{1 + \varepsilon}(W/\varepsilon) \cdot |M_{max}|)$, where $M_{max}$ denotes the maximum cardinality $b$-matching in the graph and $W$ denotes the maximum quotient $\frac{f(e \,|\, Y)}{f(e' \,|\, X)}$, for $X \subseteq Y \subseteq E$, $e, e' \in E$, $f(e' \,|\, X) > 0$.
		
		\begin{algorithm}[h]
		\caption{Streaming phase for Matchoid-constrained Maximum Submodular $b$-Matching}\label{streaming_part_matchoid_submod}
		\begin{algorithmic}[1]
		\State $S \gets \emptyset$
		\State $S_f \gets \emptyset$
		\State $\forall v \in V : Q_v \gets (Q_{v,1} = \emptyset, \cdots, Q_{v, b_v} = \emptyset)$ 
		\For{$e = e_t,\, 1 \leq t \leq |E|$ an edge from the stream}
			\For{$u \in e$}
				\State $w_u^*(e) \gets \min \{w_u(Q_{u, q}.top()) : 1 \leq q \leq b_u\}$
				\State $q_u(e) \gets q \text{ such that $w_u(Q_{u, q}.top()) = w_u^*(e)$}$
			\EndFor
			
			\State $C_e \gets \emptyset$
			\For{$i$ such that $e \in E_i$}
    			\If{$(S_f \cap E_i)\cup \{e\}$ contains a circuit $X$ in $\mathcal{M}_i$}
    			  \State $C_e \gets C_e \cup \{\text{arg}\min_{e' \in X\backslash \{e\}}w_{\mathcal{M}}(e')\}$
    			\EndIf
    		\EndFor
    		\State $w_{\mathcal{M}}^*(e) \gets w_{\mathcal{M}}(C_e)$
		
			\If {$f(e \,|\, S) > \alpha(\sum_{u \in e} w_u^*(e) + \gamma \cdot w_{\mathcal{M}}^*(e))$}
			    \State $\pi \gets \text{a random variable equal to $1$ with probability $p$ and $0$ otherwise}$
				\If{$\pi = 0$} \textbf{continue} \Comment skip edge $e$ with probability $1 - p$
				\EndIf
				\State $g(e) \gets f(e \,|\, S) - \sum_{u \in e} w_u^*(e) - w_{\mathcal{M}}^*(e)$
				\State $S \gets S \cup \{e\}$
				\For{$u \in e$}
					\State $w_u(e) \gets w_u^*(e) + g(e)$
					\State $r_u(e) \gets Q_{u, q_u(e)}.top()$
					\State $Q_{u, q_u(e)}.push(e)$
				\EndFor
				\State $w_{\mathcal{M}}(e) \gets w_{\mathcal{M}}^*(e) + g(e)$
				\State $S_f \gets (S_f \cup \{e\}) \backslash C_e$
			\EndIf
		\EndFor
		\State update the values of $r_v$ for $v \in V$ as necessary so that only the elements in $S_f$ are considered 
		\end{algorithmic}
		\end{algorithm}
		
	\subsection{Analysis for Monotone Submodular Function Maximization}
	    In this section, $p = 1$. For each discarded element
	    $e \in E \backslash S$, similarly as before, we set $w_{\mathcal{M}}(e) = w_{\mathcal{M}}^*(e)$. Moreover, we define an auxiliary reduced weight $w_{\mathcal{M}}'$ such that $w_{\mathcal{M}}'(e) = w_{\mathcal{M}}(e)$ if $e \in S$ and $w_{\mathcal{M}}'(e) = \gamma \cdot (1 + \varepsilon) \cdot w_{\mathcal{M}}(e)$ otherwise.
	    
	    The way $S_f$ is built can be seen as an execution of the algorithm of Chekuri et al.~\cite{ChekuriGQ15} where elements of $E$ are given weights $w_{\mathcal{M}}'$. Using their results we know that:
	    \begin{proposition}
	    \label{pro:matchoid}
	    Let $M' \subseteq E$ satisfying $M' \cap E_i \in \mathcal{I}_i$ for all $i$. Then we have
	    \[\frac{(\gamma \cdot (1+\varepsilon))^2}{\gamma \cdot (1+\varepsilon)-1} k' \cdot w_{\mathcal{M}}(S_f) = \frac{(\gamma \cdot (1+\varepsilon))^2}{\gamma \cdot (1+\varepsilon)-1} k' \cdot w_{\mathcal{M}}'(S_f) \geq w_{\mathcal{M}}'(M')\]
	    and (this inequality is from Lemma 11 of the arXiv version of~\cite{ChekuriGQ15})
	    \[\left(\frac{(\gamma \cdot (1+\varepsilon))^2}{\gamma \cdot (1+\varepsilon)-1} k' - \frac{\gamma \cdot (1+\varepsilon)}{\gamma \cdot (1+\varepsilon)-1}\right) \cdot w_{\mathcal{M}}(S_f) \geq \gamma \cdot (1 + \varepsilon) \cdot w_{\mathcal{M}}(M'\backslash S).\]
	    \end{proposition}
		
		The next two lemmas relate the total gain $g(S)$ with
		$f(S\,|\,\emptyset)$ and $f(M^{opt} \,|\, S)$.
		
		\begin{lemma} \label{lemma_gSp_fSp-matchoid}
		    It holds that
		    $g(S) \geq \frac{\varepsilon}{1 + \varepsilon}f(S\,|\,\emptyset)$.
		\end{lemma}
		
		\begin{proof}
		    Same proof as for Lemma~\ref{lemma_g_f_ineg}.
		\end{proof}

		\begin{lemma} \label{lem:another-matchoid}
		    It holds that \[\left((1 + \varepsilon)k + \frac{(\gamma \cdot (1+\varepsilon))^2}{\gamma \cdot (1+\varepsilon)-1} k' - \frac{\gamma \cdot (1+\varepsilon)}{\gamma \cdot (1+\varepsilon)-1}\right)g(S) \geq f(M^{opt} \,|\, S).\]
		\end{lemma}
		
		\begin{proof}
		    By the same argument as in the proof of Lemma~\ref{lemma_2g_geq_f_MS}, we have 
		    \[\sum_{e \in M^{opt} \backslash S} (1 + \varepsilon) \sum_{u \in e} w_u^*(e) \leq (1 + \varepsilon)k \cdot g(S).\]
		    Moreover, we know that $g(S) = w_{\mathcal{M}}(S_f)$ and by Proposition~\ref{pro:matchoid}:
		    \[\left(\frac{(\gamma \cdot (1+\varepsilon))^2}{\gamma \cdot (1+\varepsilon)-1} k' - \frac{\gamma \cdot (1+\varepsilon)}{\gamma \cdot (1+\varepsilon)-1}\right) w_{\mathcal{M}}(S_f) \geq \gamma \cdot (1 + \varepsilon) \cdot w_{\mathcal{M}}^*(M^{opt} \backslash S).\]
		    As a result, we obtain
		    \begin{align*}
		        f(M^{opt} \,|\, S) &\leq \sum_{e \in M^{opt}\backslash S} f(e \,|\, S) 
		        \leq \sum_{e = e_t \in M^{opt}\backslash S} f(e \,|\, S^{(t-1)})\\
		        &\leq (1+\varepsilon)\left(\sum_{e \in M^{opt}\backslash S} \sum_{u \in e} w_u^*(e) +  \gamma \cdot w_{\mathcal{M}}^*(e)\right)\\
		        &\leq \left((1 + \varepsilon)k + \frac{(\gamma \cdot (1+\varepsilon))^2}{\gamma \cdot (1+\varepsilon)-1} k' - \frac{\gamma \cdot (1+\varepsilon)}{\gamma \cdot (1+\varepsilon)-1}\right)g(S).
		    \end{align*}
		\end{proof}
		
		The following lemma states that $S_f$ retains a reasonably 
		large fraction of the gains compared to $S$. For this lemma, we use the notion of \emph{erasing tree}. The erasing tree $T_e$ of an element $e \in S$ is defined recursively: it is made of the elements $e'$ of $C_e$ that were evicted from $S_f$ when $e$ was inserted and of the elements of the erasing trees $T_{e'}$ associated with these evicted elements.
		
		\begin{lemma} \label{lemma_gS_gSp-matchoid}
			It holds that $\left(1 + \frac{1}{\gamma \cdot (1 + \varepsilon) - 1}\right)g(S_f) \geq g(S)$.
		\end{lemma}
		
		\begin{proof}
			We have, for all element $e \in S_f$,
			\[g(e) \geq (\gamma \cdot (1 + \varepsilon) - 1) w_{\mathcal{M}}^*(e) = (\gamma \cdot (1 + \varepsilon) - 1)\sum_{e' \in T_e}g(e').\]
			where $T_e$ denotes the elements that are in the erasing tree of $e$, then
			\[g(S) = \sum_{e \in S_f} \left(g(e) + \sum_{e' \in T_e}g(e')\right) \leq \sum_{e \in S_f} \left(1 + \frac{1}{\gamma \cdot (1 + \varepsilon) - 1}\right)g(e),\]
			and the proof follows.
		\end{proof}
		
		\begin{lemma} It holds that
		    \begin{multline*}
		        \left(1 + \frac{1}{\gamma \cdot (1 + \varepsilon) - 1}\right) \cdot \left((1 + \varepsilon)k + \frac{(\gamma \cdot (1+\varepsilon))^2}{\gamma \cdot (1+\varepsilon)-1} k' - \frac{\gamma \cdot (1+\varepsilon)}{\gamma \cdot (1+\varepsilon)-1} + \frac{1 + \varepsilon}{\varepsilon}\right)g(S_f) \\\geq f(M^{opt} \,|\, \emptyset).
		    \end{multline*}
		\end{lemma}
		
		\begin{proof}
		    By Lemmas~\ref{lemma_gSp_fSp-matchoid} and~\ref{lem:another-matchoid}, we have that 
		    \begin{multline*}
		        \left(1 + \frac{1}{\gamma \cdot (1 + \varepsilon) - 1}\right) \cdot\left((1 + \varepsilon)k + \frac{(\gamma \cdot (1+\varepsilon))^2}{\gamma \cdot (1+\varepsilon)-1} k' - \frac{\gamma \cdot (1+\varepsilon)}{\gamma \cdot (1+\varepsilon)-1} + 1 + \frac{1}{\varepsilon}\right)g(S_f) \\
		        \geq f(M^{opt} \,|\, S) + f(S \,|\, \emptyset) = f(M^{opt} \cup S \,|\, \emptyset) \geq f(M^{opt} \,|\, \emptyset)
		    \end{multline*}
		    because $f$ is monotone. Then we use Lemma~\ref{lemma_gS_gSp-matchoid}.
		\end{proof}
		
		\begin{lemma}
		With $S_f$ as input, Algorithm~\ref{greedy_part} returns a feasible  $b$-matching $M$ with $f(M) \geq g(S_f) + f(\emptyset)$.
		\end{lemma}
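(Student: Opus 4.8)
The plan is to follow the proofs of Lemmas~\ref{lemma_greedy_weight} and~\ref{lemma_greedy_submod} almost verbatim, inserting a single new ingredient to account for the matroid constraint. For feasibility, I would first argue exactly as in Lemma~\ref{lemma_greedy_weight} that Algorithm~\ref{greedy_part} picks at most one element from each vertex queue $Q_{v,q}$: an element is added to $M$ only if no element above it in one of its vertex queues has already been taken, and no element below it can be in $M$ either, since $S_f$ is scanned in reverse arrival order and, by the rewiring of the pointers $r_v$ (Line~27), the traversal in Algorithm~\ref{greedy_part} now visits precisely the elements of $S_f$ in each queue. Since $S_f\cap\delta(v)$ is the disjoint union of the $Q_{v,q}\cap S_f$, every vertex capacity is respected. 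The new point is independence: because $M\subseteq S_f=Top(Q_{\mathcal{M}})$ and $Top(Q_{\mathcal{M}})$ is an independent set by Lemma~\ref{lem:maximum_weight_base}, we get $M\in\mathcal{I}$. Hence $M$ is a feasible $b$-matching.

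For the value bound, write $M=\{e_{t_1},\dots,e_{t_{|M|}}\}$ with $t_1<\dots<t_{|M|}$. Since $M\subseteq S_f\subseteq S$, all elements of $M$ lie in $S$, so $\{e_{t_1},\dots,e_{t_{i-1}}\}\subseteq S^{(t_i-1)}$, and submodularity gives
\[
  f(M) = f(\emptyset) + \sum_{i=1}^{|M|} f(e_{t_i}\,|\,\{e_{t_1},\dots,e_{t_{i-1}}\}) \geq f(\emptyset) + \sum_{i=1}^{|M|} f(e_{t_i}\,|\,S^{(t_i-1)}).
\]
It then suffices to prove $\sum_{i} f(e_{t_i}\,|\,S^{(t_i-1)}) \geq g(S_f)$. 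By the same induction used for~(\ref{equ:charged}), for $e=e_t\in S$ and each $u\in e$ we have $w_u^*(e)=\sum_{e'\in Q_{u,q_u(e)}^{(t-1)}}g(e')$, so dropping the nonnegative term $w_{\mathcal{M}}^*(e)$,
\[
  f(e\,|\,S^{(t-1)}) = g(e) + \sum_{u\in e} w_u^*(e) + w_{\mathcal{M}}^*(e) \geq g(e) + \sum_{u\in e}\ \sum_{e'\in Q_{u,q_u(e)}^{(t-1)}} g(e').
\]

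Next I would reprove the Claim of Lemma~\ref{lemma_greedy_weight} for $e\in S_f$: if $e\notin M$, then $z_e=0$ when $e$ is processed, and since the $r_v$-chains run through exactly the elements of $S_f$, this can only happen because some element $e'\in S_f$ lying above $e$ in the vertex queue $Q_{u,q_u(e)}$ (for some $u\in e$) was already added to $M$; such an $e'$ arrived later than $e$, so $q_u(e')=q_u(e)$ and $e\in Q_{u,q_u(e')}^{(t'-1)}$, where $t'$ is the arrival time of $e'$. Therefore every $g(e)$ with $e\in S_f$ is counted on the right-hand side of the last display for at least one $e'\in M$ (as the term $g(e')$ itself when $e\in M$, and inside $w_u^*(e')$ otherwise); since all these gains are nonnegative, summing over $e'\in M$ yields $\sum_{i} f(e_{t_i}\,|\,S^{(t_i-1)}) \geq \sum_{e\in S_f} g(e) = g(S_f)$, and combining with the first display completes the proof.

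I expect the only delicate point to be the bookkeeping around the rewired pointers $r_v$: one must check that after Line~27 the traversal in Algorithm~\ref{greedy_part} indeed visits exactly the elements of $S_f$ in each queue, so that both ``at most one element per vertex queue'' and the charging Claim transfer to $S_f$ in place of $S$. Everything else is a direct transcription of the earlier arguments, with the matroid queue $Q_{\mathcal{M}}$ entering only through the independence of $S_f$ and through the harmless extra nonnegative term $w_{\mathcal{M}}^*(e)$, which is simply discarded.
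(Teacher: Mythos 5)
Your proposal is correct and follows the same route as the paper: capacities as in Lemma~\ref{lemma_greedy_weight}, independence from $M\subseteq S_f=Top(Q_{\mathcal{M}})\in\mathcal{I}$, and the value bound via the same charging argument as Lemma~\ref{lemma_greedy_submod}, noting that the marginal value $f(e_t\,|\,S^{(t-1)})$ may exceed the sum of the gains of the $S_f$-elements below $e_t$ (because of $w_{\mathcal{M}}^*(e)$ and evicted elements), which is harmless since only a lower bound is needed. You merely spell out the details (the rewired $r_v$ pointers and the explicit charging) that the paper leaves as "similar to the earlier proofs."
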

		
		\begin{proof}
		    As argued in Lemma~\ref{lemma_greedy_weight}, $M$ respects the capacities. Furthermore, as $S_f$ is by construction an independent set in $\mathcal{M}$ and $M \subseteq S_f \in \mathcal{I}_i$ for al $i$, we have $M \in \mathcal{I}_i$ for all $i$. So $M$ is a feasible $b$-matching.
		    Finally, using an analysis similar to the one in the proof of  Lemma~\ref{lemma_greedy_submod}, we have $f(M) \geq g(S_f) + f(\emptyset)$ (the only difference being that now the "weight" $f(e_t \,|\, S^{(t-1)})$ of an element can be higher than the sum of the gains of the elements below it in the queues, which is not an issue for the analysis).
		\end{proof}
		
		As a result, we get the following theorem (the same way we obtained Theorem~\ref{theo_submod_mono}):
		\begin{theorem}
		\label{theo_submo_mono_matchoid}
		    For non-negative monotone submodular functions, Algorithm~\ref{streaming_part_matchoid_submod} with $p = 1$ combined with Algorithm~\ref{greedy_part} provides a feasible $b$-matching such that
		    \begin{multline*}
		        \left(1 + \frac{1}{\gamma \cdot (1 + \varepsilon) - 1}\right) \cdot \left((1 + \varepsilon)k + \frac{(\gamma \cdot (1+\varepsilon))^2}{\gamma \cdot (1+\varepsilon)-1} k' - \frac{\gamma \cdot (1+\varepsilon)}{\gamma \cdot (1+\varepsilon)-1} + 1 + \frac{1}{\varepsilon}\right)f(M) \\\geq f(M^{opt}).
		    \end{multline*}
		By setting $\varepsilon = 1$ and $\gamma = 2$, we attain the
		approximation ratio of $\frac{8}{3}k + \frac{64}{9}k' + \frac{8}{9}$ for all $k$ and $k'$. 
		\end{theorem}
		
		It is possible to obtain better ratios for fixed $k$ and $k'$ by 
		a more careful choice of the parameters $\varepsilon$ and $\gamma$. 
		
    \subsection{Analysis for Non-Monotone Submodular Function Maximization}
    
        In this section, we assume $\frac{1}{2 + k (1 + \varepsilon)} \leq p \leq \frac{1}{k + 1}$. 
        The following lemma is the counterpart of Lemma~\ref{lemma_eg_ef}, 
        whose proof again uses the technique of conditioning (in a more 
        general form). 
        
        \begin{lemma}
            We have
            $\left((1 + \varepsilon)k + \frac{(\gamma \cdot (1 + \varepsilon))^2}{\gamma \cdot (1 + \varepsilon) - 1} k' + \frac{1 + \varepsilon}{\varepsilon}\right)\mathbb{E}[g(S)] \geq \mathbb{E}[f(S \cup M^{opt} \,|\, \emptyset)]$.
        \end{lemma}
        
        \begin{proof}
        
        By Lemma~\ref{lemma_gSp_fSp-matchoid}, for any realization on randomness, we have  $\frac{1 + \varepsilon}{\varepsilon}g(S) \geq f(S\,|\,\emptyset)$, so the inequality also holds in expectation.
            
            We next show that, for any $e \in M^{opt}$ we have 
            \begin{equation}\mathbb{E}[f(e \,|\, S)] \leq (1 + \varepsilon)\mathbb{E}\left[\sum_{u \in e}w_u(e) + \frac{w_{\mathcal{M}}'(e)}{1 + \varepsilon}\right]. 
            \label{equ:thirdOne-matchoid}
            \end{equation}
            
            Let $e \in M^{opt}$. Conditioning on the event $A_e = [f(e \,|\, S^{(t-1)}) \leq (1 + \varepsilon)(\sum_{u \in e}w_u^*(e) + \gamma \cdot w_{\mathcal{M}}^*(e))]$ (\emph{i.e.} $e$ cannot be part of $S$), we have 
            \begin{align*}
				\mathbb{E}[f(e \,|\, S)\,|\,A_e] &\leq \mathbb{E}[f(e \,|\, S^{(t-1)})\,|\,A_e]\\
				&\leq \mathbb{E}\left[(1 + \varepsilon)(\sum_{u \in e}w_u^*(e) + \gamma \cdot w_{\mathcal{M}}^*(e))\,|\,A_e\right]\\
				&= (1 + \varepsilon)\mathbb{E}\left[\sum_{u \in e}w_u(e) + \frac{1}{1 + \varepsilon} \cdot w_{\mathcal{M}}'(e)\,|\,A_e\right].
			\end{align*}
			
			For the condition $\overline{A_e}$, recall that it means
			that with probability $p$, the edge is added into $S$ 
			and with probability $1-p$, it is not. So

			\begin{align*}
			    &\mathbb{E}\left[\sum_{u \in e}w_u(e) + \frac{1}{1 + \varepsilon} \cdot w_{\mathcal{M}}'(e)\,|\,\overline{A_e}\right]\\
				=\,&p \cdot \mathbb{E}\left[\left(k + \frac{1}{1 + \varepsilon}\right) f(e \,|\, S^{(t-1)}) - \left(k + \frac{1}{1 + \varepsilon} - 1\right) \sum_{u \in e}w_u^*(e) + k w_{\mathcal{M}}^*(e) \,|\,\overline{A_e}\right]\\
				&\quad+ (1 - p)\cdot\mathbb{E}\left[\sum_{u \in e}w_u^*(e) + \gamma \cdot w_{\mathcal{M}}^*(e)\,|\,\overline{A_e}\right]\\
				\geq\,&p \cdot \mathbb{E}\left[\left(k + \frac{1}{1 + \varepsilon}\right) f(e \,|\, S^{(t-1)}) - k\sum_{u \in e}w_u^*(e) - k \cdot \gamma \cdot w_{\mathcal{M}}^*(e) \,|\,\overline{A_e}\right]\\
				&\quad+ (1 - p)\cdot\mathbb{E}\left[\sum_{u \in e}w_u^*(e) + \gamma \cdot w_{\mathcal{M}}^*(e)\,|\,\overline{A_e}\right]\\
				=\,& \left(k + \frac{1}{1 + \varepsilon}\right) \cdot p \cdot \mathbb{E}\left[f(e \,|\, S^{(t-1)})\,|\,\overline{A_e}\right]\\
				&\quad+ (1 - (k + 1)\cdot p)\cdot\mathbb{E}\left[\sum_{u \in e}w_u^*(e) + \gamma \cdot w_{\mathcal{M}}^*(e)\,|\,\overline{A_e}\right]\\
				\geq\,&\left(k + \frac{1}{1 + \varepsilon}\right) \cdot p \cdot \mathbb{E}\left[f(e \,|\, S^{(t-1)})\,|\,\overline{A_e}\right],
			\end{align*}
			where in the second step we use the fact that $\gamma > 1$ and in the last inequality that $p \leq \frac{1}{k + 1}$.
			
			Now as $p \geq \frac{1}{2 + k (1 + \varepsilon)}$, we have 
            \begin{align*}
                (1 + \varepsilon)\mathbb{E}\left[\sum_{u \in e}w_u(e) + \frac{w_{\mathcal{M}}'(e)}{1 + \varepsilon} \,|\,\overline{A_e}\right]
                &\geq (1 + \varepsilon)\left(k + \frac{1}{1 + \varepsilon}\right)\\
                &\quad \cdot p \cdot \mathbb{E}\left[f(e \,|\, S^{(t-1)})\,|\,\overline{A_e}\right]\\
                &\geq(1 - p) \cdot \mathbb{E}\left[f(e \,|\, S^{(t-1)})\,|\,\overline{A_e}\right]\\
                &\geq\mathbb{E}\left[f(e \,|\, S)\,|\,\overline{A_e}\right],  
            \end{align*}
            and we have established~(\ref{equ:thirdOne-matchoid}). 
            
            Similar to the proof of Lemma \ref{lemma_eg_ef} we get
            \begin{align*}
            	\mathbb{E}\left[f(M^{opt} \,|\, S)\right] &\leq \sum_{e \in M^{opt}} \mathbb{E}\left[f(e \,|\, S)\right]\\
                &\leq (1 + \varepsilon) \sum_{e \in M^{opt}} \mathbb{E}\left[\sum_{u \in e} w_u(e)\right] +  \mathbb{E}[w_{\mathcal{M}}'(M^{opt})]\\	
                &\leq k(1 + \varepsilon)\mathbb{E}[g(S)] +  \mathbb{E}[w_{\mathcal{M}}'(M^{opt})].
            \end{align*}
            By Proposition~\ref{pro:matchoid} we know that for any realization of randomness, \[w_{\mathcal{M}}'(M^{opt}) \leq \frac{(\gamma \cdot (1+\varepsilon))^2}{\gamma \cdot (1+\varepsilon)-1} k' w_{\mathcal{M}}'(S_f) = \frac{(\gamma \cdot (1+\varepsilon))^2}{\gamma \cdot (1+\varepsilon)-1} k' \cdot g(S).\] Thus we get $\mathbb{E}\left[f(M^{opt} \,|\, S)\right] \leq \left((1 + \varepsilon)k + \frac{(\gamma \cdot (1+\varepsilon))^2}{\gamma \cdot (1+\varepsilon)-1} k'\right)\mathbb{E}[g(S)]$. 
            
            Now the bound on $\mathbb{E}\left[f(M^{opt} \,|\, S)\right]$ 
            and the bound on $\mathbb{E}\left[f(S \,|\, \emptyset)\right]$ (argued in the beginning of the proof) give us the lemma. 
        \end{proof}
        
        Finally, using Lemma~\ref{lemma_ineg_h} we obtain:
        \begin{theorem}
        \label{theo_submo_nonmono_matchoid}
            For non-negative submodular functions, Algorithm~\ref{streaming_part_matchoid_submod} with $p = \frac{1}{2 + k(1 + \varepsilon)}$ combined with Algorithm~\ref{greedy_part} provides a $b$-matching $M$ independent in $\mathcal{M}$ such that:
            \begin{multline*}
                \frac{2 + k(1 + \varepsilon)}{1 + k(1 + \varepsilon)} \cdot \left(1 + \frac{1}{\gamma \cdot (1 + \varepsilon) - 1}\right)
                \cdot \left((1+ \varepsilon) k + \frac{(\gamma \cdot (1+\varepsilon))^2}{\gamma \cdot (1+\varepsilon)-1} k' + 1 + \frac{1}{\varepsilon}\right)\mathbb{E}[f(M)] \\
                \geq f(M^{opt}).
            \end{multline*}
            Setting $\varepsilon = 1$ and $\gamma = 2$ we obtain the ratio of $\frac{2k + 2}{2k + 1} \cdot \left(\frac{8}{3}k + \frac{64}{9}k' + \frac{8}{3}\right)$ for all $k$ and $k'$. 
        \end{theorem}
        
        As in the last section, it is possible to obtain better ratios for fixed $k$  and $k'$ by 
		a more careful choice of the parameters $\varepsilon$ and $\gamma$.
        
\bibliography{names, library}	

\appendix
\section{Making Algorithm~\ref{streaming_part} Memory-Efficient}
    \label{appendix_memory}
		
	We explain how to guarantee the space requirement promised in Theorem~\ref{thm:first}. In this section, $w_{min}$ denotes the minimum non-zero value of the weight of an edge, and $w_{max}$ the maximum weight of an edge. 
	Moreover, we set $W = w_{max}/w_{min}$. We also define $M_{max}$ as a given maximum cardinality $b$-matching.
	
	Let $\alpha = (1 + \varepsilon) > 1$. In Algorithm~\ref{streaming_part_mem} we add an edge $e$ to $S$ only if $w(e) > \alpha \sum_{u \in e} w_u^*(e)$ (Line 7). For the moment we set $d = 0$ in our analysis, ignoring Lines~14-16 of the algorithm.
	
	\begin{algorithm}
	\caption{Streaming phase for weighted matching, memory-efficient}\label{streaming_part_mem}
	\begin{algorithmic}[1]
	\State $S \gets \emptyset$
	\State $\forall v \in V : Q_v \gets (Q_{v,1} = \emptyset, \cdots, Q_{v, b_v} = \emptyset)$
	\For{$e = e_t,\, 1 \leq t \leq |E|$ an edge from the stream}
	
		\For{$u \in e$}
			\State $w_u^*(e) \gets \min \{w_u(Q_{u, q}.top()) : 1 \leq q \leq b_u\}$
			\State $q_u(e) \gets q \text{ such that $w_u(Q_{u, q}.top()) = w_u^*(e)$}$
		\EndFor
	
		\If {$w(e) > \alpha \sum_{u \in e} w_u^*(e)$} \Comment stricter condition here
			\State $g(e) \gets w(e) - \sum_{u \in e} w_u^*(e)$
			\State $S \gets S \cup \{e\}$
			\For{$u \in e$}
				\State $w_u(e) \gets w_u^*(e) + g(e)$
				\State $r_u(e) \gets Q_{u, q_u(e)}.top()$
				\State $Q_{u, q_u(e)}.push(e)$
				\If{$d = 1 \text{ and } Q_{u, q_u(e)}.length() > \beta$} \Comment remove some small element
				    \State let $e'$ be the $\beta + 1$-th element from the top of $Q_{u, q_u(e)}$
					\State mark $e'$ as erasable, so that when it will no longer be on the top of any queue, it will be removed from $S$ and from all the queues it appears in
				\EndIf
		    \EndFor
		\EndIf
	\EndFor
	\end{algorithmic}
	\end{algorithm}
	
	\begin{lemma}
		The set $S$ obtained at the end of algorithm \ref{streaming_part_mem} when $d = 0$ guarantees that $2\alpha g(S) \geq w(M^{opt})$.
	\end{lemma}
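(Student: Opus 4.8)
The plan is to mirror the proof of Lemma~\ref{lemma_g_ineg} almost verbatim, tracking the single extra factor $\alpha$ that the stricter acceptance test on Line~7 introduces. The first observation is that, with $d=0$, Algorithm~\ref{streaming_part_mem} differs from Algorithm~\ref{streaming_part} only in replacing the threshold $\sum_{u\in e}w_u^*(e)$ by $\alpha\sum_{u\in e}w_u^*(e)$; the bookkeeping on the edges that \emph{are} stored (the gains $g(e)$, the reduced weights $w_u(e)=w_u^*(e)+g(e)$, the pointers $r_u$, the queue order) is untouched. Consequently Proposition~\ref{lemma_g_w} and Lemma~\ref{lemma_w_T_w_opt} still hold for the set $S$ produced here: a discarded edge $e_t$ still satisfies $w_u(e_t)=w_u^*(e_t)=\min_q w_u(Q_{u,q}^{(t-1)})$ and the queue weights are still monotonically nondecreasing, so the argument of Lemma~\ref{lemma_w_T_w_opt} goes through unchanged. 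In particular $\sum_{u\in V} w_u(Q_u)=\sum_{u\in V} g(S\cap\delta(u))=2g(S)$ and $w_u(Q_u)\ge w_u(M^{opt}\cap\delta(u))$ for every $u\in V$.

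The one genuinely new ingredient is a per-edge bound $w(e)\le\alpha\bigl(w_u(e)+w_v(e)\bigr)$ valid for every $e=\{u,v\}\in M^{opt}$. If $e\notin S$, this is immediate from the negation of the acceptance condition together with $w_u(e)=w_u^*(e)$, $w_v(e)=w_v^*(e)$, giving $w(e)\le\alpha\bigl(w_u^*(e)+w_v^*(e)\bigr)$. If $e\in S$, then $w_u(e)+w_v(e)=w_u^*(e)+w_v^*(e)+2g(e)=2w(e)-w_u^*(e)-w_v^*(e)$, and since being accepted forces $w_u^*(e)+w_v^*(e)<w(e)/\alpha\le w(e)$ (using $\alpha\ge1$), this is at least $w(e)\ge w(e)/\alpha$. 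So the bound holds in both cases.

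With these facts in hand the conclusion is a direct chain, exactly as in Lemma~\ref{lemma_g_ineg}:
\[
w(M^{opt})=\sum_{e=\{u,v\}\in M^{opt}}w(e)\le\alpha\sum_{e=\{u,v\}\in M^{opt}}\bigl(w_u(e)+w_v(e)\bigr)=\alpha\sum_{u\in V}w_u(M^{opt}\cap\delta(u))\le\alpha\sum_{u\in V}w_u(Q_u)=2\alpha\,g(S),
\]
where the middle equality regroups the double sum by endpoints, the penultimate inequality is Lemma~\ref{lemma_w_T_w_opt}, and the last equality is Proposition~\ref{lemma_g_w}(i) combined with the fact that every edge is incident to exactly two vertices.

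There is no real obstacle here: the only point requiring a moment's care is to confirm that the auxiliary results proved for Algorithm~\ref{streaming_part} remain valid when the acceptance threshold is scaled by $\alpha>1$, which holds because that change affects only \emph{which} edges enter $S$, not the structure recorded for the edges that do. Everything else is the same inequality chain as before with one extra $\alpha$ carried along.
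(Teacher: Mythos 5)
Your proof is correct, but it takes a different route from the paper's. You re-derive the inequality chain of Lemma~\ref{lemma_g_ineg} from scratch, carrying the factor $\alpha$ through a per-edge bound $w(e)\le\alpha\,(w_u(e)+w_v(e))$ (with the $\alpha$ only genuinely needed for rejected edges), after checking that Proposition~\ref{lemma_g_w} and Lemma~\ref{lemma_w_T_w_opt} are unaffected by the stricter acceptance test --- a point you justify correctly, since the threshold change only alters which edges enter $S$, not the bookkeeping for stored edges. The paper instead argues by reduction: it defines scaled weights $w_\alpha$ with $w_\alpha(e)=w(e)$ for $e\in S$ and $w_\alpha(e)=w(e)/\alpha$ otherwise, observes that Algorithm~\ref{streaming_part} run on $w_\alpha$ produces exactly the same set $S$ (and the same gains) as Algorithm~\ref{streaming_part_mem} run on $w$, invokes Lemma~\ref{lemma_g_ineg} as a black box to get $w_\alpha(M^{opt})\le 2g(S)$, and concludes from $w\le\alpha w_\alpha$. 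The paper's reduction is shorter and avoids re-examining the earlier proofs, but it leaves implicit exactly the invariance you make explicit (that the scaled-weight run coincides with the modified-threshold run); your direct derivation is more self-contained and makes transparent where the lone factor $\alpha$ enters, at the cost of repeating the argument of Lemma~\ref{lemma_g_ineg}. Both establish $2\alpha\,g(S)\ge w(M^{opt})$.
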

	
	\begin{proof}
		We proceed as in~\cite{GJS2021}. Let $w_{\alpha} : E \rightarrow \mathbb{R}$ such that $w_{\alpha}(e) = w(e)$ for $e \in S$ and $w_{\alpha}(e) = \frac{w(e)}{\alpha}$ for $e \in E \backslash S$. We can observe that with the weights $w_{\alpha}$, Algorithm~\ref{streaming_part} gives the same set $S$ as Algorithm~\ref{streaming_part_mem} with the weights $w$. We deduce that $w_{\alpha}(M^{opt}) \leq 2 g(S)$ and then, as $w \leq \alpha w_{\alpha}$, we get $2\alpha g(S) \geq w(M^{opt})$.
	\end{proof}
	
	Hence, using same arguments as in~\cite{GW2019,LW2021} we obtain the following: 
	\begin{theorem}
		Algorithm~\ref{streaming_part_mem} (with $d = 0$) combined with Algorithm~\ref{greedy_part} gives a $2 + \varepsilon$ approximation algorithm by using $O\left(\log_{1+\varepsilon} (W/\varepsilon) \cdot |M_{max}|\right)$ variables.
	\end{theorem}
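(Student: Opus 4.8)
The approximation ratio requires no new ideas beyond the two facts already at hand. The plan is: (1) invoke the preceding lemma, which for the set $S$ built by Algorithm~\ref{streaming_part_mem} with $d=0$ gives $2\alpha\,g(S)\ge w(M^{opt})$ with $\alpha=1+\varepsilon$; (2) argue that Lemma~\ref{lemma_greedy_weight} applies verbatim to this $S$, so that Algorithm~\ref{greedy_part} returns a feasible $b$-matching $M$ with $w(M)\ge g(S)$; (3) chain the two to get $w(M)\ge g(S)\ge \frac{1}{2\alpha}w(M^{opt})$. For step (2) I would observe that the only differences between Algorithm~\ref{streaming_part_mem} (with $d=0$) and Algorithm~\ref{streaming_part} are the stricter admission test on Line~7 and the rescaling of $g$; the assignments $g(e)=w(e)-\sum_{u\in e}w_u^*(e)$ and $w_u(e)=w_u^*(e)+g(e)$, the queue pushes, and hence identities~(\ref{equ:charged}), Proposition~\ref{lemma_g_w}, and the Claim inside the proof of Lemma~\ref{lemma_greedy_weight} (which only uses reverse-order processing and the marking rule of Algorithm~\ref{greedy_part}) are all untouched, so that proof goes through unchanged. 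Step (3) yields a $2(1+\varepsilon)$ approximation; running the whole procedure with the parameter scaled to $\varepsilon/2$ gives the stated ratio $2+\varepsilon$, and changes the space bound only by a constant factor since $\log_{1+\varepsilon/2}x=\Theta(\log_{1+\varepsilon}x)$.

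The substantive part is the space bound $O(\log_{1+\varepsilon}(W/\varepsilon)\cdot|M_{max}|)$, and this is where the stricter test on Line~7 is used. I would proceed in two steps. First, bound the length of a single queue: for two consecutive elements $e$ (lower) and $e'$ (upper) of $Q_{v,i}$, at the moment $e'$ is pushed $Q_{v,i}$ has the minimum top reduced weight among $v$'s queues and its top is $e$, so $w_v^*(e')=w_v(e)$ and $w_v(e')=w_v(e)+g(e')$; the test $w(e')>\alpha\sum_{u\in e'}w_u^*(e')$ gives both $g(e')>w(e')(1-1/\alpha)$ and $w(e')>\alpha\,w_v^*(e')=\alpha\,w_v(e)$, hence $g(e')>(\alpha-1)w_v(e)$ and $w_v(e')>\alpha\,w_v(e)$, so reduced weights grow by a factor $>1+\varepsilon$ per level. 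Since the bottom element of a queue has reduced weight $g(e)>w_{min}(1-1/\alpha)=\frac{\varepsilon}{1+\varepsilon}w_{min}$ (a stored edge has positive, hence $\ge w_{min}$, weight) and every element satisfies $w_v(e)=w(e)-w_{u'}^*(e)\le w_{max}$, each queue holds at most $O(\log_{1+\varepsilon}(W/\varepsilon))$ elements. Second, bound $|S|$: by the Claim in the proof of Lemma~\ref{lemma_greedy_weight}, every $e\in S$ either lies in $M$ or sits strictly below some later $e'\in M$ in a queue at a common endpoint; charging $e$ to that $e'$ (or to itself), each $e'\in M$ is charged only by itself and by the elements below it in its (at most two) queues, i.e.\ by at most $O(\log_{1+\varepsilon}(W/\varepsilon))$ elements, so $|S|\le|M|\cdot O(\log_{1+\varepsilon}(W/\varepsilon))\le|M_{max}|\cdot O(\log_{1+\varepsilon}(W/\varepsilon))$. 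Finally, each stored edge carries $O(1)$ numbers (weight, gain, the two reduced weights, the two queue indices, the two predecessor pointers) and the number of non-empty queues is at most $\sum_v|S\cap\delta(v)|=2|S|$, each with an $O(1)$ header, while empty queues can be created lazily; this gives the claimed variable count and matches the accounting of~\cite{GW2019,LW2021}.

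I expect the inequalities in steps (1)--(3) to be entirely routine; the one place that needs care is the second step of the space analysis, namely getting the count down to $|M_{max}|$ rather than the cruder $\frac{1}{2}\sum_v b_v$ that the queue-length bound alone would suggest. That improvement rests precisely on the Claim of Lemma~\ref{lemma_greedy_weight} --- that every stored edge is dominated in a shared queue by an edge of the output --- combined with the geometric growth along each queue forced by the strengthened admission rule on Line~7.
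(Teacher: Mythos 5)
Your proposal is correct, and the approximation-ratio part coincides with the paper's (the paper also combines $2\alpha\,g(S)\ge w(M^{opt})$ with the unchanged greedy guarantee $w(M)\ge g(S)$ and absorbs the resulting $2(1+\varepsilon)$ into $2+\varepsilon$ by rescaling; your observation that identity~(\ref{equ:charged}) and the Claim survive the stricter admission test is exactly why this works). Where you genuinely diverge is the second step of the space analysis. The paper never looks at the greedy output: after the same per-queue length bound $O(\log_{1+\varepsilon}(W/\varepsilon))$ (same geometric-growth argument, same extreme values $\frac{\varepsilon}{1+\varepsilon}w_{min}$ and $w_{max}$), it takes $U$ to be the set of vertices saturated by $M_{max}$, notes that $U$ is a vertex cover with $\sum_{v\in U}\min\{|\delta(v)|,b_v\}\le 2|M_{max}|$, and bounds $|S|\le\sum_{v\in U}\min\{|\delta(v)|,\,b_v\cdot(\log_{1+\varepsilon}(W/\varepsilon)+1)\}\le 2(\log_{1+\varepsilon}(W/\varepsilon)+1)|M_{max}|$. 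You instead charge every stored edge, via the Claim of Lemma~\ref{lemma_greedy_weight}, to an edge of the greedy output $M$ lying above it in a shared queue, and then use $|M|\le|M_{max}|$. Both give the stated bound; the paper's vertex-cover counting has the advantage of bounding the streaming-phase memory purely combinatorially, independently of any post-processing, whereas your charging argument is a nice reuse of the structural Claim but ties the space bound to the existence/analysis of the greedy phase. One small point in your favour: you explicitly note that empty queues must be created lazily so that the $d=0$ bound does not pick up a $\sum_v b_v$ term, something the paper leaves implicit.
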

	
	\begin{proof}
	    In a given queue, the minimum non-zero value that can be attained is $\frac{\varepsilon}{1 + \varepsilon}w_{min}$ and the maximum value that can be attained is $w_{max}$. As the value of the top element of the queue increases at least by a factor $1 + \varepsilon$ for each inserted element, a given queue contains at most $\log_{1 + \varepsilon}(W/\varepsilon) + 1$ edges. Hence, a vertex $v\in V$ contains at most $\min\{|\delta(v)|, b_v \cdot (\log_{1 + \varepsilon}(W/\varepsilon) + 1)\}$ elements of $S$ at the end of the algorithm.
	    
	    Then let $U \subseteq V$ be the set of \emph{saturated} vertices of $V$ by $M_{max}$, \emph{i.e.} the set of vertices $v \in V$ such that $|\delta(v) \cap M_{max}| = \min\{|\delta(v)|, b_v\}$. By construction, $U$ covers all the edges in $E \backslash M$ and $\sum_{v \in U}\min\{|\delta(v)|, b_v\} \leq 2 |M_{max}|$. As all edges of $S$ are either in $M_{max}$ or have at least one endpoint in $U$, we get:
	    \begin{align*}
	        |S| &\leq  |M_{max}| + \sum_{v \in U} |\delta(v) \cap S| \leq |M_{max}| + \sum_{v \in U} \min\{|\delta(v)|, b_v \cdot (\log_{1 + \varepsilon}(W/\varepsilon) + 1)\}\\
	        &\leq |M_{max}| + \sum_{v \in U} \min\{|\delta(v)|, b_v\} \cdot (\log_{1 + \varepsilon}(W/\varepsilon) + 1) \leq (2\log_{1 + \varepsilon}(W/\varepsilon) + 3) \cdot |M_{max}|,
	    \end{align*}
	    so the memory consumption of the algorithm is $O(\log_{1 + \varepsilon}(W/\varepsilon) \cdot |M_{max}|)$
	\end{proof}
	
	Modifying an idea from Ghaffari and Wajc~\cite{GW2019} we can further improve the memory consumption of the algorithm, especially if $W$ is not bounded. For that, set $\beta = 1 + \frac{2 \log (1/\varepsilon)}{\log (1 + \varepsilon)} = 1 + \log_{1 + \varepsilon}(1/\varepsilon^2)$ in Algorithm~\ref{streaming_part_mem} as the maximum size of a queue, and set $d = 1$ (so that we  now consider the whole code).
	
	Consider an endpoint $u$ of the newly-inserted edge $e$. If the queue $Q_{u,q_u(e)}$ becomes too long (more than $\beta$ elements), it means the gain $g(e')$ of the $\beta + 1$-th element from the top of the queue (we will call that element $e'$) is very small compared to $g(e)$, so we then can ``potentially'' remove $e'$ from $S$ and from the queues without hurting too much $g(S)$. In the code, we will mark this edge $e'$ as \emph{erasable}, so that when $e'$ will no longer be on top of any queue, it will be removed from $S$ and all the queues it appears in. To be able to do these eviction operations, the queues have to be implemented with doubly linked lists.

	If an edge $e = \{u, v\}$ is marked as erasable by Algorithm~\ref{streaming_part_mem} ($d = 1$) because an edge $e' = \{u, v'\}$ is added to $S$, then we say that $e'$ \emph{evicted} $e$ (and that 
	$e$ \emph{was evicted by} $e'$).

	\begin{lemma}
		If $e = \{u, v\}$ is evicted by $e' = \{u, v'\}$, then $g(e') \geq g(e)/\varepsilon$.
	\end{lemma}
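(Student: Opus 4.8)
The plan is to track how the reduced weights $w_u$ grow as one moves upward inside a single queue of vertex $u$, and then convert this growth into the claimed gap between $g(e')$ and $g(e)$.

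\emph{Step 1: a per-level growth estimate.} I would first record that whenever an edge $x$ is inserted into $S$ by Algorithm~\ref{streaming_part_mem}, the insertion test (Line~7, with $\alpha = 1+\varepsilon$) reads $w(x) > (1+\varepsilon)\sum_{z \in x} w_z^*(x)$, so
\[
g(x) \;=\; w(x) - \sum_{z \in x} w_z^*(x) \;>\; \varepsilon \sum_{z \in x} w_z^*(x) \;\geq\; \varepsilon\, w_u^*(x)
\]
for each endpoint $u$ of $x$. Since $w_u(x) = w_u^*(x) + g(x)$ and $w_u^*(x)$ is exactly the reduced weight of the element that becomes the immediate predecessor $r_u(x)$ of $x$ in $Q_{u,q_u(x)}$, this yields $w_u(x) > (1+\varepsilon)\, w_u(r_u(x))$ at the moment of insertion. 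I would then observe that this inequality is preserved by the erasure operations: deleting an intermediate element only replaces the predecessor pointer of the element above it by an element of even smaller reduced weight (reduced weights are monotone increasing along a queue), so at every point of the execution and in every queue, going one level up multiplies $w_u$ by more than $1+\varepsilon$.

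\emph{Step 2: chaining along the queue, and Step 3: back to gains.} Apply this at the precise instant $e'$ is pushed onto $Q_{u,q_u(e')}$: by definition $e'$ is then the top element and $e$ is the $(\beta+1)$-st element from the top, so there is a chain of exactly $\beta$ upward steps from $e$ to $e'$ inside that queue. Chaining the per-level estimate $\beta$ times gives $w_u(e') > (1+\varepsilon)^\beta\, w_u(e)$, and since $\beta = 1 + \log_{1+\varepsilon}(1/\varepsilon^2)$ we have $(1+\varepsilon)^\beta = (1+\varepsilon)/\varepsilon^2$, hence $w_u(e') > \tfrac{1+\varepsilon}{\varepsilon^2}\, w_u(e)$. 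Now translate the two ends of the chain back to gains: trivially $w_u(e) = w_u^*(e) + g(e) \ge g(e)$, while the computation of Step~1 also gives $w_u^*(e') < g(e')/\varepsilon$, so $w_u(e') = w_u^*(e') + g(e') < \tfrac{1+\varepsilon}{\varepsilon}\, g(e')$. Combining the three bounds,
\[
\tfrac{1+\varepsilon}{\varepsilon}\, g(e') \;>\; w_u(e') \;>\; \tfrac{1+\varepsilon}{\varepsilon^2}\, w_u(e) \;\geq\; \tfrac{1+\varepsilon}{\varepsilon^2}\, g(e),
\]
which simplifies to $g(e') > g(e)/\varepsilon$, as required.

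The part requiring the most care is the bookkeeping around erasures: one must be sure that at the instant $e'$ is inserted the chain from $e$ to $e'$ really has $\beta$ links (no already-erased element is secretly being counted, and the pointers $r_u$ describe the current queue), and that the per-level factor $1+\varepsilon$ genuinely survives all the pointer updates performed when erasable elements leave the queues. Everything else is the short calculation above, essentially identical in spirit to the estimate used for Lemma~\ref{lemma_g_f_ineg}.
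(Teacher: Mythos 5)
Your proof is correct and follows essentially the same route as the paper: both rest on the observation that each insertion into a queue multiplies the relevant reduced weight by at least $1+\varepsilon$, chained over the $\beta$ (or $\beta-1$) levels separating $e$ from $e'$, together with $g(x) > \varepsilon\, w_u^*(x)$ from the stricter acceptance test. The only cosmetic difference is that the paper tracks the top-of-queue value $w_u(Q_{u,q_u(e)})$ over time (which makes the erasure bookkeeping you worry about a non-issue, since removing non-top elements never changes that value), whereas you chain predecessor pointers element by element and then convert $w_u(e')$ back to $g(e')$ via $w_u(e') < \frac{1+\varepsilon}{\varepsilon}g(e')$; both yield the bound $g(e') \geq g(e)/\varepsilon$.
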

	
	\begin{proof}
		We have $g(e') \geq \varepsilon (w_u^*(e') + w_{v'}^*(e')) \geq \varepsilon w_u^*(e') \geq \varepsilon(1 + \varepsilon)^{\beta - 1}g(e) \geq g(e)/\varepsilon$ because after $e = e_t$ is added to $Q_{u, q_u(e)}$ we have $w_u(Q_{u, q_u(e)}^{(t)}) \geq g(e)$ and each time an element is added to $Q_{u, q_u(e)}$ the value $w_u(Q_{u, q_u(e)})$ is multiplied at least by $(1 + \varepsilon)$.
	\end{proof}
	
	\begin{theorem}
		For $\varepsilon \leq \frac{1}{4}$, Algorithm~\ref{streaming_part_mem} with $d = 1$ combined with Algorithm~\ref{greedy_part} gives a $ 2 + \varepsilon$ approximation algorithm by using $O\left(\log_{1 + \varepsilon} (1/\varepsilon)\cdot|M_{max}| + \sum_{v \in V} b_v\right)$ variables.
	\end{theorem}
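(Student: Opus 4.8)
The plan is to separately establish the approximation guarantee and the memory bound; the approximation part should be essentially a repetition of the earlier analysis, so the real work is in the space bound. First I would verify the approximation ratio: the key point is that, just as in the $d=0$ analysis, marking an edge $e'$ as erasable only removes it once it is no longer on top of any queue, so its reduced weight has already been ``absorbed'' by the element that now sits above it in every queue containing $e'$. Hence the identities in~\eqref{equ:charged} are preserved for the surviving set $S$, and Lemma~\ref{lemma_w_T_w_opt} (``the weight of a queue is monotonically increasing'') still holds, since an eviction never changes the top of a queue. Therefore the chain $2\alpha g(S) \ge w(M^{opt})$ and $w(M) \ge g(S)$ go through verbatim, yielding the $2+\varepsilon$ ratio after rescaling $\alpha = 1+\varepsilon$.

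For the memory bound I would argue in two layers. The first layer is the per-queue bound already used above: each queue, between consecutive evictions, has length at most $\beta = 1 + \log_{1+\varepsilon}(1/\varepsilon^2)$ by the choice of $\beta$, and the $\sum_{v\in V} b_v$ term accounts for one ``slot'' per queue (i.e.\ the top element of every queue of every vertex, of which there are $\sum_v b_v$ in total). The second and more delicate layer is to bound the number of \emph{erasable-but-not-yet-erased} elements --- these are precisely the elements of $S$ that are not currently on top of any of their queues but have not been physically removed. I would charge each such element to an element strictly above it in some queue: by the eviction lemma, if $e$ is evicted by $e'$ then $g(e') \ge g(e)/\varepsilon$, so along the at most $b_v$ queues of a vertex the gains of stacked erasable elements grow geometrically, and the total number of live-but-erasable elements incident to a fixed vertex $v$ is $O(b_v \cdot \log_{1+\varepsilon}(1/\varepsilon))$ (using that $\varepsilon \le \tfrac14$ makes $1/\varepsilon \ge (1+\varepsilon)$, so the factor-$1/\varepsilon$ growth dominates). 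Summing over the vertices of a vertex cover $U$ of $M_{max}$ as in the $d=0$ proof, and using $\sum_{v\in U}\min\{|\delta(v)|,b_v\} \le 2|M_{max}|$ together with $\sum_v b_v$ to cover the top-of-queue slots, gives $|S| = O(\log_{1+\varepsilon}(1/\varepsilon)\cdot|M_{max}| + \sum_{v\in V} b_v)$.

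The main obstacle I anticipate is the bookkeeping for the erasable elements: an element may be marked erasable from one endpoint while still sitting on top of a queue at its other endpoint, and it is only physically deleted once it has dropped from the top of \emph{every} queue it belongs to. So I need to be careful that the potential-function / charging argument correctly counts an erasable element against a later-arriving element in the queue where it is buried, without double-counting across the (up to $k=2$) queues, and that evictions triggered at different times compose correctly so that the geometric growth $g(e')\ge g(e)/\varepsilon$ chains up. Once that charging is set up cleanly, the remaining estimates are routine, and the constants come out as claimed with the condition $\varepsilon\le\frac14$ ensuring $\varepsilon(1+\varepsilon)^{\beta-1}\ge 1/\varepsilon$.
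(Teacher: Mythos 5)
There is a genuine gap in your approximation-ratio argument. You claim the $d=0$ chain ``$2\alpha g(S) \geq w(M^{opt})$ and $w(M)\geq g(S)$'' goes through verbatim, but with $d=1$ these two inequalities do not refer to the same set. The queue weights $w_u(Q_u)$ (and hence the bound $w(M^{opt})\leq \alpha\sum_u w_u(Q_u)$ via Lemma~\ref{lemma_w_T_w_opt}, which indeed still holds since evictions never change queue tops) equal the total gain of \emph{all edges ever inserted}, evicted ones included; call that set $S_0$. The greedy phase, on the other hand, only recovers $w(M)\geq g(S_1)$, where $S_1$ is the surviving set (for surviving edges the identities in~(\ref{equ:charged}) become inequalities in the right direction, as you note). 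So one must bound the gain lost to evictions, i.e.\ show $g(S_0)\leq (1+O(\varepsilon))\,g(S_1)$. This is exactly where the paper works: each surviving edge $e$ directly evicts at most $2$ edges of gain at most $\varepsilon g(e)$ (by the eviction lemma), those evicted at most $4$ more of gain at most $\varepsilon^2 g(e)$, and so on, giving a branching chain whose total gain is $\sum_{i\geq 1}(2\varepsilon)^i g(e)\leq 4\varepsilon g(e)$ precisely because $\varepsilon\leq\frac14$ makes $2\varepsilon\leq\frac12$; this yields $w(M^{opt})\leq 2(1+\varepsilon)(1+4\varepsilon)g(S_1)\leq 2(1+6\varepsilon)g(S_1)$ and the $2+\varepsilon$ ratio after rescaling. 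You invoke the eviction lemma only for the space bound and never perform this gain accounting; moreover you misattribute the hypothesis $\varepsilon\leq\frac14$ to ensuring $\varepsilon(1+\varepsilon)^{\beta-1}\geq 1/\varepsilon$, which holds for every $\varepsilon>0$ by the choice $\beta=1+\log_{1+\varepsilon}(1/\varepsilon^2)$; its real role is the convergence of the branching series above.

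On the memory bound your overall shape is right, but the ``delicate layer'' is both unnecessary and unjustified as stated: geometric growth of gains inside a queue cannot bound the \emph{number} of buried erasable elements without a bound on $W$, which is exactly what the eviction mechanism is meant to avoid. The correct (and simpler) accounting is structural: by the marking rule each queue holds at most $\beta=O(\log_{1+\varepsilon}(1/\varepsilon))$ unmarked elements, and every marked-but-not-yet-deleted element must still sit on top of some queue, so there are at most $\sum_{v\in V} b_v$ of them in total; summing the per-queue $\beta$ bound over the vertex cover given by the vertices saturated by $M_{max}$ (using $\sum_{v\in U}\min\{|\delta(v)|,b_v\}\leq 2|M_{max}|$) gives $|S|\leq \sum_{v\in V}b_v + 2\beta|M_{max}|$, as desired.
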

	
	\begin{proof}
		For an element $e$ that was not evicted from $S$ in Algorithm~\ref{streaming_part_mem}, denote by $\mathcal{E}_e$ the elements that were evicted by $e$ directly or indirectly (in a chain of evictions). This set $\mathcal{E}_e$ contains at most $2$ elements that were directly evicted when $e$ was inserted in $S$, and their associated gain is at most $\varepsilon g(e)$ for each, and at most $4$ elements indirectly evicted by $e$ when these $2$ evicted elements were inserted in $S$, and their associated gain is at most equal to $\varepsilon^2 g(e)$ for each, and so on. Then, as $\varepsilon \leq \frac{1}{4}$, 
		\[\sum_{e' \in \mathcal{E}_e} g(e') \leq \sum_{i = 1}^{\infty} (2 \varepsilon)^i g(e) \leq 2 \varepsilon g(e) \sum_{i = 0}^{\infty}(1/2)^i = 4 \varepsilon g(e)\]
		
		Therefore, if $S_0$ denotes the set $S$ obtained by Algorithm~\ref{streaming_part_mem} when $d = 0$ and $S_1$ denotes the set $S$ obtained by Algorithm~\ref{streaming_part_mem} when $d = 1$, we get:
		\[g(S_0) - g(S_1) \leq 4 \varepsilon g(S_1)\]
		because the elements inserted in $S$ are exactly the same for the two algorithms, the only difference being that some elements are missing in $S_1$ (but these elements were removed when they no longer had any influence on the values of $w^*$ and thereby no influence on the choice of the elements inserted in $S$ afterwards).
		We have then:
		\[w(M^{opt}) \leq 2(1 + \varepsilon) g(S_0) \leq 2(1 + \varepsilon)(1 + 4 \varepsilon) g(S_1) \leq 2(1 + 6\varepsilon) g(S_1)\]
		and Algorithm~\ref{greedy_part} will provide a $2(1 + 6 \varepsilon)$ approximation of the optimal $b$-matching. 
		In fact, the analysis of Algorithm~\ref{greedy_part} with $S_1$ as input is almost the same as in the proof of Lemma~\ref{lemma_greedy_weight}, the only difference being that now the weight of an element is no longer necessarily equal to the sum of the gains of elements below it but can be higher (which is not an issue).
		
		Regarding the memory consumption of the algorithm, one can notice that, using the same notation $U$ for the set of the elements saturated by $M_{max}$ as previously, and because there are only up to $\sum_{v \in V}b_v$ elements on top of the queues that cannot be deleted (and thus these edges are the ones making some queues in $U$ exceed the limit $\beta$), we obtain:
		\[|S| \leq |M_{max}| + \sum_{v \in V}b_v + \sum_{v \in U} \beta \cdot \min\{|\delta(v)|, b_v\} \leq \sum_{v \in V}b_v + (2\beta + 1)\cdot |M_{max}|,\]
		and therefore the algorithm uses $O\left(\sum_{v \in V}b_v + \log_{1 + \varepsilon}(1/\varepsilon)\cdot |M_{max}|\right)$ variables.
	\end{proof}
	
	\begin{remark}
	    Ideas presented here for the maximum weight $b$-matching can be easily extended to submodular function maximization, for hypergraphs under a matroid constraint or under matchoid constraints, namely for the algorithms presented in Sections~\ref{sec:submodular}, \ref{sec:matroid}, and~\ref{sec:matchoid}.
	\end{remark}
	
\section{Example of Different Behavior Compared to~\cite{LW2021}}
    \label{appendix_difference}

    Here is an example to show the difference of behavior between our algorithm and the one proposed in~\cite{LW2021}. Consider a set of four vertices $V = \{v_1, v_2, v_3, v_4\}$. We set $b_{v_1} = 2$ and $b_{v_i} = 1$ for $2 \leq i \leq 4$. Let $E = \{e_1, e_2, e_3\}$ with $e_1 = \{v_1, v_2\}$ and $w(e_1) = 2$, $e_2 = \{v_1, v_3\}$ and $w(e_2) = 7$, $e_3 = \{v_1, e_4\}$ and $w(e_3) = 4$. Using only one dual variable for each vertex, the algorithm of Levin and Wajc~\cite{LW2021} takes $e_1$ and $e_2$ but discards $e_3$ because the dual variable $\phi_{v_1}$ is equal to $4$ when $e_3$ is processed. On the other hand, our algorithm, when processing $e_3$, compares $w(e_3)$ with $w_{v_1}(e_1) = 2$. Therefore, $e_3$ is added into $S$ and our algorithm provides a $b$-matching of weight $11$ instead of $9$.

\end{document}